\newcommand{\argmax}{\mathop{\rm argmax}\limits}
\newcommand{\argmin}{\mathop{\rm argmin}\limits}
\def\cX{\mathcal X}
 \newenvironment{proofof}[1]{\vspace*{5mm} \par \noindent
{\it Proof of #1:\hspace{2mm}}}{\qed
}
\def\Label#1{\label{#1}\ [\ \text{#1}\ ]\ }
\def\Label{\label}
\begin{document}
\title{Iterative minimization algorithm on a mixture family}
\titlerunning{Iterative minimization algorithm on a mixture family}
\author{Masahito~Hayashi}

\institute{M. Hayashi \at
School of Data Science, The Chinese University of Hong Kong, Shenzhen, Longgang District, Shenzhen, 518172, China,
International Quantum Academy (SIQA), Futian District, Shenzhen 518048, China,
and
Graduate School of Mathematics, Nagoya University, Chikusa-ku, Nagoya 464-8602, Japan.
\\
              \email{e-mail: hmasahito@cuhk.edu.cn, hayashi@iqasz.cn}           
}

\date{Received: date / Accepted: date}

\maketitle

\begin{abstract}
Iterative minimization algorithms appear in various areas including machine learning, neural networks, and information theory.
The em algorithm is one of the famous iterative minimization algorithms in the area of machine learning, and the Arimoto-Blahut algorithm is a typical iterative algorithm in the area of information theory.
However, these two topics had been separately studied for a long time. In this paper, we generalize an algorithm that was recently proposed in the context of the Arimoto-Blahut algorithm.
Then, we show various convergence theorems, one of which covers the case when
each iterative step is done approximately.
Also, we apply this algorithm to the target problem of the em algorithm, and propose 
its improvement. In addition, we apply it to other various problems in information theory.
\end{abstract}

\keywords{minimization
\and 
em algorithm
\and 
mixture family
\and 
channel capacity
\and 
divergence
}

\section{Introduction}\Label{S1}
Optimization over distributions is an important topic in various areas.
For example, the minimum divergence between a mixture family and an exponential family 
has been studied by using the em algorithm
in the areas of machine learning and neural networks \cite{Amari1,Amari,Fujimoto,Allassonniere}.
The em algorithm is an iterative algorithm to calculate the above minimization
and it is rooted in the study of Boltzmann machines \cite{Bol}.
In particular, the paper \cite{Fujimoto} formulated 
the em algorithm under
the framework with Bregman divergence \cite{Amari-Nagaoka,Amari-Bregman}.
The topic of the em algorithm 
has been mainly studied in the community of machine learning, neural networks, and information geometry.
As another iterative algorithm, the Arimoto-Blahut algorithm is known as an algorithm to maximize the mutual information by changing the distribution on the input system \cite{Arimoto,Blahut}. 
This maximization is needed to calculate the channel capacity \cite{Shannon}.
This algorithm has been generalized to various settings 
including the rate distortion theory \cite{Blahut,Csiszar,Cheng,YSM}, the capacity of a wire-tap channel \cite{Yasui},
and their quantum extensions \cite{Nagaoka,Dupuis,Sutter,Li,RISB}.
In particular, the two papers \cite{YSM,RISB} made very useful generalizations to cover
various topics in information theory.
This topic has been mainly studied in the community of information theory.

However, only a limited number of studies have discussed the relation between the two topics, the em algorithm and the Arimoto-Blahut algorithm as follows.
The papers \cite{Tusnady,Sullivan} pointed out that 
the Arimoto-Blahut algorithm can be considered as an alternating algorithm
in a similar way to the EM and the em algorithms. 
Recently, the paper \cite{Shoji} pointed out that the maximization of the mutual information 
can be considered to be the maximization of 
the projected divergence to an exponential family by changing an element of 
a mixture family.
The paper \cite{reverse} generalized this maximization to the framework with 
Bregman divergence \cite{Amari-Nagaoka,Amari-Bregman} and applied this setting to 
various problems in information theory.
Also, the recent paper \cite{Bregman-em} applied the em algorithm to 
the rate-distortion theory, which is a key topic in information theory.

In this paper, we focus on a generalized problem setting proposed in \cite{RISB}, 
which is given as an optimization over the set of input quantum states.
As the difference from the former algorithm,
their algorithm \cite{RISB} has an acceleration parameter.
Changing this parameter, we can enhance the convergence speed under a certain condition.
To obtain wider applicability,
we extend their problem setting to the minimization over a general mixture family.
Although they discussed the convergence speed only when there is no local minimizer,
our analysis covers the convergence speed to a local minimizer
 even when there exist several local minimizers.
Further, since our setting covers a general mixture family as the set of input variables,
our method can be applied to the minimum divergence between a mixture family and an exponential family, which is the objective problem in the em algorithm.
That is, this paper presents a general algorithm including the em algorithm
as well as the Arimoto-Blahut algorithm.
This type of relation between the em algorithm and the Arimoto-Blahut algorithm
is different from the relation pointed by the papers \cite{Tusnady,Sullivan}.

There is a possibility that each iteration can be calculated only approximately.
To cover such an approximated case,
we evaluate the error of our algorithm with approximated iterations.
Since the em algorithm has local minimizers in general,
it is essential to cover the convergence to a local minimizer.
Since our algorithm has the acceleration parameter,
our application to the minimum divergence gives a generalization of the em algorithm.
Also, our algorithm can be applied to 
the maximization of the projected divergence to an exponential family by changing an element of a mixture family.

In addition, our algorithm has various applications that were not discussed in the preceding  
study \cite{RISB}.
In channel coding,
the decoding block error probability goes to zero exponentially 
under the proper random coding when the transmission rate is smaller than the capacity \cite{Gallager}.
Also, the probability of correct decoding goes to zero exponentially
when the transmission rate is greater than the capacity \cite{Arimoto2}.
These exponential rates are written with the optimization of the so-called Gallager function.
Recently, the paper \cite{H15} showed that the Gallager function 
can be written as the minimization of the R\'{e}nyi divergence.
Using this fact, we apply our method to these optimizations.
Further, we apply our algorithm to the capacity of a wiretap channel.
In addition, since our problem setting allows a general mixture family as the range of input,
we apply the channel capacity with cost constraint.
Also, 
we point out that the calculation of the commitment capacity is given as the 
minimization of the divergence between a mixture family and an exponential family.
Hence, we discuss this application as well.

The remaining part of this paper is organized as follows.
Section \ref{setup} formulates our minimization problem
for a general mixture family.
Then, we proposed several algorithms to solve the minimization problem.
We derive various convergence theorems including the case with 
approximated iterations.
The remaining sections apply our algorithm to various examples.
In these sections, examples of objective functions are discussed.
Section \ref{S4} applies our algorithm to various information theoretical problems.
Then, 
Section \ref{S5} applies our algorithm 
to the minimum divergence between a mixture family and an exponential family. 
Section \ref{S6} applies our algorithm to the commitment capacity.
Section \ref{S7} applies our algorithm to the maximization of the projected divergence to an exponential family by changing an element of a mixture family.
Section \ref{S7B} applies our algorithm to 
information bottleneck,
which is a powerful method for machine learning.
Appendices are devoted to the proofs of the theorems presented in Section \ref{setup}.

\section{General setting}\Label{setup}
\subsection{Algorithm with exact iteration}\Label{S2-1}
We consider a finite sample space ${\cal X}$
and focus on the set ${\cal P}({\cal X})$ of distributions whose support is ${\cal X}$.
Using $k$ linearly independent functions $f_1, \ldots, f_k$ on ${\cal X}$
and constants $a=(a_1, \ldots, a_k)$, 
we define the mixture family ${\cal M}_a$ as follows
\begin{align}
{\cal M}_a:= \{P \in {\cal P}({\cal X})| P[f_i]=a_i \hbox{ for } i=1, \ldots, k
\},\Label{MDP}
\end{align}
where $P[f]:= \sum_{x \in {\cal X}} P(x)f(x)$.
We add additional $l-k$ linearly independent functions $f_{k+1}, \ldots
f_l$ and $|{\cal X}|=l+1$
such that
the $l$ functions $f_{1}, \ldots, f_l$ are linearly independent.
Then, the distribution $P$ can be parameterized by 
the mixture parameter $\eta=(\eta_1, \ldots, \eta_l)$ as
$ \eta_i= P[f_i]$.
That is, the above distribution is denoted by $P_\eta$.
Then, we denote the $e$-projection of $P$ to ${\cal M}_a$
by 
$\Gamma^{(e)}_{{\cal M}_a}[P]$.
That is, $\Gamma^{(e)}_{{\cal M}_a}[P]$ is defined as follows \cite{Amari1,Amari}.
\begin{align}
\Gamma^{(e)}_{{\cal M}_a}[P]:=
\argmin_{ Q \in {\cal M}_a}
D(Q\|P), \Label{Mix}
\end{align}
where the Kullback-Leibler divergence 
$D(Q\|P)$ is defined as
\begin{align}
D(Q\|P):= \sum_{x \in {\cal X}}Q(x) (\log Q(x)-
\log P(x)).
\end{align}
Using the $e$-projection, we have the following equation
for an element of $Q \in {\cal M}_a$, which is often called
Pythagorean theorem.
\begin{align}
D(Q\|P)=
D(Q\|\Gamma^{(e)}_{{\cal M}_a}[P])
+ D(\Gamma^{(e)}_{{\cal M}_a}[P]\|P).\Label{NNP}
\end{align}

Given a continuous function $\Psi$ from ${\cal M}_a$ to 
the set of functions on ${\cal X}$, 
we consider the minimization
$\min_{P \in {\cal M}_a} {\cal G}(P)$;
\begin{align}
{\cal G}(P):= \sum_{x \in {\cal X}} P(x) \Psi[P](x).
\Label{ZNV}
\end{align}
This paper aims to find
\begin{align}
\overline{{\cal G}}(a):=\min_{P \in {\cal M}_a} {\cal G}(P), \quad
P_{*,a}:=\argmin_{P \in {\cal M}_a} {\cal G}(P).
\Label{NM6}
\end{align}

For this aim, we propose an iterative algorithm 
based on a positive real number $\gamma>0$.
Since the above formulation \eqref{ZNV} is very general, 
we can choose the function $\Psi$
dependently on our objective function.
That is, different choices of $\Psi$ lead to different 
objective functions. 

For a distribution $Q \in {\cal P}({\cal X})$,
we define the distribution ${\cal F}_3[Q] $ as
\begin{align}
{\cal F}_3[Q](x):= \frac{1}{\kappa[Q]}Q(x)\exp(  -\frac{1}{\gamma} \Psi[Q](x)),\Label{VU8}
\end{align}
where $\kappa[Q]$ is the normalization factor
$\sum_{x \in {\cal X}} Q(x)\exp(  -\frac{1}{\gamma} \Psi[Q](x))$.
Then, 
depending on $\gamma>0$,
we propose Algorithm \ref{AL1}.
When the calculation of $ \Psi[P]$ and 
the $e$-projection is feasible, Algorithm \ref{AL1} is feasible.

\begin{algorithm}
\caption{Minimization of ${\cal G}(P)$}
\Label{AL1}
\begin{algorithmic}
\STATE {As inputs, we prepare 
the function $\Psi$,
$l$ linearly independent functions 
$f_1, \ldots, f_{l}$,
constraints $a_1, \ldots,a_k $,
a positive number $\gamma>0$, and
the initial value 
$P^{(1)} \in \mathcal{M}_a$;} 
\REPEAT 
\STATE Calculate $P^{(t+1)}:=\Gamma^{(e)}_{{\cal M}_a}[{\cal F}_3[P^{(t)}]]
$;
\UNTIL{convergence. We denote the convergent by $P^{(\infty)}$.
The convergence of this algorithm is guaranteed by Theorem \ref{TTH1}.
} 
\STATE{Output $P^{(\infty)}$. }
\end{algorithmic}
\end{algorithm}

\if0
In fact, the condition (A2) is closely related to the convexity of ${\cal G}(P)$ as follows.
\begin{lemma}
When 
\begin{align}
\sum_{x \in {\cal X}} P(x) (\Psi[P](x)- \Psi[Q](x)) \ge 0
\Label{XMZ2}
\end{align}
holds for $P,Q \in {\cal M}_a$, 
the map $ P \mapsto {\cal G}(P)$ is convex.
\end{lemma}
\begin{proof}
\begin{align}
&\lambda {\cal G}(P)+(1-\lambda) {\cal G}(Q)
-\lambda {\cal G}(\lambda P+(1-\lambda)Q)\\
=&\lambda \sum_{x \in {\cal X}} P(x) (\Psi[P](x)- \Psi[\lambda P+(1-\lambda)Q](x))\\
&+(1-\lambda) \sum_{x \in {\cal X}} Q(x) (\Psi[Q](x)- \Psi[\lambda P+(1-\lambda)Q](x))\\
 \ge & 0.
\end{align}
\end{proof}
\fi

Indeed, Algorithm \ref{AL1} is characterized as the iterative minimization of 
the following two-variable function, i.e., the extended objective function;
\begin{align}
J_\gamma(P,Q):=\gamma D(P\|Q)+\sum_{x \in {\cal X}} P(x) \Psi[Q](x).
\Label{VUI}
\end{align}
To see this fact, 
we define
\begin{align}
{\cal F}_1[P]  := \argmin_{Q \in {\cal M}_a}  J _\gamma(P,Q) ,\quad
{\cal F}_2[Q]  := \argmin_{P \in {\cal M}_a}  J _\gamma(P,Q) .
\Label{VUI2}
\end{align}

Then, 
${\cal F}_2[Q]$ is calculated as follows.

\begin{lemma}\Label{L1}
Under the above definitions, for any positive value $\gamma >0$,
we have ${\cal F}_2[Q] =\Gamma^{(e)}_{{\cal M}_a}[{\cal F}_3[Q]] $, i.e., 
\begin{align}
\min_{P \in {\cal M}_a}  J_\gamma(P,Q)&=
J _\gamma(\Gamma^{(e)}_{{\cal M}_a}[{\cal F}_3[Q]],Q) \nonumber \\
&=
\gamma D(\Gamma^{(e)}_{{\cal M}_a}[{\cal F}_3[Q]]\|{\cal F}_3[Q])
- \gamma \log \kappa[Q]  ,\Label{XMY} \\
 J _\gamma(P,Q)
 &=\min_{P' \in {\cal M}_a}  J _\gamma(P',Q)
+\gamma D(P\| \Gamma^{(e)}_{{\cal M}_a}[{\cal F}_3[Q]]) \Label{XMY2UU}  \\
&=J _\gamma(\Gamma^{(e)}_{{\cal M}_a}[{\cal F}_3[Q]],Q)
+\gamma D(P\| \Gamma^{(e)}_{{\cal M}_a}[{\cal F}_3[Q]]).
\Label{XMY2} 
\end{align}
\end{lemma}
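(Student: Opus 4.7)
The plan is to show that $J_\gamma(\,\cdot\,,Q)$ is, up to a constant depending only on $Q$, simply a scaled Kullback-Leibler divergence from the variable $P$ to the tilted distribution ${\cal F}_3[Q]$. Once this is done, both the identification of the minimizer and the ``Pythagorean'' decomposition drop out immediately.

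First I would unfold the definition of ${\cal F}_3[Q]$, take logarithms, and substitute into $\gamma D(P\|{\cal F}_3[Q])$. Since $\log {\cal F}_3[Q](x) = \log Q(x) - \frac{1}{\gamma}\Psi[Q](x) - \log \kappa[Q]$, a direct computation gives
\begin{align}
\gamma D(P\|{\cal F}_3[Q])
= \gamma D(P\|Q) + \sum_{x \in {\cal X}} P(x)\Psi[Q](x) + \gamma \log \kappa[Q],
\end{align}
so that
\begin{align}
J_\gamma(P,Q) = \gamma D(P\|{\cal F}_3[Q]) - \gamma \log \kappa[Q].
\end{align}
The constant $-\gamma\log\kappa[Q]$ does not depend on $P$, so minimizing $J_\gamma(P,Q)$ over $P \in {\cal M}_a$ reduces to minimizing $D(P\|{\cal F}_3[Q])$ over ${\cal M}_a$, which by definition is achieved at $P = \Gamma^{(m)}_{{\cal M}_a}[{\cal F}_3[Q]]$. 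This establishes ${\cal F}_2[Q] = \Gamma^{(m)}_{{\cal M}_a}[{\cal F}_3[Q]]$ and the equality \eqref{XMY} after substituting the minimizer back.

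Next I would obtain the decomposition \eqref{XMY2UU}--\eqref{XMY2} by invoking the Pythagorean theorem \eqref{NNP} for the mixture family ${\cal M}_a$, applied to the distribution ${\cal F}_3[Q]$ (generally not in ${\cal M}_a$) and an arbitrary $P \in {\cal M}_a$:
\begin{align}
D(P\|{\cal F}_3[Q])
= D\bigl(P \,\big\|\, \Gamma^{(m)}_{{\cal M}_a}[{\cal F}_3[Q]]\bigr)
+ D\bigl(\Gamma^{(m)}_{{\cal M}_a}[{\cal F}_3[Q]] \,\big\|\, {\cal F}_3[Q]\bigr).
\end{align}
Multiplying by $\gamma$ and subtracting $\gamma\log\kappa[Q]$, the last two terms combine to give $J_\gamma(\Gamma^{(m)}_{{\cal M}_a}[{\cal F}_3[Q]],Q)$, which by the first part equals $\min_{P'\in{\cal M}_a} J_\gamma(P',Q)$, and the remaining term is $\gamma D(P\|\Gamma^{(m)}_{{\cal M}_a}[{\cal F}_3[Q]])$. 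This gives \eqref{XMY2UU} and \eqref{XMY2} simultaneously.

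There is no serious obstacle here: the only thing to be careful about is the bookkeeping of the normalization constant $\kappa[Q]$ when passing between $Q$ and ${\cal F}_3[Q]$, and the verification that the Pythagorean identity \eqref{NNP} is applicable, which only requires ${\cal M}_a$ to be a mixture family and $\Gamma^{(m)}_{{\cal M}_a}[{\cal F}_3[Q]]$ to be well-defined. Both hold by the setup, so the proof is essentially a two-line algebraic manipulation followed by one application of \eqref{NNP}.
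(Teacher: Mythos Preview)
Your proposal is correct and follows essentially the same route as the paper: rewrite $J_\gamma(P,Q)$ as $\gamma D(P\|{\cal F}_3[Q]) - \gamma\log\kappa[Q]$, read off the minimizer via the definition of the $m$-projection, and then apply the Pythagorean identity \eqref{NNP} to obtain the decomposition \eqref{XMY2UU}--\eqref{XMY2}. The only cosmetic difference is that the paper carries out the rewriting and the Pythagorean step in a single chain of equalities, whereas you separate them into two stages.
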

\begin{proof}
We have the following relations.
\begin{align}
&J _\gamma(P,Q)
=\gamma \sum_{x \in {\cal X}} P(x) (\log P(x)- \log Q(x) + \frac{1}{\gamma} \Psi[Q](x)) \nonumber\\
=&\gamma \sum_{x \in {\cal X}} P(x) (\log P(x)- \log {\cal F}_3[Q](x)- \log \kappa[Q]) \nonumber\\
=&\gamma D(P\| {\cal F}_3[Q])-\gamma \log \kappa[Q] \nonumber\\
=&\gamma D(P\| \Gamma^{(e)}_{{\cal M}_a}[{\cal F}_3[Q]])
+\gamma D(\Gamma^{(e)}_{{\cal M}_a}[{\cal F}_3[Q]]\|{\cal F}_3[Q])
- \gamma\log \kappa[Q] ,\Label{ASS4}
\end{align}
where the final equation follows from 
\eqref{NNP}.
Then, the minimum is given as \eqref{XMY}, and it is realized with 
$\Gamma^{(e)}_{{\cal M}_a}[{\cal F}_3[Q]]$.

Applying \eqref{XMY} to the final line of \eqref{ASS4},
we obtain \eqref{XMY2UU}.
Since the minimum in \eqref{XMY2UU} is realized when 
$P'=\Gamma^{(e)}_{{\cal M}_a}[{\cal F}_3[Q]]$, 
we obtain \eqref{XMY2}.
\end{proof}

We calculate ${\cal F}_1[Q]$.
For this aim, we define
\begin{align}
D_{\Psi}(P \|Q):= 
\sum_{x \in {\cal X}} P(x) (\Psi[P](x)- \Psi[ Q ](x)).\Label{TM9}
\end{align}

\begin{lemma}\Label{L2}
Assume that two distributions $P,Q \in {\cal M}_a$ satisfy the following condition,
\begin{align}
D_{\Psi}(P \|Q) 
\le \gamma D(P\|Q).
\Label{BK1+}
\end{align}
Then, we have ${\cal F}_1[P] =P$, i.e., 
\begin{align}
J _\gamma(P,Q)\ge J _\gamma(P,P) .
\end{align}
\end{lemma}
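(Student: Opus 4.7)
The statement reduces to a direct algebraic manipulation: the claimed inequality $J_\gamma(P,Q) \ge J_\gamma(P,P)$ is equivalent to a rewriting of the hypothesis \eqref{BK1+}. So my plan is simply to expand both sides of the inequality using the definition of $J_\gamma$ and cancel terms.

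Concretely, I would first compute $J_\gamma(P,P)$. Since $D(P\|P)=0$, we get
\begin{align}
J_\gamma(P,P) = \sum_{x \in {\cal X}} P(x) \Psi[P](x).
\end{align}
Next, using the definition of $J_\gamma(P,Q)$, I would form the difference
\begin{align}
J_\gamma(P,Q) - J_\gamma(P,P)
= \gamma D(P\|Q) + \sum_{x \in {\cal X}} P(x)\bigl(\Psi[Q](x) - \Psi[P](x)\bigr)
= \gamma D(P\|Q) - D_{\Psi}(P\|Q).
\end{align}
By the hypothesis \eqref{BK1+}, the right-hand side is nonnegative, which gives the claim. Since ${\cal F}_1[P]$ is the minimizer of $J_\gamma(\cdot,P)$ over ${\cal M}_a$, and the inequality says $P$ attains a value no larger than any $Q \in {\cal M}_a$ satisfying \eqref{BK1+}, this yields ${\cal F}_1[P] = P$ under the stated condition.

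There is no real obstacle here; the only subtle point is interpreting the conclusion ${\cal F}_1[P]=P$ correctly — the lemma really says that among the competitors $Q$ satisfying \eqref{BK1+}, the minimum of $J_\gamma(P,Q)$ (as a function of $Q$) is attained at $Q=P$. The argument itself is a one-line rewriting once $D(P\|P)=0$ is used.
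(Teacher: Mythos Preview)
Your proof is correct and matches the paper's proof essentially line for line: the paper also writes $J_\gamma(P,Q)-J_\gamma(P,P)=\gamma D(P\|Q)-\sum_{x}P(x)(\Psi[P](x)-\Psi[Q](x))\ge 0$ and concludes. Your remark about the interpretation of $\mathcal{F}_1[P]=P$ is apt; the lemma is really asserting the displayed inequality for the given pair $(P,Q)$, not a full $\argmin$ statement.
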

\begin{proof}
Eq. \eqref{BK1+} guarantees that
\begin{align}
&J _\gamma(P,Q)-J _\gamma(P,P) \nonumber\\
=&
\gamma D(P\|Q)-\sum_{x \in {\cal X}} P(x) 
(\Psi[P](x)- \Psi[Q](x))
\ge 0 \Label{XMY5}.
\end{align}
\end{proof}

\begin{remark}\Label{RR1}
The preceding study \cite{RISB} discussed the minimization of 
a function defined over the set of density matrices, i.e., the set of quantum states.
When the function is given as a function only of the diagonal part,
the function is given as a function of probability distribution composed of the diagonal part.
That is, the preceding study \cite{RISB} covers the case when 
the function is optimized over a set of probability distributions
as a special case.
The obtained result of this paper covers the case when 
the function is optimized over a mixture family.
That is, the preceding study \cite{RISB} does not consider the case with linear constraints.
In this sense, the obtained result of this paper 
generalizes the above special case of the result of \cite{RISB},
and Algorithm \ref{AL1} is a generalization of the algorithm given in \cite{RISB}.

Lemma 3.2 \cite{RISB} is composed of several statements.
The combination of Lemmas \ref{L1} and \ref{L2} 
is a generalization of the above special case of \cite[Lemma 3.2]{RISB}.
That is, the classical restriction of \cite[Lemma 3.2]{RISB} is equivalent to 
the combination of Lemmas \ref{L1} and \ref{L2} 
without linear constraints.
\end{remark}

Therefore, when all pairs $(P^{(t+1)},P^{(t)})$ satisfy \eqref{BK1+}, 
the relations 
\begin{align}
{\cal G}(P^{(t)})=J _\gamma(P^{(t)},P^{(t)})\ge
J _\gamma(P^{(t+1)},P^{(t)})
\ge J _\gamma(P^{(t+1)},P^{(t+1)})= {\cal G}(P^{(t+1)})
\Label{SAC}
\end{align}
hold under Algorithm \ref{AL1}. 
In addition, we have the following theorem.

\begin{theorem}\Label{TTH1}
When all pairs $(P^{(t+1)},P^{(t)})$ satisfy \eqref{BK1+},
i.e., the positive number $\gamma$ is sufficiently large,
Algorithm \ref{AL1}
converges to a local minimum.
\end{theorem}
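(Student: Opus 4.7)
The plan is to use the descent chain \eqref{SAC} to establish monotone convergence of the objective, extract a subsequential limit, identify it as a fixed point of the update ${\cal F}_2$, and then translate this fixed-point condition into local minimality of ${\cal G}$.

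First, under the standing hypothesis that every pair $(P^{(t+1)},P^{(t)})$ satisfies \eqref{BK1+}, \eqref{SAC} shows ${\cal G}(P^{(t)})$ is non-increasing; continuity of ${\cal G}$ on the compact set ${\cal M}_a$ yields a lower bound and hence convergence of objective values. To quantify the step size, I would apply \eqref{XMY2} with $P=Q=P^{(t)}$ to get $\gamma D(P^{(t)}\|P^{(t+1)}) = {\cal G}(P^{(t)}) - J_\gamma(P^{(t+1)},P^{(t)})$ and combine it with the bound $J_\gamma(P^{(t+1)},P^{(t)}) \ge {\cal G}(P^{(t+1)})$ from Lemma \ref{L2} to deduce $\gamma D(P^{(t)}\|P^{(t+1)}) \le {\cal G}(P^{(t)}) - {\cal G}(P^{(t+1)})$. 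Telescoping forces $D(P^{(t)}\|P^{(t+1)})\to 0$, hence $\|P^{(t+1)}-P^{(t)}\|_1\to 0$ by Pinsker's inequality. Compactness yields a convergent subsequence $P^{(t_k)}\to P^{(\infty)}$, with $P^{(t_k+1)}\to P^{(\infty)}$ as well; continuity of ${\cal F}_2=\Gamma^{(m)}_{{\cal M}_a}\circ{\cal F}_3$ (from continuity of $\Psi$ and of the $m$-projection) lets me pass to the limit in $P^{(t_k+1)}={\cal F}_2[P^{(t_k)}]$ to conclude $P^{(\infty)}={\cal F}_2[P^{(\infty)}]$, i.e.\ $P^{(\infty)}$ minimizes $J_\gamma(\cdot,P^{(\infty)})$ over ${\cal M}_a$.

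Second, to turn this fixed-point condition into local minimality of ${\cal G}$, I would extract the first-order optimality of $P^{(\infty)}$ in $\min_P J_\gamma(P,P^{(\infty)})$: computing $\nabla_P J_\gamma(P,P^{(\infty)})|_{P=P^{(\infty)}}=\gamma + \Psi[P^{(\infty)}]$ and projecting onto the tangent space of ${\cal M}_a$ forces $\Psi[P^{(\infty)}]$ to lie in the affine span of $\{1,f_1,\dots,f_k\}$, so that $\sum_x (P(x)-P^{(\infty)}(x))\Psi[P^{(\infty)}](x)=0$ for every $P\in{\cal M}_a$. The decomposition
\[
{\cal G}(P)-{\cal G}(P^{(\infty)}) = D_{\Psi}(P\|P^{(\infty)}) + \sum_x (P(x)-P^{(\infty)}(x))\Psi[P^{(\infty)}](x)
\]
then reduces local minimality to showing $D_{\Psi}(P\|P^{(\infty)})\ge 0$ for $P\in{\cal M}_a$ near $P^{(\infty)}$.

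The main obstacle is precisely this last sign control on $D_{\Psi}$: condition \eqref{BK1+} only provides the one-sided comparison $D_{\Psi}\le\gamma D$, so deducing $D_{\Psi}(\cdot\|P^{(\infty)})\ge 0$ locally requires a careful second-order expansion at the diagonal, exploiting smoothness of $\Psi$ in its argument together with the strict positivity of the Hessian of $D(\cdot\|P^{(\infty)})$ in tangent directions of ${\cal M}_a$. Once local minimality is established, full-sequence (not merely subsequential) convergence follows from $D(P^{(t)}\|P^{(t+1)})\to 0$ together with the closedness and isolatedness of the fixed-point set at the common limiting value of ${\cal G}$.
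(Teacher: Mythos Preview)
Your first paragraph reproduces exactly the paper's argument: apply \eqref{XMY2} with $P=Q=P^{(t)}$ to get $\gamma D(P^{(t)}\|P^{(t+1)}) = {\cal G}(P^{(t)}) - J_\gamma(P^{(t+1)},P^{(t)})$, then use Lemma~\ref{L2} to bound $J_\gamma(P^{(t+1)},P^{(t)})\ge {\cal G}(P^{(t+1)})$, yielding $\gamma D(P^{(t)}\|P^{(t+1)}) \le {\cal G}(P^{(t)})-{\cal G}(P^{(t+1)})$; together with monotonicity from \eqref{SAC} this gives convergence of $\{{\cal G}(P^{(t)})\}$ and $D(P^{(t)}\|P^{(t+1)})\to 0$. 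The paper's proof literally stops at this point: it concludes that $\{{\cal G}(P^{(t)})\}$ is Cauchy and hence converges, and says nothing further.

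Everything in your second and third paragraphs---subsequential limit, continuity of ${\cal F}_2$, fixed-point identification, first-order condition forcing $\Psi[P^{(\infty)}]\in\mathrm{span}\{1,f_1,\dots,f_k\}$, the decomposition ${\cal G}(P)-{\cal G}(P^{(\infty)})=D_\Psi(P\|P^{(\infty)})$---is additional work that the paper does not attempt. Your fixed-point analysis is correct. The obstacle you flag, however, is genuine and is \emph{not} closed by the paper either: condition \eqref{BK1+} gives only $D_\Psi\le \gamma D$, which is the wrong sign for concluding $D_\Psi(\cdot\|P^{(\infty)})\ge 0$, and the second-order argument you sketch would need extra hypotheses (differentiability of $\Psi$ and a nondegenerate Hessian, essentially the setting of (A4) and Lemma~\ref{L6}). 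Without such assumptions a fixed point of ${\cal F}_2$ can in principle be a saddle of ${\cal G}$. So your proposal is more careful than the paper's own proof, and the gap you identify in establishing bona fide local minimality is a gap in the theorem as stated, not in your reasoning.
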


\begin{proof}
Since $\{{\cal G}(P^{(t)}) \}$
is monotonically decreasing for $t$,
we have 
\begin{align}
\lim_{n \to \infty}
{\cal G}(P^{(t)})-{\cal G}(P^{(t+1)})
  =0\Label{AAS}.
\end{align}
Using \eqref{XMY2}, we have
  \begin{align}
&{\cal G}(P^{(t)})
=J_{\gamma}(P^{(t)},P^{(t)})  \nonumber \\
=& \gamma 
 D(P^{(t)}\| P^{(t+1)})
+
J_{\gamma}( P^{(t+1)}, P^{(t)}) \nonumber \\
\ge 
& \gamma 
 D(P^{(t)}\| P^{(t+1)})
+
{\cal G}( P^{(t+1)}) .
\end{align}
Thus, we have
\begin{align}
 \gamma  D(P^{(t)}\| P^{(t+1)})
\le 
{\cal G}(P^{(t)})-{\cal G}(P^{(t+1)})
  \Label{AAS3}.
\end{align}
Since
due to \eqref{AAS} and \eqref{AAS3},
the sequence $\{{\cal G}(P^{(t)})\}$ is a Cauchy sequence, it converges. 
\end{proof}

To discuss the details of Algorithm \ref{AL1}, 
we focus on the $\delta$-neighborhood $U(P^{0},\delta)$ of $P^{0}$ defined as
\begin{align}
U(P^{0},\delta):=\{
P \in {\cal M}_a | D(P^{0} \|P)\le \delta \}.\Label{VUI7}
\end{align}
In particular, we denote ${\cal M}_a$ by $U(P^{0},\infty)$.
Then, we address the following conditions for 
the $\delta$-neighborhood $U(P^{0},\delta)$ of $P^{0}$;
\begin{description}
\item[(A0)]
Any distribution $Q \in U(P^{0},\delta)$ satisfies the inequality
\begin{align}
{\cal G}({\cal F}_2[Q]) 
\ge {\cal G}(P^{0}).\Label{BK-1}
\end{align}
\if0
\item[(A0')]
Any distribution $Q \in U(P^{0},\delta)$ satisfies the inequality
\begin{align}
{\cal G}(Q)
\ge {\cal G}(P^{0}).\Label{BK-2}
\end{align}
\fi
\item[(A1)]
Any distribution $Q \in U(P^{0},\delta)$ satisfy
\begin{align}
D_{\Psi}({\cal F}_2[Q] \|Q) \le \gamma D({\cal F}_2[Q] \| Q)
\Label{BK1}.
\end{align}
\item[(A2)]
Any distribution $Q \in U(P^{0},\delta)$ satisfies
\begin{align}
D_{\Psi}(P^{0} \|Q) \ge 0.
\Label{XMZ}
\end{align}
\item[(A3)]
There exists a positive number $\beta>0$ such that
any distribution $Q \in U(P^{0},\delta)$ satisfies
\begin{align}
D_{\Psi}(P^{0} \|Q)
=\sum_{x \in {\cal X}} P^{0}(x) (\Psi[P^{0}](x)- \Psi[Q](x)) 
\ge \beta
D(P^{0} \| Q).
\Label{CAU}
\end{align}
\end{description}
The condition (A3) is a stronger version of (A2).
\if0
When (A1) holds, ${\cal G}(Q) \ge {\cal G}({\cal F}_2[Q]) $, i.e., 
(A0) implies (A0'). Hence, 
the condition (A0') is a weaker version of (A0).
\fi

However, the convergence to the global minimum is not guaranteed.
As a generalization of \cite[Theorem 3.3]{RISB},  
the following theorem discusses the convergence to the global minimum 
and the convergence speed.

\begin{theorem}\Label{TH1}
Assume that 
the $\delta$-neighborhood $U(P^{0},\delta)$ of $P^{0}$
satisfies the conditions 
(A1) and (A2) with $\gamma$, and $P^{(1)} \in U(P^{0},\delta)$.
Then, 
Algorithm \ref{AL1} with $t_0$ iterations
has one of the following two behaviors.
\begin{description}
\item[(i)]
There exists an integer $t_1 \le t_0+1$ such that 
\begin{align}
{\cal G}(P^{(t_1)}) < {\cal G}(P^{0}).
\end{align}
\item[(ii)]
Algorithm \ref{AL1} satisfies the conditions
$\{P^{(t)}\}_{t=1}^{t_0+1} 
\subset U(P^{0},\delta)$ and
\begin{align}
{\cal G}(P^{(t_0+1)})
-{\cal G}(P^{0})
\le 
\frac{\gamma D(P^{0}\| P^{(1)}) }{t_0}.\Label{XME}
\end{align}
\end{description}
When the condition (A0) holds additionally, 
Algorithm \ref{AL1} with $t_0$ iterations satisfies (ii).
\end{theorem}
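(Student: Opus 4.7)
The plan is to reduce the theorem to a single one-step potential inequality
\[
{\cal G}(P^{(t+1)})-{\cal G}(P^{0})\le \gamma\bigl(D(P^{0}\|P^{(t)})-D(P^{0}\|P^{(t+1)})\bigr)
\]
valid whenever $P^{(t)}\in U(P^{0},\delta)$, and then to telescope it over $t=1,\ldots,t_0$. The derivation of this inequality is a direct synthesis of Lemmas \ref{L1} and \ref{L2}. Applying \eqref{XMY2} with $P=P^{0}$ and $Q=P^{(t)}$ gives the exact identity $J_\gamma(P^{0},P^{(t)})=J_\gamma(P^{(t+1)},P^{(t)})+\gamma D(P^{0}\|P^{(t+1)})$. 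Condition (A2) at $Q=P^{(t)}$ bounds $\sum_{x}P^{0}(x)\Psi[P^{(t)}](x)$ above by ${\cal G}(P^{0})$, so the left-hand side is at most ${\cal G}(P^{0})+\gamma D(P^{0}\|P^{(t)})$. Finally, condition (A1) at $Q=P^{(t)}$ together with Lemma \ref{L2} bounds $J_\gamma(P^{(t+1)},P^{(t)})$ below by $J_\gamma(P^{(t+1)},P^{(t+1)})={\cal G}(P^{(t+1)})$; chaining the three bounds produces the displayed inequality.

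The heart of the argument is then a joint induction that simultaneously maintains both $P^{(t)}\in U(P^{0},\delta)$ and the potential inequality. Suppose case (i) does not occur in the first $t_0+1$ steps, i.e.\ ${\cal G}(P^{(s)})\ge {\cal G}(P^{0})$ for every $1\le s\le t_0+1$. Rearranging the one-step inequality then yields $D(P^{0}\|P^{(t+1)})\le D(P^{0}\|P^{(t)})\le \delta$, so starting from $P^{(1)}\in U(P^{0},\delta)$ the neighborhood property propagates to all $t\le t_0+1$, which in turn makes (A1) and (A2) available at every iteration. The main (and essentially only) obstacle is this interlocking: the one-step inequality relies on the local hypotheses (A1), (A2), which themselves require $P^{(t)}\in U(P^{0},\delta)$, so the dichotomy between (i) and (ii) and the neighborhood-invariance induction must be carried out in tandem rather than separately.

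Once the induction is in place, summing the one-step inequality over $t=1,\ldots,t_0$ telescopes the right-hand side to $\gamma\bigl(D(P^{0}\|P^{(1)})-D(P^{0}\|P^{(t_0+1)})\bigr)\le \gamma D(P^{0}\|P^{(1)})$. The monotonicity of $\{{\cal G}(P^{(t)})\}$ recorded in \eqref{SAC}, itself a consequence of (A1) at each step and Lemma \ref{L2}, shows that ${\cal G}(P^{(t_0+1)})-{\cal G}(P^{0})$ is the smallest of the $t_0$ terms on the left; lower-bounding the sum by $t_0\bigl({\cal G}(P^{(t_0+1)})-{\cal G}(P^{0})\bigr)$ yields \eqref{XME} and establishes (ii). For the addendum, I would argue that (A0) rules out case (i): as long as the induction keeps $P^{(t)}\in U(P^{0},\delta)$, (A0) applied to $Q=P^{(t)}$ forces ${\cal G}(P^{(t+1)})={\cal G}({\cal F}_2[P^{(t)}])\ge {\cal G}(P^{0})$ for every $t\ge 1$, while step-wise monotonicity additionally gives ${\cal G}(P^{(1)})\ge {\cal G}(P^{(2)})\ge {\cal G}(P^{0})$. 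Hence no index $t_1\le t_0+1$ can witness case (i), and (ii) follows automatically.
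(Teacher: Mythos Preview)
Your proof is correct and follows essentially the same route as the paper: both establish the one-step potential inequality ${\cal G}(P^{(t+1)})-{\cal G}(P^{0})\le \gamma\bigl(D(P^{0}\|P^{(t)})-D(P^{0}\|P^{(t+1)})\bigr)$ under (A1) and (A2), run a joint induction (using the negation of (i), or (A0)) to keep $P^{(t)}\in U(P^{0},\delta)$, and then telescope together with the monotonicity \eqref{SAC}. The only cosmetic difference is that the paper derives the one-step identity via the auxiliary equality of Lemma~\ref{LLX} (which is stated in a form reusable for the approximate-iteration theorem), whereas you obtain it more directly from \eqref{XMY2} and Lemma~\ref{L2}.
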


The above theorem is shown in Appendix \ref{S3-2}.
Now, we choose an element $P^* \in {\cal M}_a$ to satisfy  
${\cal G}(P^*)=\min_{P \in {\cal M}_a} {\cal G}(P)$.
Then, 
the condition (A0) holds with 
$U(P^*,\infty)={\cal M}_a$ and the choice $P^0=P^*$.
When the conditions (A1) and (A2) hold with
$U(P^*,\infty)={\cal M}_a$ and the choice $P^0=P^*$,
Theorem \ref{TH1} guarantees the convergence to the minimizer $P^*$ in Algorithm \ref{AL1}.
Although Theorem \ref{TH1} requires the conditions (A1) and (A2), the condition (A2) is essential due to the following reason.
When we choose $\gamma>0$ to be sufficiently large,
the condition (A1) holds with
the $\delta$-neighborhood $U(P^*,\delta)$ of $P^*$ 
because $U(P^*,\delta)$ is a compact set.
Hence, 
it is essential to check the condition (A2) for Theorem \ref{TH1}.

However, as seen in \eqref{XME}, a larger $\gamma$ makes the convergence speed slower.
Therefore, it is important to choose $\gamma$ to be small under the condition (A1).
Practically, it is better to change $\gamma$ to be smaller when the point $P^{(t)}$ is closer to 
the minimizer $P^*$.
In fact, 
as a generalization of \cite[Proposition 3.6]{RISB},  
we have the following exponential convergence under a stronger condition dependently of $\gamma$.
In this sense, the parameter is called an acceleration parameter \cite[Remark 3.4]{RISB}.

\begin{theorem}\Label{TH2}
Assume that 
the $\delta$-neighborhood $U(P^{0},\delta)$ of $P^{0}$
satisfies the conditions 
(A1) and (A3) with $\gamma$, and $P^{(1)} \in U(P^{0},\delta)$.
Then, 
Algorithm \ref{AL1} with $t_0$ iterations
has one of the following two behaviors.
\begin{description}
\item[(i)]
There exists an integer $t_1 \le t_0+1$ such that 
\begin{align}
{\cal G}(P^{(t_1)}) < {\cal G}(P^{0}).
\end{align}
\item[(ii)]
Algorithm \ref{AL1} satisfies the conditions
$\{P^{(t)}\}_{t=1}^{t_0+1} 
\subset U(P^{0},\delta)$ and
\begin{align}
 {\cal G}(P^{(t_0+1)})-{\cal G}(P^{0}) 
\le (1-\frac{\beta}{\gamma})^{t_0} D(P^{0} \| P^{(1)}).\Label{CAU2}
\end{align}
\end{description}
When the condition (A0) holds additionally, 
Algorithm \ref{AL1} with $t_0$ iterations satisfies (ii).
\if0
the $\delta$-neighborhood $U(P^{0},\delta)$ of $P^{0}$
satisfies the conditions (A0), (A1), and (A3) with $\gamma$ and $\beta$, and $P^{(1)} \in U(P^{0},\delta)$.
Then, Algorithm \ref{AL1} satisfies the conditions
$\{P^{(t)}\} \subset U(P^{0},\delta)$ and
\begin{align}
 {\cal G}(P^{(t+1)})-{\cal G}(P^*) 
\le (1-\frac{\beta}{\gamma})^t D(P^* \| P^{(1)}).\Label{CAU2}
\end{align}
\fi
\end{theorem}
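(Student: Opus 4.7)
The plan is to mirror the argument of Theorem \ref{TH1} but replace the additive bound coming from (A2) with the multiplicative bound coming from (A3), so that the $1/t_{0}$ rate becomes a geometric rate.

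First I would isolate the one-step inequality. Applying \eqref{XMY2} with $P=P^{0}$ and $Q=P^{(t)}$ gives
\begin{align*}
J_{\gamma}(P^{0},P^{(t)})=J_{\gamma}(P^{(t+1)},P^{(t)})+\gamma D(P^{0}\|P^{(t+1)}),
\end{align*}
while (A1) together with Lemma \ref{L2} yields $J_{\gamma}(P^{(t+1)},P^{(t)})\ge J_{\gamma}(P^{(t+1)},P^{(t+1)})={\cal G}(P^{(t+1)})$. Expanding the left-hand side directly as $J_{\gamma}(P^{0},P^{(t)})=\gamma D(P^{0}\|P^{(t)})+{\cal G}(P^{0})-D_{\Psi}(P^{0}\|P^{(t)})$ and invoking (A3) to bound $D_{\Psi}(P^{0}\|P^{(t)})\ge\beta D(P^{0}\|P^{(t)})$ leaves
\begin{align*}
{\cal G}(P^{(t+1)})-{\cal G}(P^{0})+\gamma D(P^{0}\|P^{(t+1)})\le(\gamma-\beta)D(P^{0}\|P^{(t)}).
\end{align*}

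Next I would run an induction with a case split. If at some step $t_{1}\le t_{0}+1$ we have ${\cal G}(P^{(t_{1})})<{\cal G}(P^{0})$ we are in case (i) and there is nothing to prove. Otherwise, for every $t\le t_{0}$ both terms on the left of the one-step inequality are non-negative, so discarding the first yields the contraction $D(P^{0}\|P^{(t+1)})\le(1-\beta/\gamma)D(P^{0}\|P^{(t)})$. In particular $D(P^{0}\|P^{(t+1)})\le D(P^{0}\|P^{(1)})\le\delta$, which certifies that $P^{(t+1)}\in U(P^{0},\delta)$ and keeps (A1) and (A3) available at the next step. Iterating the contraction gives $D(P^{0}\|P^{(t)})\le(1-\beta/\gamma)^{t-1}D(P^{0}\|P^{(1)})$, and feeding this back into the one-step inequality at $t=t_{0}$ produces the geometric bound asserted in (ii).

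For the last clause, under (A0) the iteration satisfies ${\cal G}(P^{(t+1)})={\cal G}({\cal F}_{2}[P^{(t)}])\ge{\cal G}(P^{0})$ whenever $P^{(t)}\in U(P^{0},\delta)$. Combined with the inductive argument above, this rules out case (i) entirely, so (ii) must hold.

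The main technical obstacle I expect is the interlocking bookkeeping inside the induction: one must simultaneously (a) keep each iterate in $U(P^{0},\delta)$ so that (A1) and (A3) can be applied again, (b) preserve the sign ${\cal G}(P^{(t+1)})\ge{\cal G}(P^{0})$ needed to drop the non-negative term in the passage from the one-step inequality to the pure contraction on divergences, and (c) propagate both properties up to time $t_{0}+1$. Once these are threaded together, the dichotomy between (i) and (ii) and the improvement under (A0) are immediate.
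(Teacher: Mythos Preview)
Your proposal is correct and follows essentially the same route as the paper's proof. The paper routes the one-step identity through the auxiliary Lemma~\ref{LLX} (equation \eqref{XM1}) with $Q=P^{(t)}$ and $Q_\theta=P^{(t+1)}$, whereas you obtain the identical relation more directly by applying \eqref{XMY2} with $P=P^{0}$ and expanding $J_\gamma(P^{0},P^{(t)})$; after that, both arguments invoke (A1) to pass from $J_\gamma(P^{(t+1)},P^{(t)})$ to ${\cal G}(P^{(t+1)})$, use (A3) to replace $D_\Psi(P^{0}\|P^{(t)})$ by $\beta D(P^{0}\|P^{(t)})$, deduce the contraction $D(P^{0}\|P^{(t+1)})\le(1-\beta/\gamma)D(P^{0}\|P^{(t)})$, and thread the induction exactly as you describe. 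One small remark: carrying the constants through your one-step inequality at $t=t_0$ actually yields ${\cal G}(P^{(t_0+1)})-{\cal G}(P^{0})\le\gamma(1-\beta/\gamma)^{t_0}D(P^{0}\|P^{(1)})$; the paper's displayed chain in Appendix~\ref{S3-3} drops this factor of $\gamma$ in step~(b), so your derivation in fact agrees with the paper's computation up to that typo.
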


The above theorem is shown in Appendix \ref{S3-3}.
Next, we consider the case when there are several local minimizers $P^*_1, \ldots, P^*_{n}
\in {\cal M}_a$
while the true minimizer is $P^*$.
These local minimizers are characterized by the following corollary,
which is shown in Appendix \ref{S3-2} as a corollary of Theorem \ref{TH1}.

\begin{corollary}\Label{Cor1}
\begin{align}
D_{\Psi}(P^* \|P^*_i)= 
\sum_{x \in {\cal X}}P^*(x) 
(\Psi[P^*](x)-\Psi[P^*_i](x))= 
{\cal G}(P^*)-{\cal G}(P^*_i)<0.\Label{Cor2}
\end{align}
\end{corollary}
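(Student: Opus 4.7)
My plan is to reduce the corollary to two facts about a local minimizer $P^*_i$: (i) it must be a fixed point of Algorithm \ref{AL1}, and (ii) this forces $\Psi[P^*_i]$ to be a linear combination of the constraint functions $1, f_1, \ldots, f_k$. Once (ii) is established, both the chain of equalities and the strict inequality fall out quickly.

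First I would argue that any local minimizer $P^*_i$ of ${\cal G}$ on ${\cal M}_a$ is fixed under one iteration, i.e., $P^*_i = \Gamma^{(m)}_{{\cal M}_a}[{\cal F}_3[P^*_i]]$. This is exactly the content of the contrapositive of Theorem \ref{TH1} with $P^0 = P^*_i$: if the iterate at $P^*_i$ produced a different point, then by \eqref{SAC} (applied locally where (A1) holds by continuity for large $\gamma$) we would strictly decrease ${\cal G}$, contradicting local minimality. This is where the statement earns its label as a corollary of Theorem \ref{TH1}.

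Next I would translate the fixed-point equation into a structural statement about $\Psi[P^*_i]$. The Lagrange/orthogonality characterization of the $m$-projection says that $\Gamma^{(m)}_{{\cal M}_a}[R] = P^*_i$ if and only if $\log R - \log P^*_i$ is a linear combination of $1, f_1, \ldots, f_k$. Substituting $R = {\cal F}_3[P^*_i]$ and using the definition of ${\cal F}_3$,
\begin{align}
\log {\cal F}_3[P^*_i](x) - \log P^*_i(x) = -\tfrac{1}{\gamma}\Psi[P^*_i](x) - \log \kappa[P^*_i],
\end{align}
so $\Psi[P^*_i] = c_0 + \sum_{j=1}^k c_j f_j$ for some constants $c_0, \ldots, c_k$.

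With this in hand the first equality is a one-line computation: since $P^*, P^*_i \in {\cal M}_a$ share the same expectations of $1, f_1, \ldots, f_k$,
\begin{align}
\sum_{x \in {\cal X}} P^*(x) \Psi[P^*_i](x) = c_0 + \sum_{j=1}^k c_j a_j = \sum_{x \in {\cal X}} P^*_i(x) \Psi[P^*_i](x) = {\cal G}(P^*_i),
\end{align}
and subtracting this from ${\cal G}(P^*) = \sum_x P^*(x) \Psi[P^*](x)$ yields the middle equality in \eqref{Cor2}. The leftmost equality is the definition of $D_{\Psi}$, and the strict inequality ${\cal G}(P^*) - {\cal G}(P^*_i) < 0$ is just the fact that $P^*$ is the global minimum while $P^*_i \neq P^*$ is a strictly suboptimal local minimum.

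The main obstacle I anticipate is the first step, namely cleanly justifying that a local minimizer must satisfy the fixed-point equation. The issue is that (A1) is a global hypothesis in Theorem \ref{TH1} but here we want to apply it only near $P^*_i$; one has to either invoke (A1) on a small neighborhood (valid for sufficiently large $\gamma$, since the neighborhood is compact) or appeal directly to the monotone descent inequality \eqref{SAC} together with the definition of local minimality. Once this is in place, steps two and three are routine.
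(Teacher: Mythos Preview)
Your argument is correct, but it takes a different route from the paper's. Both proofs rest on the same first step---that a local minimizer $P^*_i$ is a fixed point of the iteration, $\Gamma^{(m)}_{{\cal M}_a}[{\cal F}_3[P^*_i]]=P^*_i$---and in fact the paper also uses this implicitly without spelling it out. Where the two diverge is in the second step. The paper simply substitutes $P^{0}=P^*$, $P^{(t)}=P^*_i$ (hence $P^{(t+1)}=P^*_i$) into the identity \eqref{XME2}, which is itself a specialization of Lemma~\ref{LLX}; every term except $\frac{1}{\gamma}{\cal G}(P^*_i)-\frac{1}{\gamma}{\cal G}(P^*)$ and the $D_\Psi$ term collapses to zero, yielding \eqref{Cor2} in one line. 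You instead read the fixed-point equation through the Lagrange characterization of the $m$-projection to conclude that $\Psi[P^*_i]$ lies in the span of $1,f_1,\ldots,f_k$, and then use that $P^*$ and $P^*_i$ agree on those expectations. Your route is more self-contained (it avoids Lemma~\ref{LLX} entirely and makes the geometric reason for the equality transparent), while the paper's route is shorter because it recycles the identity already proved for Theorem~\ref{TH1}. One small quibble: calling the fixed-point step ``the contrapositive of Theorem~\ref{TH1}'' is not quite right---what you actually use is \eqref{XMY2} (strict inequality in the first step of \eqref{SAC} when $P^{(t+1)}\neq P^{(t)}$) together with local minimality, and you correctly flag that this needs (A1) locally, which holds for large~$\gamma$.
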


Hence, if there exist local minimizers, 
the condition (A2) does not hold with 
$U(P^*,\infty)={\cal M}_a$ and the choice $P^0=P^*$.
In this case, when 
the $\delta$-neighborhood $U(P^*_i,\delta)$ of $P^*_i$
satisfies the conditions (A0), (A1), and (A2),
Algorithm \ref{AL1} converges to the local minimizer $P^*_i$ with the speed \eqref{XME}
except for the case (i).
Since $P^*_i$ is a local minimizer,
the $\delta$-neighborhood $U(P^*_i,\delta)$ of $P^*_i$
satisfies the conditions (A0) and (A1) with sufficiently small $\delta>0$.
When the following condition (A4) holds,
as shown below, the $\delta$-neighborhood $U(P^*_i,\delta)$ of $P^*_i$ satisfies the condition (A2)
with sufficiently small $\delta>0$.
That is, when the initial point belongs to the $\delta$-neighborhood $U(P^*_i,\delta)$,
Algorithm \ref{AL1} converges to $P^*_i$.
\begin{description}
\item[(A4)]
The function $\eta \mapsto \Psi[P_\eta](x)$ is differentiable, and
the relation 
\begin{align}
\sum_{x \in {\cal X}}P_{\eta} (x) 
\Big(\frac{\partial }{\partial \eta_i}\Psi[P_\eta](x)\Big)= 0
\end{align}
holds for $i=k+1, \ldots, l$ and $P_{\eta} \in {\cal M}_a$.
\end{description}

\begin{lemma}\Label{L6}
We consider the following two conditions for a convex subset ${\cal K} \subset {\cal M}_a$.
\begin{description}
\item[(B1)]
The relation
\begin{align}
D_{\Psi}(P \|Q)=\sum_{x \in {\cal X}} P(x) (\Psi[P](x)-\Psi[Q](x)) \ge 0
\end{align}
holds for $P,Q \in {\cal K}$.
\item[(B2)]
${\cal G}(P)$ is convex for the mixture parameter in ${\cal K}$.
\end{description}
The condition (B1) implies the condition (B2).
In addition, when the condition (A4) holds, 
the condition (B2) implies the condition (B1).
\end{lemma}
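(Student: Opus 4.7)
The plan is to prove the two implications separately, the first by a direct convexity computation and the second by differentiating along line segments and invoking (A4) to kill an awkward derivative term.

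For the direction (B1) $\Rightarrow$ (B2), I would fix $P,Q\in\mathcal{K}$ and $\lambda\in[0,1]$, set $R_\lambda:=\lambda P+(1-\lambda)Q$, which lies in $\mathcal{K}$ by convexity, and expand
\begin{align}
\lambda\mathcal{G}(P)+(1-\lambda)\mathcal{G}(Q)-\mathcal{G}(R_\lambda)
=\;&\lambda\sum_{x}P(x)\bigl(\Psi[P](x)-\Psi[R_\lambda](x)\bigr)\nonumber\\
&+(1-\lambda)\sum_{x}Q(x)\bigl(\Psi[Q](x)-\Psi[R_\lambda](x)\bigr).\nonumber
\end{align}
Each of the two sums on the right is nonnegative by (B1) applied to the pairs $(P,R_\lambda)$ and $(Q,R_\lambda)$, so $\mathcal{G}$ is convex in the mixture parameter on $\mathcal{K}$.

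For the converse (B2) $\Rightarrow$ (B1) under (A4), I would parametrize the segment by $\lambda\mapsto R_\lambda:=\lambda P+(1-\lambda)Q$ and study $f(\lambda):=\mathcal{G}(R_\lambda)=\sum_{x}R_\lambda(x)\Psi[R_\lambda](x)$. Writing $\eta^P,\eta^Q$ for the mixture parameters of $P,Q$, we have $R_\lambda=P_{\eta(\lambda)}$ with $\eta(\lambda)=\lambda\eta^P+(1-\lambda)\eta^Q$, and $\eta^P_i=\eta^Q_i=a_i$ for $i\le k$, so the components $i\le k$ of $\eta^P-\eta^Q$ vanish. A straightforward differentiation gives
\begin{align}
f'(\lambda)=\sum_{x}\bigl(P(x)-Q(x)\bigr)\Psi[R_\lambda](x)
+\sum_{i=k+1}^{l}(\eta^P_i-\eta^Q_i)\sum_{x}R_\lambda(x)\frac{\partial}{\partial\eta_i}\Psi[P_\eta](x)\Big|_{\eta=\eta(\lambda)}.\nonumber
\end{align}
Condition (A4) forces each inner sum in the second term to vanish, so $f'(\lambda)=\sum_{x}(P(x)-Q(x))\Psi[R_\lambda](x)$.

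Now I would apply the convexity of $f$ in the form $f(1)\ge f(0)+f'(0)$. Substituting and rearranging,
\begin{align}
\mathcal{G}(P)-\mathcal{G}(Q)\;\ge\;\sum_{x}\bigl(P(x)-Q(x)\bigr)\Psi[Q](x),\nonumber
\end{align}
which after cancellation becomes $\sum_{x}P(x)(\Psi[P](x)-\Psi[Q](x))\ge 0$, i.e.\ (B1).

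The only real obstacle is the derivative computation in the second direction: without (A4) the term $\sum_{x}R_\lambda(x)\,d\Psi[R_\lambda](x)/d\lambda$ is arbitrary, and the gradient inequality would not collapse to the desired inequality. The role of (A4) is precisely to ensure that this term vanishes along \emph{every} admissible tangent direction within $\mathcal{M}_a$, which are exactly the directions $\partial/\partial\eta_i$ with $i>k$, matching the range of $i$ in (A4). The restriction $\eta^P_i=\eta^Q_i$ for $i\le k$ (because $P,Q\in\mathcal{M}_a$) is what lets me drop the indices $i\le k$ from the sum in the first place.
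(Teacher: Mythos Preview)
Your proof is correct and essentially the same as the paper's. For (B1)$\Rightarrow$(B2) the paper does exactly your expansion of $\lambda\mathcal{G}(P)+(1-\lambda)\mathcal{G}(Q)-\mathcal{G}(R_\lambda)$; for (B2)$\Rightarrow$(B1) the paper differentiates the function $\varphi(\lambda)=\lambda\mathcal{G}(P)+(1-\lambda)\mathcal{G}(Q)-\mathcal{G}(R_\lambda)$ at $\lambda=0$ and uses $\varphi'(0)\ge 0$, which is the same inequality as your gradient inequality $f(1)\ge f(0)+f'(0)$, and (A4) is invoked in the identical way to kill the $\sum_x R_\lambda(x)\,d\Psi[R_\lambda](x)/d\lambda$ term.
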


We consider two kinds of mixture parameters.
These parametrizations can be converted to each other via
affine conversion, which preserves the convexity.
Therefore, the condition (B2) does not depend on the choice of mixture parameter.

When the function $\eta \mapsto \Psi[P_\eta](x)$ is twice-differentiable, and
the Hessian of ${\cal G}(P_\eta)$
is strictly positive semi-definite at a local minimizer $ P^*_i$,
this function is convex in the $\delta$-neighborhood $U(P^*_i,\delta)$ of $P^*_i$ 
with a sufficiently small $\delta>0$ 
because the Hessian of ${\cal G}(P_\eta)$ is strictly positive semi-definite in the neighborhood due to the continuity.

Then, Lemma \ref{L6} guarantees the condition (A2) for the $\delta$-neighborhood $U(P^*_i,\delta)$.
Algorithm \ref{AL1} converges to the local minimizer $P^*_i$ with the speed \eqref{XME}
except for the case (i).
The mathematical symbols introduced in Section \ref{S2-1}
is summarized in Table \ref{symbols}

\begin{table}[t]
\caption{List of mathematical symbols for Section \ref{S2-1}}
\label{symbols}
\begin{center}
\begin{tabular}{|l|l|l|}
\hline
Symbol& Description & Eq. number  \\
\hline
${\cal P}(\cX)$ & Set of probability distributions over $\cX$&    \\
\hline
${\cal M}_a$ & Mixture family & \eqref{MDP} \\
\hline
$\Gamma^{(e)}_{{\cal M}_a}$ & $e$-projection to ${\cal M}_a$
& \eqref{Mix} \\
\hline
${\cal G}(P)$ & Objective function & \eqref{ZNV} \\
\hline
$\Psi[P]$ & Functional of $P$ used for objective function & \eqref{ZNV} \\
\hline
$\overline{{\cal G}}(a)$ & Minimum value of ${\cal G}(P)$ & \eqref{NM6} \\
\hline
$P_{*,a}$ & Minimizer of ${\cal G}(P)$ & \eqref{NM6} \\
\hline
${\cal F}_3[Q]$ & Functional of $Q$ & \eqref{VU8} \\
\hline
$J_\gamma(P,Q)$ & Extended objective function & \eqref{VUI} \\
\hline
${\cal F}_1[P]$ & Minimizer of $J_\gamma(P,Q)$ for second argument  & \eqref{VUI2} \\
\hline
${\cal F}_2[Q]$ & Minimizer of $J_\gamma(P,Q)$ for first argument  & \eqref{VUI2} \\
\hline
$D_{\Psi}(P \|Q)$ & Function of $P$ and $Q$ related to $\Psi$ & \eqref{TM9} \\
\hline
$U(P^{0},\delta)$ & $\delta$-neighborhood of $P^0$  & \eqref{VUI7} \\
\hline
\end{tabular}
\end{center}
\end{table}

\begin{proofof}{Lemma \ref{L6}}
Assume the condition (B1). Then, for $\lambda \in [0,1]$, we have
\begin{align}
\varphi(\lambda):=&\lambda {\cal G}(P)+(1-\lambda) {\cal G}(Q)
- {\cal G}(\lambda P+(1-\lambda)Q)\nonumber \\
=&\lambda \sum_{x \in {\cal X}} P(x) (\Psi[P](x)- \Psi[\lambda P+(1-\lambda)Q](x))
\nonumber \\
&+(1-\lambda) \sum_{x \in {\cal X}} Q(x) (\Psi[Q](x)- \Psi[\lambda P+(1-\lambda)Q](x))\nonumber \\
 \ge & 0,
\end{align}
which implies (B2).

Assume the conditions (A4) and (B2).
Since $\varphi(\lambda)\ge 0$ for $\lambda \in [0,1]$,
we have
\begin{align}
0\le &\frac{d \varphi(\lambda)}{d \lambda}|_{\lambda=0} \nonumber \\
=&{\cal G}(P)- {\cal G}(Q)
-\sum_{x \in {\cal X}} (P(x)-Q(x))\Psi[Q](x)\nonumber \\
&-\sum_{x \in {\cal X}} Q(x) 
\frac{d \Psi[\lambda P+(1-\lambda)Q](x)}{d\lambda}|_{\lambda=0}\nonumber \\
\stackrel{(a)}{=}&{\cal G}(P)- {\cal G}(Q)
-\sum_{x \in {\cal X}} (P(x)-Q(x))\Psi[Q](x),
\end{align}
which implies (B1), where $(a)$ follows from 
the condition (A4).
\end{proofof}

\begin{remark}
The preceding study \cite[Theorem 3.3 and Proposition 3.6]{RISB} 
consider similar statements as Theorems \ref{TH1} and \ref{TH2}.
As mentioned in Remark \ref{RR1},
the preceding study \cite{RISB}
covers the case when
${\cal M}_a$ is given as ${\cal P}({\cal X})$,
and does not cover the case with a general mixture family ${\cal M}_a$.
In addition, the preceding study \cite[Theorem 3.3 and Proposition 3.6]{RISB} covers only the case 
when $P^0$ and $U(P^{0},\delta)$ are
$P^*$ and $ {\cal P}({\cal X})$, respectively.
That is, the preceding study does not cover the case with local minimizers.
In this sense, 
Theorems \ref{TH1} and \ref{TH2} are more general under the classical setting.
\end{remark}

\if0
\begin{remark}
The reference \cite{RISB} considers the same problem setting
with the quantum case.
Their case covers the case when
${\cal M}_a$ is given as ${\cal P}({\cal X})$,
and does not cover the case with a general mixture family ${\cal M}_a$.
To address the case with cost constraint,
this extension is essential.
The ideas of Theorems \ref{TH1} and \ref{TH2} are quite similar to 
\cite[Theorem 3.3 and Proposition 3.6]{RISB}.
However, these preceding studies consider the case when 
$P^0$ and $U(P^{0},\delta)$ are
$P^*$ and $ {\cal P}({\cal X})$, respectively.
That is, they do not cover the case with local minimizers.
\end{remark}
\fi

\subsection{Algorithm with approximated iteration}\Label{S2-2}
In general, it is not so easy to calculate 
the $e$-projection $\Gamma^{(e)}_{{\cal M}_a}({\cal F}_3[Q])$.
We consider the case when it is approximately calculated.
There are two methods to calculate the $e$-projection.
One is the method based on the minimization in the given mixture family,
and the other is the method based on the minimization in the exponential family orthogonal to the mixture family.
In the first method, 
the $e$-projection $\Gamma^{(e)}_{{\cal M}_a}({\cal F}_3[Q])$
is the minimizer of the following minimization;
\begin{align}
\min_{P \in {\cal M}_a} D(P\| {\cal F}_3[Q]).\Label{ZMY}
\end{align}

\if0
To characterize the $e$-projection $\Gamma^{(e)}_{{\cal M}_a}$
of ${\cal F}_3[Q]$,
we define the exponential family
\begin{align}
Q_{\theta}(x):= {\cal F}_3[Q](x) e^{\sum_{j=1}^k \theta^j f_j(x)- \phi[Q](\theta)}, \Label{E-fami}
\end{align}
where
the functions $f_j$ correspond
to the linear independent functions used in equation \eqref{MDP}, and 
$ \phi[Q](\theta):= \sum_{x \in {\cal X}}{\cal F}_3[Q](x) e^{\sum_{j=1}^k \theta^j  f_j(x)}$.
\fi
To describe the second method, we define the exponential family
\begin{align}
Q_{\theta}(x):= {\cal F}_3[Q](x) e^{\sum_{j=1}^k \theta^j f_j(x)- \phi[Q](\theta)}, \Label{E-fami}
\end{align}
where
\begin{align}
 \phi[Q](\theta):=
 \log \sum_{x \in {\cal X}}{\cal F}_3[Q](x) e^{\sum_{j=1}^k \theta^j  f_j(x)}.\Label{E-fami2}
\end{align}
The projected element $\Gamma^{(e)}_{{\cal M}_a}[{\cal F}_3[Q]]$
is the unique element of the intersection
$ \{Q_\theta\}\cap {\cal M}_a$.
For example, for this fact, see \cite[Lemma 3]{Bregman-em}.
Then, the $e$-projection $\Gamma^{(e)}_{{\cal M}_a}({\cal F}_3[Q])$
is given as the solution of the following equation;
\begin{align}
\frac{\partial \phi[Q]}{\partial \theta^j }(\theta)
=\sum_{x\in {\cal X}}Q_\theta(x) f_j(x)= a_j\Label{ZMX}
\end{align}
for $j=1, \ldots, k$.
The solution of \eqref{ZMX} is given as the minimizer of the following minimization;
\begin{align}
\min_{\theta \in \mathbb{R}^k}
\phi[Q](\theta)- \sum_{j=1}^k \theta^j a_j.\Label{ZMX2}
\end{align}

We discuss the precision of our algorithm when each step in
the above minimization has a certain error.

\if0
To discuss the first method with error, we consider Algorithm \ref{AL2} instead of Algorithm \ref{AL1}.
\begin{algorithm}
\caption{Minimization of ${\cal G}(P)$ with $\epsilon$ error in \eqref{ZMY}}
\Label{AL2}
\begin{algorithmic}
\STATE {Choose the initial value $P^{(1)} \in \mathcal{M}$;} 
\REPEAT 
\STATE Calculate $P^{(t+1)}$ to satisfy
\begin{align}
D(P^{(t+1)}\| {\cal F}_3[P^{(t)}])
\le \min \Big( D(P^{(t)}\| {\cal F}_3[P^{(t)}]),
 \min_{P \in {\cal M}_a} D(P\| {\cal F}_3[P^{(t)}]) +\epsilon \Big);
\end{align}
\UNTIL{convergence.} 
\end{algorithmic}
\end{algorithm}
\fi
\if0
\begin{theorem}
When \eqref{XMZ} and \eqref{BK1} hold,
we have
\begin{align}
{\cal G}(P^{(k+1)})
-{\cal G}(P^*)
\le 
\gamma \frac{D(P^*\| P^{(1)}) }{k} + 2 \kappa \sqrt{D(P^*\| P^{(1)}) \epsilon}
+(\kappa+1)\epsilon. \Label{XZW}
\end{align}
\end{theorem}
\fi

However, the first method requires the minimization with the same number of parameters
as the original minimization $\min_{P \in {\cal M}_a} {\cal G}(P)$.
Hence, it is better to employ the second method.
In fact, when ${\cal M}_a$ is given as a subset of ${\cal P}({\cal X})$ with one linear constraint,
the minimization \eqref{ZMX2} is written as a one-parameter convex minimization.
Since any one-parameter convex minimization is performed by the bisection method, which needs $O(-\log \epsilon)$ iterations \cite{BV}
to achieve a smaller error of the minimum of the objective function
than $\epsilon$,
the cost of this minimization is much smaller than that of the original minimization
$\min_{P \in {\cal M}_a} {\cal G}(P)$.
To consider an algorithm based on the minimization 
\eqref{ZMX2}, we assume that $\Psi$ is defined in ${\cal P}({\cal X})$.
In the multi-parameter case, we can use
the gradient method and 
the accelerated proximal gradient method \cite{BT,Nesterov,AT,Nesterov2,Nesterov3,Teboulle}.

\begin{algorithm}
\caption{Minimization of ${\cal G}(P)$ with an error in \eqref{ZMY}}
\Label{AL3}
\begin{algorithmic}
\STATE {As inputs, we prepare 
the function $\Psi$,
$l$ linearly independent functions 
$f_1, \ldots, f_{l}$,
constraints $a_1, \ldots,a_k $,
positive numbers $\gamma,\epsilon_1,\epsilon_2>0$, and
the initial value 
$P^{(1)} \in \mathcal{M}_a$;} 
\REPEAT 
\STATE 
Calculate the pair of $P^{(t+1)} \in {\cal M}_a$ and $\bar{P}^{(t+1)}= Q_\theta$ with $Q=P^{(t)}$ in \eqref{E-fami}
to satisfy
\begin{align}
\phi[\bar{P}^{(t)}](\theta)- \sum_{j=1}^k \theta^j a_j
&\le 
\min_{\theta' \in \mathbb{R}^k}
\phi[\bar{P}^{(t)}](\theta')- \sum_{j=1}^k {\theta'}^j a_j
+\epsilon_1 \Label{AMG} \\
D(\bar{P}^{(t+1)}\| P^{(t+1)} ) &\le \epsilon_2\Label{NXP} .
\end{align}
\UNTIL{$t=t_1-1$.} 
\STATE {\bf final step:}\quad 
We output the final estimate $P_f^{(t_1)} :=P^{(t_2)} \in \mathcal{M}$
by using  $t_2:= \argmin_{t=2, \ldots, t_1} 
{\cal G}(P^{(t)})- \gamma D(P^{(t)} \| \bar{P}^{(t)})$.
\end{algorithmic}
\end{algorithm}

To consider the convergence of Algorithm \ref{AL3}, we extend
the conditions (A1) and (A2).
For this aim, we focus on the $\delta$-neighborhood $\bar{U}(P^{0},\delta)$ of $P^{0} \in {\cal M}_a$ defined as
\begin{align}
\bar{U}(P^{0},\delta):=\{
P \in {\cal P}({\cal X}) | D(P^{0} \|P)\le \delta \}.
\end{align}
Then, we introduce the following conditions for 
the $\delta$-neighborhood $\bar{U}(P^{0},\delta)$ of $P^{0}$ as follows.
\begin{description}
\if0
\item[(A0+)]
A distribution $Q \in \bar{U}(P^{0},\delta)$ satisfies
\eqref{BK-1}.
\item[(A0'+)]
A distribution $Q \in \bar{U}(P^{0},\delta)$ satisfies
\eqref{BK-2}.
\fi
\item[(A1+)]
Any distribution $Q \in \bar{U}(P^{0},\delta)\cap {\cal M}_a
={U}(P^{0},\delta)$ 
satisfies the following condition with a positive real number $\epsilon_2 >0$.
When a distribution $P \in {\cal M}_a$ satisfies $D( P \|{\cal F}_2[Q])
\le \epsilon_2$, 
we have
\begin{align}
\sum_{x \in {\cal X}} P(x) (\Psi[P](x)- \Psi[ Q ](x))
\le \gamma D(P \| Q).
\Label{BK12}
\end{align}
\item[(A2+)]
A distribution $Q \in \bar{U}(P^{0},\delta)$ satisfies \eqref{XMZ}.
\end{description}

The convergence of Algorithm \ref{AL3} is guaranteed in the following theorem.

\begin{theorem}\Label{TH8}
Assume that 
the $\delta$-neighborhood $\bar{U}(P^{0},\delta)$ of $P^{0}$
satisfies the conditions 
(A1+) and (A2+) with two positive real numbers $\gamma>0$, $\epsilon_2>0$, and $P^{(1)} \in U(P^{0},\delta)$.
Then, for a positive real number $\epsilon_1>0$,
Algorithm \ref{AL3} satisfies the conditions
\begin{align}
D(\Gamma^{(e)}_{{\cal M}_a}[{\cal F}_3[\bar{P}^{(t)}]]\| \bar{P}^{(t+1)} ) & \le \epsilon_1 
\Label{XP8}\\
{\cal G}(P_f^{(t_1)})
-{\cal G}(P^*)
& \le 
 \frac{\gamma D(P^*\| P^{(1)}) }{t_1-1} + \epsilon_1
 +\gamma \epsilon_2.
 \Label{XZWN}
\end{align}
\end{theorem}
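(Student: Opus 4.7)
The plan is to prove \eqref{XP8} and \eqref{XZWN} in turn: the first via a divergence identity on the exponential family, and the second by adapting the telescoping argument behind Theorem~\ref{TH1} while tracking the two error sources $\epsilon_1$ and $\epsilon_2$.

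For \eqref{XP8}, write $\hat P^{(t+1)} := \Gamma^{(m)}_{{\cal M}_a}[{\cal F}_3[\bar P^{(t)}]] = Q_{\theta^*}$, the unique intersection of $\{Q_\theta\}$ with ${\cal M}_a$. A direct expansion using $\log Q_{\theta^*}(x) - \log Q_\theta(x) = \sum_j(\theta^{*j} - \theta^j) f_j(x) - \phi[\bar P^{(t)}](\theta^*) + \phi[\bar P^{(t)}](\theta)$ together with $\sum_x \hat P^{(t+1)}(x) f_j(x) = a_j$ gives the identity
\begin{align*}
D(\hat P^{(t+1)} \| Q_\theta) = \bigl(\phi[\bar P^{(t)}](\theta) - \textstyle\sum_j \theta^j a_j\bigr) - \min_{\theta'} \bigl(\phi[\bar P^{(t)}](\theta') - \textstyle\sum_j \theta'^j a_j\bigr).
\end{align*}
Substituting $Q_\theta = \bar P^{(t+1)}$ and invoking \eqref{AMG} yields \eqref{XP8}.

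For \eqref{XZWN}, I would replay the proof of Theorem~\ref{TH1} with $\hat P^{(t+1)}$ substituted for the exact update, then exchange $\hat P^{(t+1)}$ for the actual iterate $P^{(t+1)}$ while absorbing the mismatch into $\epsilon_1$ and $\epsilon_2$. Applying \eqref{XMY2} gives
\begin{align*}
J_\gamma(P^*, \bar P^{(t)}) = J_\gamma(\hat P^{(t+1)}, \bar P^{(t)}) + \gamma D(P^* \| \hat P^{(t+1)}),
\end{align*}
and (A2+) upper bounds the left side by $\gamma D(P^* \| \bar P^{(t)}) + {\cal G}(P^*)$. Using (A1+) at the approximate iterate $P^{(t+1)}$ — whose distance $D(P^{(t+1)} \| \hat P^{(t+1)})$ is controlled by combining \eqref{XP8}, \eqref{NXP}, and the Pythagorean theorem on the exponential family through $\bar P^{(t+1)}$ and $\hat P^{(t+1)}$ — then rewrites $J_\gamma(\hat P^{(t+1)}, \bar P^{(t)})$ as ${\cal G}(P^{(t+1)}) - \gamma D(P^{(t+1)} \| \bar P^{(t+1)}) - \epsilon_1$, leading to the per-step inequality
\begin{align*}
\gamma D(P^* \| \hat P^{(t+1)}) + {\cal G}(P^{(t+1)}) - \gamma D(P^{(t+1)} \| \bar P^{(t+1)}) - {\cal G}(P^*) - \epsilon_1 \le \gamma D(P^* \| \bar P^{(t)}).
\end{align*}

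Telescoping for $t = 1, \ldots, t_1 - 1$, dropping the nonnegative residual $\gamma D(P^* \| \hat P^{(t_1)})$, and using that $t_2$ minimizes $t \mapsto {\cal G}(P^{(t)}) - \gamma D(P^{(t)} \| \bar P^{(t)})$ yield
\begin{align*}
(t_1 - 1)\bigl[{\cal G}(P_f^{(t_1)}) - \gamma D(P^{(t_2)} \| \bar P^{(t_2)}) - {\cal G}(P^*) - \epsilon_1\bigr] \le \gamma D(P^* \| P^{(1)}),
\end{align*}
and combining with \eqref{NXP} to dominate $\gamma D(P^{(t_2)} \| \bar P^{(t_2)})$ by $\gamma \epsilon_2$ (through the Pythagorean step above) gives \eqref{XZWN}. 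The main obstacle will be the per-step inequality: reconciling the direction of the KL divergence in \eqref{NXP} with the hypothesis of (A1+) via Pythagorean decomposition on $\hat P^{(t+1)}$, $\bar P^{(t+1)}$, $P^{(t+1)}$, and verifying that the iterates remain in $\bar U(P^{0}, \delta)$ so that the approximate versions of the assumptions apply throughout — analogous to but more delicate than the corresponding invariance check underlying Theorem~\ref{TH1}.
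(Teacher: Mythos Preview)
Your treatment of \eqref{XP8} is correct and coincides with the paper's Step~1: both recognize the divergence between members of the exponential family $\{Q_\theta\}$ as the Bregman gap of the potential $\phi[\bar P^{(t)}]$, so \eqref{AMG} gives \eqref{XP8} directly.

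For \eqref{XZWN}, your per-step identity is also correct and is in fact equivalent to the paper's Lemma~\ref{LLX}. Expanding $J_\gamma(P^0,\bar P^{(t)})$ via \eqref{XMY2}, applying (A2+), and using the Pythagorean relation $D(P^0\|\bar P^{(t+1)})=D(P^0\|\hat P^{(t+1)})+D(\hat P^{(t+1)}\|\bar P^{(t+1)})$ reproduces exactly the inequality \eqref{XMZA} the paper derives from Lemma~\ref{LLX}. So your route and the paper's route arrive at the same per-step estimate.

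The genuine gap is the one you flag but do not resolve: maintaining $\bar P^{(t)}\in\bar U(P^0,\delta)$. In Theorem~\ref{TH1} invariance followed because the per-step decrement was nonnegative; here the decrement picks up a $-\epsilon_1$ offset, so it can be negative. The paper handles this by a case split on the event
\[
\frac{1}{\gamma}J_\gamma\bigl(\Gamma^{(m)}_{{\cal M}_a}[{\cal F}_3[\bar P^{(t)}]],\bar P^{(t)}\bigr)\le \frac{1}{\gamma}{\cal G}(P^0)+\epsilon_1.
\]
If this fails for every $t\le t_1-1$, the per-step decrement is nonnegative, invariance holds by induction, and your telescoping goes through (paper's Steps~3, 5, 6). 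If it holds at some first time $t_4$, one stops there: combining \eqref{XMY2}, (A1+), and the Pythagorean identity $D(P^{(t_4+1)}\|\bar P^{(t_4+1)})=D(P^{(t_4+1)}\|\hat P^{(t_4+1)})+D(\hat P^{(t_4+1)}\|\bar P^{(t_4+1)})$ gives ${\cal G}(P^{(t_4+1)})-\gamma D(P^{(t_4+1)}\|\bar P^{(t_4+1)})\le J_\gamma(\hat P^{(t_4+1)},\bar P^{(t_4)})$, which is already small (paper's Step~4). Without this dichotomy your direct telescoping cannot be justified, because (A2+) and (A1+) are only assumed on $\bar U(P^0,\delta)$.

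On your secondary worry about the direction of \eqref{NXP}: the Pythagorean step you describe does supply exactly what is needed, namely $D(P^{(t+1)}\|\hat P^{(t+1)})\le D(P^{(t+1)}\|\bar P^{(t+1)})$, and the paper's own argument (equation \eqref{ZPS}) uses $D(P^{(t+1)}\|\bar P^{(t+1)})\le\epsilon_2$ rather than the reversed form literally written in \eqref{NXP}; so this is a notational inconsistency in the paper rather than an obstacle to the method.
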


The above theorem is shown in Appendix \ref{S3-4}.
We discussed the convergences of Algorithms \ref{AL1} and \ref{AL3}
under several conditions.
When these conditions do not hold,
we cannot guarantee its global convergence but, the algorithms achieve a local minimum.
Hence, we need to repeat these algorithms by changing the initial value.
The mathematical symbols introduced in Section \ref{S2-2}
is summarized in Table \ref{symbols2}

\if0
\begin{corollary}\Label{Th5}
Assume that the conditions (A1) and (A4) hold, and
${\cal G}(P)$ is convex for mixture parameter in the neighborhood $U(P^{0},\delta)$.
Then, for $P^{(1)} \in U(P^*_i,\delta)$,
Algorithm \ref{AL1} satisfies
\begin{align}
{\cal G}(P^{(t_0+1)})
-{\cal G}(P^*_i)
\le 
\frac{\gamma D(P^*_i\| P^{(1)}) }{t_0}.\Label{XME2}
\end{align}
\end{corollary}

Due to Theorem \ref{Th5}, even when the condition (A2) does not hold
and there are several local minimizers, 
the convergence speed to the local minimizer is guaranteed by \eqref{XME2}
instead of \eqref{XME}
when the initial point belongs to its neighborhood.

We choose $\eta^{*,i}$ as $ P_{\eta^{*,k}}= P^*_k$ for $i=1, \ldots, n$.

When $\eta$ belongs to the neighborhood of $\eta^{*,i}$, 
$\Psi[P_{\eta}](x)$ can be expanded as 
\begin{align}
\Psi[P_{\eta}](x)
\cong & \Psi[P_{\eta^{*,i}}](x)
+\sum_{j=1}^k \frac{\partial }{\partial \eta_j }\Psi[P_\eta](x)
\Big|_{\eta=\eta^{*,i}}(\eta-\eta^{*,i})_j \\
&+
\frac{1}{2}\sum_{j=1}^k\sum_{j'=1}^k
\frac{\partial^2 }{\partial \eta_j \partial \eta_{j'}}\Psi[P_\eta](x)
\Big|_{\eta=\eta^{*,i}}(\eta-\eta^{*,i})_j(\eta-\eta^{*,i})_{j'},
\end{align}
we have
\begin{align}
& \sum_{x \in {\cal X}}P_{\eta^{*,i}}(x) 
(\Psi[P_{\eta^{*,i}}](x)-\Psi[P_{\eta}](x)) \\
\cong &
\frac{1}{2}\sum_{j=1}^k\sum_{j'=1}^k
J_{j,j'|\eta^{*,i}} (\eta-\eta^{*,i})_j(\eta-\eta^{*,i})_{j'} \Label{ACY}
\end{align}
Hence, 
when $\Big(\frac{\partial^2 }{\partial \eta_j \partial \eta_{j'}}{\cal G}(P_\eta)
\Big|_{\eta=\eta^{*,i}}\Big)_{j,j'}$ is strictly positive-definite, 
\eqref{ACY} is negative value 
in the neighborhood of $\eta^{*,i}$.
We have the following theorem.
\fi

\begin{table}[t]
\caption{List of mathematical symbols for Section \ref{S2-2}}
\label{symbols2}
\begin{center}
\begin{tabular}{|l|l|l|}
\hline
Symbol& Description & Eq. number  \\
\hline
$Q_{\theta}$ & Exponential family &
\eqref{E-fami} \\
\hline
$\phi[Q](\theta)$ & Potential function &\eqref{E-fami2}\\
\hline
\end{tabular}
\end{center}
\end{table}

\begin{remark}
To address the minimization with a cost constraint,
the paper \cite{YSM} added a linear penalty term to 
the objective function.
However, this method does not guarantee that the obtained result satisfies the required cost constraint.
Our method can be applied to any mixture family including the distribution family with cost constraint(s).
Hence, our method can be applied directly without the above modification
while we need to calculate the $e$-projection.
As explained in this subsection, this $e$-projection can be obtained with the convex minimization
whose number of variables is the number of the constraint to define the mixture family.
If the number of the constraints is not so large, still the $e$-projection is feasible.
\end{remark}

\subsection{Combination of the gradient method and the Algorithm \ref{AL1}}
\Label{S23}
Although we can use 
the gradient method to calculate \eqref{ZMX2} for a general mixture family ${\cal M}_a$,
in order to calculate $\overline{{\cal G}}(a):=\min_{P \in {\cal M}_a} {\cal G}(P)$
with $a \in \mathbb{R}^k$,
we propose another algorithm to combine
the gradient method and Algorithm \ref{AL1}.
This algorithm avoids the calculation of the $e$-projection
$\Gamma^{(e)}_{{\cal M}_a}$.
For simplicity, 
we assume that the function $\overline{{\cal G}}(a)$ is convex
and ${\cal M}_a$ is not empty,
and the aim is the calculation of $\overline{{\cal G}}(0)$.
In the following, we denote the expectation of the function $f$ 
under the distribution $P$ by $P[f]$.

Then, we consider the following functions by using 
Legendre transform; 
For $b=(b^1, \ldots, b^k)\in \mathbb{R}^k$ and 
$c=(c^1, \ldots, c^{l-k})\in \mathbb{R}^{l-k}$, we define
\begin{align}
{\cal G}_*(b,c)
:=& \sup_{P \in {\cal P}({\cal X})} \sum_{i=1}^k b^i P[f_i]
+\sum_{j=1}^{l-k} c^i P[f_{k+i}]-{\cal G}(P)  ,
\end{align} 
and
\begin{align}
\overline{{\cal G}}_*(b):=&{\cal G}_*(b,0)
=\sup_{P \in {\cal P}({\cal X})} \sum_{i=1}^k b^i P[f_i]-{\cal G}(P)  
=  \sup_{a \in \mathbb{R}^k} 
\sum_{i=1}^k b^i a_i -\overline{{\cal G}}(a).\Label{ASS}
\end{align} 
In the following, we consider the calculation of 
$\overline{{\cal G}}(0)$ by assuming that
the function $\eta \mapsto {\cal G}(P_\eta) $ is $C^2$-continuous and convex.
Since 
Legendre transform of $\overline{{\cal G}}_*(b)$ is 
$\overline{{\cal G}}(a)$ due to the convexity of $\overline{{\cal G}}(a)$,
we have 
$\sup_{b \in \mathbb{R}^k} 
\sum_{i=1}^k b^i a_i -\overline{{\cal G}}_*(b)
=\overline{{\cal G}}(a)$.
As a special case, 
we have
\begin{align}
-\inf_{b \in \mathbb{R}^k} 
\overline{{\cal G}}_*(b)
=\overline{{\cal G}}(0).\Label{ZMQ}
\end{align}
That is, when we find the minimizer
$b_*:=  \argmin_{a \in \mathbb{R}^k} 
\overline{{\cal G}}_*(b)$, we can calculate $\overline{{\cal G}}(0)$
as
$\overline{{\cal G}}(0)= 
-\sup_{P \in {\cal P}({\cal X})} \sum_{i=1}^k b_*^i P[f_i]-{\cal G}(P) 
=\inf_{P \in {\cal P}({\cal X})} {\cal G}(P)-\sum_{i=1}^k b_*^i P[f_i] $.

To find it, we denote the gradient vector of a function $f$ on $\mathbb{R}^k$
by $\nabla f$.
That is, $\nabla f$ is the vector
$
(\frac{\partial}{\partial x^1}f ,\ldots, \frac{\partial}{\partial x^k}f )$.
Then, 
we choose a real number $L$ that is larger than the matrix norm of the Hessian of 
$\overline{{\cal G}}_*$, 
which implies the uniform Lipschitz condition;
\begin{align}
\| \nabla \overline{{\cal G}}_*(b)- \nabla \overline{{\cal G}}_*(b')\| 
\le L \| b-b'\| .
\end{align}
Then, we apply the following update rule for the minimization of 
$\overline{{\cal G}}_*(b)$;
\begin{align}
b_{t+1}:= b_t -\frac{1}{L}\nabla \overline{{\cal G}}_*(b_t).\Label{1BVD}
\end{align}
The following precision is guaranteed
\cite[Chapter 10]{Beck} \cite{BT,Nesterov};
\begin{align}
|\overline{{\cal G}}_*(b_k)- \overline{{\cal G}}_*(b_*)| \le 
\frac{L}{2k} \| b_*-b_0\|^2.\Label{XMU}
\end{align}
We notice that
\begin{align}
\nabla \overline{{\cal G}}_*(b)= \argmax_{a \in \mathbb{R}^k} 
\sum_{i=1}^k b^i a_i -\overline{{\cal G}}(a)=
(Q_b[f_i])_{i=1}^k,\Label{ZMR}
\end{align}
where
\begin{align}
Q_b
:=\argmax_{P \in {\cal P}({\cal X})} \sum_{i=1}^k b^i P[f_i]-{\cal G}(P)  
=\argmin_{P \in {\cal P}({\cal X})} 
\sum_{x \in {\cal X}}P(x)
\Big(\Psi[P](x) -\sum_{i=1}^k b^i f_i(x)\Big)
\Label{NER}.
\end{align}
However, the calculation of \eqref{NER} requires a large calculation amount.
Hence, replacing the update rule \eqref{1BVD} 
by a one-step iteration in Algorithm \ref{AL1}, 
we propose another algorithm.

Using 
${\cal F}_{3}^{b}[Q](x):= \frac{1}{\kappa}Q(x)\exp(  -\frac{1}{\gamma}
\Big(\Psi[P](x) -\sum_{i=1}^k b^i f_i(x)\Big)$ with the normalizing constant $\kappa$,
we propose Algorithm \ref{AL4}.

\begin{algorithm}
\caption{Minimization of ${\cal G}(P)$}
\Label{AL4}
\begin{algorithmic}
\STATE {As inputs, we prepare 
the function $\Psi$,
$l$ linearly independent functions $f_1, \ldots, f_{l}$,
a positive number $\gamma>0$, and
the initial values 
$P^{(1)} \in \mathcal{M}_0$, $b_1 \in \mathbb{R}^k$;} 
\REPEAT 
\STATE Calculate $P^{(t+1)}:={\cal F}_{3}^{b_t}[P^{(t)}]$
and $b_{t+1}:= b_t- \frac{1}{L} (P^{(t+1)}[f_i])_{i=1}^k$;
\UNTIL{convergence if it converges. 
If it does not converge, we stop the algorithm at a certain point.
We denote the convergent by $(P^{(\infty)}, b_\infty)$.} 
\STATE{Output $P^{(\infty)}$ and ${\cal G}(P^{(\infty)})$. }
\end{algorithmic}
\end{algorithm}

It is not so easy to evaluate the convergence speed of Algorithm \ref{AL4}.
But, when it converges, the convergent point is the true minimizer.

\begin{theorem}
\if0
Assume that 
the function $\eta \mapsto {\cal G}(P_\eta) $ is $C^2$-continuous and convex.
Also, we assume that 
a real number $L$ is larger than the inverse of the smallest eigenvalue of  
the Hessian of ${\cal G}$.
\fi
When the pair $(b,P)$ is a convergence point,
we have $b=b_*$ and $P=P_*$.
\end{theorem}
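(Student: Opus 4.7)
The plan is to read off what convergence forces on $(b, P)$, rewrite these conditions as the KKT conditions for the Lagrangian relaxation ${\cal G}(Q) - \sum_{i=1}^k b^i Q[f_i]$ over ${\cal P}({\cal X})$, and then invoke the Legendre/Lagrangian duality already set up in \eqref{ASS}--\eqref{ZMQ} to identify $(b, P)$ with $(b_*, P_*)$.

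First I would extract the two stationarity conditions. Stationarity of the gradient step $b_{t+1} = b_t - \frac{1}{L}(P^{(t+1)}[f_i])_{i=1}^k$ forces $(P[f_i])_{i=1}^k = 0$, so $P \in {\cal M}_0$. Stationarity of the update $P = {\cal F}_3^{b}[P]$ rearranges, via $\kappa P(x) = P(x)\exp(-\tfrac{1}{\gamma}(\Psi[P](x) - \sum_i b^i f_i(x)))$ on the full support of $P$, to the pointwise identity $\Psi[P](x) - \sum_{i=1}^k b^i f_i(x) = -\gamma\log\kappa$ independent of $x$; call this constant $C$.

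Second, I would show $P = Q_b$ with $Q_b$ as in \eqref{NER}. A direct computation using the constancy of $\Psi[P](x) - \sum_i b^i f_i(x)$ yields
\begin{align}
\Bigl({\cal G}(Q) - \sum_{i=1}^k b^i Q[f_i]\Bigr) - \Bigl({\cal G}(P) - \sum_{i=1}^k b^i P[f_i]\Bigr) = \sum_{x \in {\cal X}} Q(x)\bigl(\Psi[Q](x) - \Psi[P](x)\bigr)
\end{align}
for every $Q \in {\cal P}({\cal X})$. By the standing convexity of $\eta \mapsto {\cal G}(P_\eta)$ together with (A4), Lemma~\ref{L6} delivers condition (B1) on ${\cal P}({\cal X})$; swapping the roles of $P$ and $Q$ in (B1) makes the right-hand side nonnegative, so $P$ minimizes ${\cal G}(Q) - \sum_i b^i Q[f_i]$ over ${\cal P}({\cal X})$, i.e., $P = Q_b$.

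Finally, by \eqref{ZMR}, $\nabla \overline{{\cal G}}_*(b) = (Q_b[f_i])_{i=1}^k = (P[f_i])_{i=1}^k = 0$. Since $\overline{{\cal G}}_*$ is convex as the Legendre transform of the convex $\overline{{\cal G}}$, this pins down $b = b_*$, and hence $P = Q_b = Q_{b_*}$. Combining the defining minimality of $Q_{b_*}$ with $Q_{b_*}[f_i] = 0$ and \eqref{ZMQ} yields ${\cal G}(Q_{b_*}) = -\overline{{\cal G}}_*(b_*) = \overline{{\cal G}}(0)$, so $Q_{b_*} \in {\cal M}_0$ achieves the defining minimum of $\overline{{\cal G}}(0)$, which gives $Q_{b_*} = P_*$ and thus $P = P_*$.

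The main obstacle will be the middle step: Lemma~\ref{L6} is stated with (A4) only on the $l-k$ free indices $i = k+1, \ldots, l$ of the mixture family ${\cal M}_a$, whereas the minimization here runs over the whole simplex ${\cal P}({\cal X})$ in which all $l$ mixture parameters are free. Either one argues that (A4) extends to all indices $i = 1, \ldots, l$ under the standing convexity hypothesis, or one bypasses Lemma~\ref{L6} entirely and checks first-order optimality directly in the mixture parameterization; the analytic crux is the vanishing of the cross term $\sum_x P(x)\,\partial_{\eta_j}\Psi[P_\eta](x)$ for the constrained indices $j \le k$. A secondary concern is uniqueness of $b_*$ in the final identification, which follows from strict convexity of $\overline{{\cal G}}$.
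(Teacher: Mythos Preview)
Your approach matches the paper's exactly: extract the two fixed-point equations, show that $P$ achieves the supremum in \eqref{NER} (i.e., $P=Q_b$), invoke \eqref{ZMR} to get $\nabla\overline{{\cal G}}_*(b)=0$, and finish via \eqref{ZMQ}. The paper's own proof is terser and simply asserts the implication $P={\cal F}_3^b[P]\Rightarrow$ \eqref{CMR} without argument; your detour through Lemma~\ref{L6} is an honest attempt to justify precisely that step, and the caveat you raise about needing (A4) on the constrained indices $j\le k$ is a genuine subtlety that the paper leaves unaddressed as well.
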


\begin{proof}
Since the pair $(b,P)$ is a convergence point,
we have $P={\cal F}_{3}^b[P]$, which implies
\begin{align}
 \sum_{i=1}^k b^i P[f_i]-{\cal G}(P)  
=\sup_{P' \in {\cal P}({\cal X})} \sum_{i=1}^k b^i P'[f_i]-{\cal G}(P')
=\overline{{\cal G}}_*(b). \Label{CMR}
\end{align}
Since the pair $(b,P)$ is a convergence point,
we have $ P[f_i] =0$ for $i=1, \ldots, k$, i.e.,
the distribution $P$ satisfies the required condition in \eqref{MDP}.
the relation \eqref{ZMR} implies
$\nabla \overline{{\cal G}}_*(b)=0$. Hence, \eqref{ZMQ} yields
$\overline{{\cal G}}_*(b)=\overline{{\cal G}}(0)$, which implies $b=b_*$.
Therefore, the relation \eqref{CMR} is rewritten as ${\cal G}(P)  =\overline{{\cal G}}(0)$, which implies $P=P_*$.
\end{proof}

\begin{remark}
We compare our algorithm with 
a general algorithm proposed in \cite{YSM}.
The input of the objective function in \cite{YSM} forms a mixture family.
The function $f$ given in \cite[(6)]{YSM}
satisfies the condition of ${\cal G}$
by considering the second line of \cite[(6)]{YSM} as $\Psi$.
Their algorithm is the same as Algorithm \ref{AL1} with $\gamma=1$
when there is no constraint
because their extended objective function $g$ defined in \cite[(16)]{YSM}
can be considered as $D(P\|Q)+ \sum_{x \in {\cal X}}P(x) \Psi[Q](x)$,
where the choice of $q$ in \cite{YSM} corresponds to the choice of $P$
and the choice of $Q_1,\ldots, Q_K$ in \cite{YSM} does to the choice of $Q$.

Also, we can show that the function $f$ given in \cite[(6)]{YSM} satisfies the condition (A4).
Since the condition (A4) holds, the convexity of $f$ is equivalent to
the condition (B1). This equivalence, in this case, was shown as \cite[Proposition 4.1]{YSM}.
They showed the convergence of their algorithm as \cite[Theorem 4.1]{YSM},
which can be considered as a special case of our Theorem \ref{TH1}.

However, our treatment for the constraint is different from theirs.
They consider the minimization
$\min_{P \in {\cal P}({\cal X})} {\cal G}(P)-\sum_{i=1}^k b^i P[f_i] $
without updating the parameter $b$. Hence, their algorithm cannot achieve
the minimum with the desired constraint
while Algorithms \ref{AL1}, \ref{AL3}, and \ref{AL4}
achieve the minimum with the desired constraint.
Although their algorithm is similar to Algorithm \ref{AL4},
Algorithm \ref{AL4} updates the parameter $b$ to 
achieve the minimum with the desired constraint.
\end{remark}

\section{Application to information theoretical problems}\Label{S4}
\subsection{Channel capacity}\Label{S4-1}
In the same way as the reference \cite{RISB}, we apply our problem setting to the channel coding.
A channel is given as a conditional distributions $W_{Y|X=x}$ on 
the sample space ${\cal Y}$ with conditions on the sample space ${\cal X}$, 
where ${\cal Y}$ is a general sample space with a measure $\mu$ and ${\cal X}$
is a finite sample space.
For two absolutely continuous distributions $P_Y$ and $Q_Y$ with respect to $\mu$
on ${\cal Y}$,
the Kullback-Leibler divergence $D(P_Y\|Q_Y)$ is given as
\begin{align}
D(P_Y\|Q_Y):= \int_{{\cal Y}}p_Y(y) (\log p_Y(y)-\log q_Y(y))\mu(dy),
\end{align}
where $p_Y$ and $q_Y$ are the probability density functions of $P_Y$ and $Q_Y$ with respect to $\mu$.
This quantity is generalized to the Renyi divergence with order $\alpha>0$ 
as
\begin{align}
D_\alpha(P_{Y} \| Q_Y):=
\frac{1}{\alpha-1}\log \int_{{\cal Y}} \Big(\frac{p_Y(y)}{q_Y(y)}\Big)^{\alpha-1}p_Y(y) \mu(dy).
\end{align}
The channel capacity $C(W_{Y|X})$ is given as the maximization of the mutual information $I(P_X,W_{Y|X})$ as \cite{Shannon}
\begin{align}
C(W_{Y|X})&:=\max_{P_X} I(P_X,W_{Y|X}) \Label{CMD}\\
I(P_X,W_{Y|X})&:=\sum_{x \in {\cal X}}P_X(x) D(W_{Y|X=x}\| W_{Y|X} \cdot P_X) \nonumber \\
&= D(W_{Y|X} \times P_X \| (W_{Y|X} \cdot P_X) \times P_X),
\end{align}
where $W_{Y|X} \cdot P_X$ and $W_{Y|X} \times P_X$ are defined as the following probability density functions
$w_{Y|X} \cdot P_X$ and $w_{Y|X} \times P_X$;
\begin{align}
(w_{Y|X} \cdot P_X)(y) &:= \sum_{x \in {\cal X}}P_X(x) w_{Y|X=x}(y) \\
(w_{Y|X} \times P_X)(x, y)& := P_X(x) w_{Y|X=x}(y) .
\end{align}
However, the mutual information $I(P_X,W_{Y|X})$ has another form as 
\begin{align}
I(P_X, W_{Y|X})= \min_{Q_Y} \sum_{x \in {\cal X}}P_X(x) D(W_{Y|X=x}\| Q_{Y}).
\Label{Eq82}
\end{align}
When we choose ${\cal M}_a$ and  $\Psi$ as ${\cal P}({\cal X})$ and 
\begin{align}
\Psi_{W_{Y|X}}[P_X](x):= - D(W_{Y|X=x}\| W_{Y|X} \cdot P_X),
\end{align}
$-I(P_X,W_{Y|X})$ coincides with ${\cal G}(P_X)$ \cite{RISB}.
Since
\begin{align}
D_{\Psi}(P_X\| Q_X)
=
D(W_{Y|X} \cdot P_X\| W_{Y|X} \cdot Q_X) \ge 0,
\end{align}
the condition (A2) holds with ${\cal P}({\cal X})$.
In addition, since the information processing inequality guarantees that
\begin{align}
D(W_{Y|X} \cdot P_X\| W_{Y|X} \cdot Q_X) \le D(P_X\|Q_X),
\end{align}
the condition (A1) holds with $\gamma=1$ and ${\cal P}({\cal X})$.
In this case, ${\cal F}_3$ is given as 
\begin{align}
{\cal F}_3[Q_X](x)=\frac{1}{\kappa_{W_{Y|X}}[Q_X]}Q_X(x) \exp (  \frac{1}{\gamma}  
D(W_{Y|X=x}\| W_{Y|X} \cdot Q_X)),
\end{align}
where the normalizing constant $\kappa_{W_{Y|X}}[Q_X]$ is given as
\begin{align}
\kappa_{W_{Y|X}}[Q_X]=\sum_{x \in {\cal X}}Q_X(x) 
\exp (  \frac{1}{\gamma}  D(W_{Y|X=x}\| W_{Y|X} \cdot Q_X)).
\end{align}
When $\gamma=1$, it coincides with the Arimoto-Blahut algorithm \cite{Arimoto,Blahut}. 
Since ${\cal F}_3[Q_X] \in {\cal P}({\cal X})$, 
$P_X^{(t+1)}$ is given as ${\cal F}_3[P_X^{(t)}]$.

\begin{remark}
The reference \cite{RISB} covers the case when
${\cal M}_a$ is given as ${\cal P}({\cal X})$, and
the reference \cite{RISB} presented
the algorithms presented in this subsection 
in a more general form.
Also, they proposed an adaptive choice of $\gamma$ in this case \cite[(22)]{RISB}.
In addition, they numerically compared 
their adaptive choice with the case of $\gamma=1$ \cite[Figs. 1,..., 6]{RISB}.
These comparisons show a significant improvement by their adaptive choice.
\end{remark}

\if0
\subsubsection{2nd type of application}
When we choose ${\cal M}_a$ and  $\Psi$ as 
${\cal M}_a_{P_{Y|X}}:=\{P_{Y|X}\times P_X| P_X \in {\cal P}({\cal X})\}$
and 
\begin{align}
\Psi[P_{Y|X}\times P_X](x,y)
:=& - \log (p_{Y|X}(y|x)P_X(x))
+\log (p_{Y|X} \cdot P_X)(y)+\log P_X(x) \\
=& - \log (p_{Y|X}(y|x))+\log (p_{Y|X} \cdot P_X)(y) ,
\end{align}
$-I(P_X)$ coincides with ${\cal G}(P_X)$.
Since
\begin{align}
&\int_{{\cal Y}}\sum_{x\in {\cal X}} p_{Y|X}(y|x) P_X(x) 
(\Psi[P_{Y|X}\times P_X](x,y)-\Psi[P_{Y|X}\times Q_X](x,y)) \mu(dy) \\
=&
D(P_{Y|X} \cdot P_X\| P_{Y|X} \cdot Q_X) \ge 0,
\end{align}
the condition (A2) holds with ${\cal M}_a={\cal M}_a_{P_{Y|X}}$.
In addition, the condition (A1) holds with $\gamma=1$ and ${\cal M}_a={\cal M}_a_{P_{Y|X}}$.
In this case, ${\cal F}_3$ is given as 
\begin{align}
&{\cal F}_3[P_{Y|X}\times P_X](x,y)\\
=&
e^{ -\frac{1}{\gamma} D_{1+\frac{1}{\gamma}}(P_{Y|X}\times P_X \| 
(P_{Y|X} \cdot P_X) \times P_X } 
p_{Y|X} (y|x)P_X(x) 
\exp ( \frac{1}{\gamma}  
\log (p_{Y|X}(y|x))-\log (p_{Y|X} \cdot P_X)(y)) \\
=&
e^{ -\frac{1}{\gamma} D_{1+\frac{1}{\gamma}}(P_{Y|X}\times P_X \| 
(P_{Y|X} \cdot P_X) \times P_X } 
p_{Y|X} (y|x) ^{1+\frac{1}{\gamma}} P_X(x) 
(p_{Y|X} \cdot P_X)(y)^{-\frac{1}{\gamma}} \\
=&
e^{ -\frac{1}{\gamma} D_{1+\frac{1}{\gamma}}(P_{Y|X}\times P_X \| 
(P_{Y|X} \cdot P_X) \times P_X } 
e^{-\frac{1}{\gamma} D_{1+\frac{1}{\gamma}}(P_{Y|X=x} \| P_{Y|X} \cdot P_X)  } 
p_{Y|X} (y|x) ^{1+\frac{1}{\gamma}} 
(p_{Y|X} \cdot P_X)(y)^{-\frac{1}{\gamma}} 
e^{ \frac{1}{\gamma} D_{1+\frac{1}{\gamma}}(P_{Y|X=x} \| P_{Y|X} \cdot P_X)  } 
P_X(x) .
\end{align}
Then, we have
\begin{align}
\Gamma^{(e)}_{{\cal M}_a_{P_{Y|X}}}[{\cal F}_3[P_{Y|X}\times Q_X]]
=
\end{align}

\begin{align}
&\Gamma^{(e)}_{{\cal M}_a_{P_{Y|X}}}[{\cal F}_3[P_{Y|X}\times Q_X]](x)\\
=&
Q_X(x) 
e^{ \frac{1}{\gamma} D_{1+\frac{1}{\gamma}}(P_{Y|X=x} \| P_{Y|X} \cdot P_X)   -\frac{1}{\gamma} D_{1+\frac{1}{\gamma}}(P_{Y|X}\times P_X \| 
(P_{Y|X} \cdot P_X) \times P_X )} \\
&\cdot \exp
\Big( -\int_{{\cal Y}} p_{Y|X=x}(y) (\log p_{Y|X=x}(y)
-\log e^{-\frac{1}{\gamma} D_{1+\frac{1}{\gamma}}(P_{Y|X=x} \| P_{Y|X} \cdot P_X)  } 
p_{Y|X} (y|x) ^{1+\frac{1}{\gamma}} )
(p_{Y|X} \cdot P_X)(y)^{-\frac{1}{\gamma}} 
\Big)\\
=&
Q_X(x) 
e^{ \frac{1}{\gamma} D_{1+\frac{1}{\gamma}}(P_{Y|X=x} \| P_{Y|X} \cdot P_X)   -\frac{1}{\gamma} D_{1+\frac{1}{\gamma}}(P_{Y|X}\times P_X \| 
(P_{Y|X} \cdot P_X) \times P_X )} \\
&\cdot \exp
\Big(- \int_{{\cal Y}} p_{Y|X=x}(y) (-\frac{1}{\gamma}\log p_{Y|X=x}(y)
+\frac{1}{\gamma} D_{1+\frac{1}{\gamma}}(P_{Y|X=x} \| P_{Y|X} \cdot P_X) 
+\frac{1}{\gamma}\log (p_{Y|X} \cdot P_X)(y))
\Big)\\
=&
Q_X(x) 
e^{ \frac{1}{\gamma} D_{1+\frac{1}{\gamma}}(P_{Y|X=x} \| P_{Y|X} \cdot P_X)   -\frac{1}{\gamma} D_{1+\frac{1}{\gamma}}(P_{Y|X}\times P_X \| 
(P_{Y|X} \cdot P_X) \times P_X )} \\
&\cdot \exp
\Big( \frac{1}{\gamma} D(P_{Y|X=x} \| P_{Y|X} \cdot P_X)
-\frac{1}{\gamma} D_{1+\frac{1}{\gamma}}(P_{Y|X=x} \| P_{Y|X} \cdot P_X) 
\Big) \\
=&
Q_X(x) 
e^{ -\frac{1}{\gamma} D_{1+\frac{1}{\gamma}}(P_{Y|X}\times P_X \| 
(P_{Y|X} \cdot P_X) \times P_X )
+ \frac{1}{\gamma} D(P_{Y|X=x} \| P_{Y|X} \cdot P_X)} .
\end{align}

\begin{lemma}
We have 
\begin{align}
\Gamma^{(e)}_{{\cal M}_a_{P_{Y|X}}}[Q_{Y|X} \times Q_X]
=P[Q_{Y|X} \times Q_X]
\end{align}
where
\begin{align}
P[Q_{Y|X} \times Q_X]=\frac{1}{\kappa[Q_{Y|X} \times Q_X]}Q_X(x) e^{-D(P_{Y|X=x}\| Q_{Y|X=x})}
\end{align}
where $\kappa[Q_{Y|X} \times Q_X]:=\sum_{x\in {\cal X}}Q_X(x) e^{-D(P_{Y|X=x}\| Q_{Y|X=x})}$
\end{lemma}
\begin{proof}
\begin{align}
&D( P_{Y|X} \times P_X|  Q_{Y|X} \times Q_X)\\
=&
D( P_X| Q_X)+ \sum_{x\in {\cal X}}P_X(x) D(P_{Y|X=x}\| Q_{Y|X=x}) \\
=&
\sum_{x\in {\cal X}}P_X(x) \Big(\log P_X(x) -\log Q_X(x)+D(P_{Y|X=x}\| Q_{Y|X=x}) \Big)\\
=&
\sum_{x\in {\cal X}}P_X(x) \Big(\log P_X(x) -\log 
\Big(\frac{1}{C}Q_X(x) e^{-D(P_{Y|X=x}\| Q_{Y|X=x})}\Big)
-\log C.
\end{align}
\end{proof}
\fi

\subsection{Reliability function in channel coding}\Label{S4-2}
In channel coding, we consider the reliability function, which was originally introduced by 
Gallager \cite{Gallager} and expresses the exponential decreasing rate of 
an upper bound of the decoding block error probability under the random coding.
To achieve this aim, for $\alpha > 0 $, we define
\begin{align}
I_{\alpha}(P_X,W_{Y|X}):= \frac{\alpha}{\alpha-1}\log \Big(
\int_{{\cal Y}} \Big(\sum_{x \in {\cal X}} P_X(x) w_{Y|X=x}(y)^{\alpha}\Big)^{\frac{1}{\alpha}} \mu(dy)
\Big).
\end{align}
Then, when the code is generated with the random coding based on the distribution $P_X$,
the decoding block error probability with coding rate $R$ is upper bounded by the following quantity;
\begin{align}
e^{n \min_{\rho \in [0,1]}
\big(\rho R -\rho I_{\frac{1}{1+\rho}}(P_X,W_{Y|X})\big)}
\end{align}
when we use the channel $W_{Y|X}$ with $n$ times.
Notice that $e^{-\rho I_{\frac{1}{1+\rho}}(P_X,W_{Y|X})}=
\int_{{\cal Y}} \Big(\sum_{x \in {\cal X}} P_X(x) w_{Y|X=x}(y)^{\frac{1}{1+\rho}}\Big)^{1+\rho} \mu(dy)$.
That is, the Gallager function \cite{Gallager} is given as
$\rho I_{\frac{1}{1+\rho}}(P_X,W_{Y|X})$, i.e., 
the parameter $\alpha$ is different from 
the parameter $\rho$ in the Gallager function.
Taking the minimum for the choice of $P_X$, we have
\begin{align}
\min_{P_X} e^{\min_{\rho \in [0,1]}\rho R -\rho I_{\frac{1}{1+\rho}}(P_X,W_{Y|X})}
=
\min_{\alpha \in [1/2,1]} \Big(e^{-\frac{\alpha-1}{\alpha} R} \min_{P_X} 
e^{\frac{\alpha-1}{\alpha} I_{\alpha}(P_X,W_{Y|X})}\Big)
\end{align}
with $\alpha=\frac{1}{1-\rho}\in [1/2,1]$.
Therefore, we consider the following minimization;
\begin{align}
\min_{P_X} e^{\frac{\alpha-1}{\alpha} I_{\alpha}(P_X,W_{Y|X})}
=\min_{P_X}
\int_{{\cal Y}} \Big(\sum_{x \in {\cal X}} P_X(x) w_{Y|X=x}(y)^{\alpha}\Big)^{\frac{1}{\alpha}} \mu(dy)
\Label{AMP}.
\end{align} 
In the following, we discuss the RHS of \eqref{AMP} with $\alpha \in [1/2,1]$.

To apply our method, 
as a generalization of \eqref{Eq82},
we consider another expression of $I_{\alpha}(P_X,W_{Y|X})$;
\begin{align}
I_{\alpha}(P_X,W_{Y|X})= \min_{Q_Y} D_\alpha(W_{Y|X}\times P_X \| Q_Y\times P_X),
\Label{MXP}
\end{align}
which was shown in \cite[Lemma 2]{H15}.
Using 
\begin{align}
Q_{Y|\alpha,P_X}:=& \argmin_{Q_Y} D_\alpha(W_{Y|X}\times P_X \| Q_Y\times P_X) \nonumber\\
=&\argmax_{Q_Y}\sum_{x\in {\cal X}}P_X(x)e^{ (\alpha-1)D_\alpha(W_{Y|X=x} \| Q_{Y})},\Label{XCP}
\end{align}
we have
\begin{align}
\Big(\min_{P_X} e^{\frac{\alpha-1}{\alpha} I_{\alpha}(P_X,W_{Y|X})}\Big)^\alpha
=
 \min_{P_X}
\sum_{x\in {\cal X}}P_X(x)e^{ (\alpha-1) D_\alpha(W_{Y|X=x} \| Q_{Y|\alpha,P_X})}.
\Label{APT}
\end{align} 
The probability density function $q_{Y|\alpha,P_X}$ of
the minimizer $Q_{Y|\alpha,P_X}$ is calculated as
\begin{align}
q_{Y|\alpha,P_X}(y)=
C 
\Big(\sum_{x \in {\cal X}} P_X(x) w_{Y|X=x}(y)^{\alpha}\Big)^{\frac{1}{\alpha}} ,
\end{align}
where $C$ is the normalized constant \cite[Lemma 2]{H15}.

To solve the minimization \eqref{APT}, we apply our method to the case when 
we choose ${\cal M}_a$ and  $\Psi$ as ${\cal P}({\cal X})$ and 
\begin{align}
\Psi_{\alpha,W_{Y|X}}[P_X](x):= e^{(\alpha-1) D_\alpha(W_{Y|X=x} \| Q_{Y|\alpha,P_X})}.
\end{align}
Since \eqref{XCP} guarantees that
\begin{align}
&\sum_{x\in {\cal X}}P_X(x) (\Psi_{\alpha,W_{Y|X}}[P_X](x)
-\Psi_{\alpha,W_{Y|X}}[Q_X](x)) \nonumber \\
=&
\sum_{x\in {\cal X}}P_X(x)
\Big(e^{ (\alpha-1)D_\alpha(W_{Y|X=x} \| Q_{Y|\alpha,P_X})}
-
e^{ (\alpha-1)D_\alpha(W_{Y|X=x} \| Q_{Y|\alpha,Q_X})}\Big)
 \ge 0,
\end{align}
the condition (A2) holds with ${\cal P}({\cal X})$.
The condition (A1) can be satisfied with sufficiently large $\gamma$.
In this case, ${\cal F}_3$ is given as 
\begin{align}
{\cal F}_{3,\alpha}[Q_X](x)=\frac{1}{\kappa_{\alpha,W_{Y|X}}[Q_X]}Q_X(x) \exp (  -\frac{1}{\gamma}  
e^{(\alpha-1) D_\alpha(W_{Y|X=x} \| Q_{Y|\alpha,P_X})}),
\end{align}
where the normalizing constant $\kappa_{\alpha,W_{Y|X}}[Q_X]$ is given as
$\kappa_{\alpha,W_{Y|X}}[Q_X]=$\par\noindent
$\sum_{x \in {\cal X}}Q_X(x)  \exp (  -\frac{1}{\gamma}  
e^{(\alpha-1) D_\alpha( W_{Y|X=x} \| Q_{Y|\alpha,P_X})})$.
Since ${\cal F}_{3,\alpha}[Q_X] \in {\cal M}_a$, 
$P_X^{(t+1)}$ is given as ${\cal F}_{3,\alpha}[P_X^{(t)}]$.

\subsection{Strong converse exponent in channel coding}\Label{S4-3}
In channel coding, we discuss an upper bound of the probability of correct decoding.
This probability is upper bounded by  
the following quantity;
\begin{align}
\max_{P_X} e^{n \min_{\rho \in [0,1]} 
\Big(-\rho R +\rho I_{\frac{1}{1-\rho}}(P_X,W_{Y|X})\Big)}
\end{align}
when we use the channel $P_{Y|X}$ with $n$ times and the coding rate is $R$ \cite{Arimoto2}.
Therefore, we consider the following maximization;
\begin{align}
\max_{P_X} e^{\rho I_{\frac{1}{1-\rho}}(P_X,W_{Y|X})}
=\max_{P_X} e^{\frac{\alpha-1}{\alpha} I_{\alpha}(P_X,W_{Y|X})}
\Label{AMP2}
\end{align} 
with $\alpha=\frac{1}{1-\rho}>1$.
In the following, we discuss the RHS of \eqref{AMP2} with $\alpha >1$.

To apply our method, we consider another expression \eqref{MXP} of 
$I_{\alpha}(P_X,W_{Y|X})$.
Using \eqref{XCP}, we have
\begin{align}
\Big(\max_{P_X} e^{\frac{\alpha-1}{\alpha} I_{\alpha}(P_X,W_{Y|X})}\Big)^\alpha
=
 \max_{P_X}
\sum_{x\in {\cal X}}P_X(x)e^{ (\alpha-1) D_\alpha(W_{Y|X=x} \| Q_{Y|\alpha,P_X})}.
\Label{APT2}
\end{align} 
The maximization \eqref{APT2} can be solved by choosing 
${\cal M}_a$ and  $\Psi$ as ${\cal P}({\cal X})$ and 
\begin{align}
\Psi_{\alpha,W_{Y|X}}[P_X](x):= - e^{(\alpha-1) D_\alpha(W_{Y|X=x} \| Q_{Y|\alpha,P_X})}.
\end{align}
Since \eqref{XCP} guarantees that
\begin{align}
&\sum_{x\in {\cal X}}P_X(x) (\Psi_{\alpha,W_{Y|X}}[P_X](x)
-\Psi_{\alpha,W_{Y|X}}[Q_X](x)) \nonumber \\
=&
\sum_{x\in {\cal X}}P_X(x)
\Big(-e^{ (\alpha-1)D_\alpha(W_{Y|X=x} \| Q_{Y|\alpha,P_X})}
+
e^{ (\alpha-1)D_\alpha(W_{Y|X=x} \| Q_{Y|\alpha,Q_X})}\Big)
 \ge 0,
\end{align}
the condition (A2) holds with ${\cal P}({\cal X})$.
Similarly, the condition (A1) can be satisfied with sufficiently large $\gamma$.
In this case, ${\cal F}_3$ is given as 
\begin{align}
{\cal F}_{3,\alpha}[Q_X](x)=\frac{1}{\kappa_{\alpha,W_{Y|X}}[Q_X]}Q_X(x) \exp ( \frac{1}{\gamma}  
e^{(\alpha-1) D_\alpha( W_{Y|X=x} \| Q_{Y|\alpha,P_X})}),
\end{align}
where the normalizing constant $\kappa_{\alpha,W_{Y|X}}[Q_X]$ is given as
$\kappa_{\alpha,W_{Y|X}}[Q_X]=$\par\noindent
$\sum_{x \in {\cal X}}Q_X(x)  \exp ( \frac{1}{\gamma}  
e^{(\alpha-1) D_\alpha(W_{Y|X=x} \| Q_{Y|\alpha,P_X})})$.
Since ${\cal F}_{3,\alpha}[Q_X] \in {\cal M}_a$, 
$P_X^{(t+1)}$ is given as ${\cal F}_{3,\alpha}[P_X^{(t)}]$.

\subsection{Wiretap channel capacity}
\subsubsection{General case}
Given a pair of a channel $W_{Y|X}$ from ${\cal X}$ to a legitimate user ${\cal Y}$
and 
a channel $W_{Z|X}$ from ${\cal X}$ to a malicious user ${\cal Z}$,
the wiretap channel capacity is given as \cite{Wyner,CK79}
\begin{align}
C(W_{Y|X},W_{Z|X}):=\max_{P_{VX}} I(P_V, W_{Y|X}\cdot P_{X|V})-I(P_V, W_{Z|X}\cdot P_{X|V})\Label{XAT}
\end{align}
with a sufficiently large discrete set ${\cal V}$.
The recent papers showed that the above rate can be achieved even with 
strong security \cite{Csisz,H06,H11} and the semantic security \cite{BTV,HM16}.
Furthermore, the paper \cite[Appendix D]{HM16} showed the above even when 
the output systems are general continuous systems including Gaussian channels.
The wiretap capacity \eqref{XAT} can be calculated via the minimization;
\begin{align}
\min_{P_{VX}} -I(P_V, W_{Y|X}\cdot P_{X|V})+I(P_V, W_{Z|X}\cdot P_{X|V}).\Label{XAT2}
\end{align}
Here, ${\cal V}$ is an additional discrete sample space. 
When we choose ${\cal M}_a$ and  $\Psi$ as ${\cal P}({\cal X} \times {\cal V})$ and 
\begin{align}
&\Psi_{W_{Y|X},W_{Z|X}}[P_{VX}](v,x)\nonumber \\
:= 
&D(W_{Z|X}\cdot P_{X|V=v}\|W_{Z|X}\cdot P_{X} ) 
-D(W_{Y|X}\cdot P_{X|V=v}\|W_{Y|X}\cdot P_{X} ),
\end{align}
$-I(P_V, W_{Y|X}\cdot P_{X|V})+I(P_V, W_{Z|X}\cdot P_{X|V})$ coincides with ${\cal G}(P_{VX})$.
Hence, the general theory in Section \ref{setup} can be used for the minimization \eqref{XAT2}.
\if0
\begin{align}
&D_{\Psi_{W_{Y|X},W_{Z|X}}[P_{VX}]}(P_{VX}\|Q_{VX}) \\
=&
\sum_{x,v} P_{VX}(v,x)
\Big(D(W_{Z|X}\cdot P_{X|V=v}\|W_{Z|X}\cdot P_{X} ) \\
&-D(W_{Y|X}\cdot P_{X|V=v}\|W_{Y|X}\cdot P_{X} )
-D(W_{Z|X}\cdot Q_{X|V=v}\|W_{Z|X}\cdot Q_{X} ) \\
&+D(W_{Y|X}\cdot Q_{X|V=v}\|W_{Y|X}\cdot Q_{X} )\Big)\\
\le &
\sum_{v} P_{V}(v)
\Big(D(W_{Z|X}\cdot P_{X|V=v}\|W_{Z|X}\cdot P_{X} ) \\
&+D(W_{Y|X}\cdot Q_{X|V=v}\|W_{Y|X}\cdot Q_{X} )\Big)\\
\le &
\sum_{v} P_{V}(v)
\Big(D(P_{X|V=v}\| P_{X} ) +D(Q_{X|V=v}\|Q_{X} )\Big)\\
\end{align}

$D_{\Psi}(P \|Q)=\sum_{x \in {\cal X}} P(x) (\Psi[P](x)-\Psi[Q](x)) \ge 0$
\fi
In this case, it is difficult to clarify whether the conditions (A1) and (A2) hold in general.
${\cal F}_3$ is given as 
\begin{align}
&{\cal F}_3[Q_{VX}](v,x)\nonumber\\
=&\frac{1}{\kappa_{W_{Y|X},W_{Z|X}}[Q_{VX}]}Q_{VX}(v,x) 
\exp \Big(  \frac{1}{\gamma}  \Big( D(W_{Y|X}\cdot P_{X|V=v}\|W_{Y|X}\cdot P_{X} )\nonumber\\
&-D(W_{Z|X}\cdot P_{X|V=v}\|W_{Z|X}\cdot P_{X} ) \Big) \Big),
\end{align}
where $\kappa_{W_{Y|X},W_{Z|X}}[Q_{VX}]$ is the normalizing constant.
Since ${\cal F}_3[Q_X] \in {\cal M}_a$, 
$P_X^{(t+1)}$ is given as ${\cal F}_3[P_X^{(t)}]$.

\subsubsection{Degraded case}\Label{S4-5-2}
However, when there exists a channel $W_{Z|Y}$ from ${\cal Y}$ to ${\cal Z}$ such that
$ W_{Z|Y} \cdot W_{Y|X}=W_{Z|X}$, i.e., the channel $W_{Z|X}$ is a degraded channel of $W_{Y|X}$,
we can define the joint channel $W_{YZ|X}$ with the following conditional probability density function 
\begin{align}
w_{YZ|X}(yz|x):= w_{Z|Y}(z|y)w_{Y|X}(y|x).
\end{align}
Then, the maximization \eqref{XAT} is simplified as
\begin{align}
C(W_{YZ|X}):=\max_{P_{X}} I(X;Y|Z)[P_X, W_{YZ|X}] \Label{ZLO}
\end{align} 
where the conditional mutual information is given as 
\begin{align}
I(X;Y|Z)[P_X, W_{YZ|X}] := 
\sum_{x,z} P_{XZ}(x,z) D(P_{Y|X=x,Z=z}\| P_{Y|Z=z}),
\Label{XATE}
\end{align} 
where the conditional distributions $P_{Y|XZ}$ and $P_{Y|Z}$
are defined from the joint distribution $W_{YZ|X}\times P_X$.
To consider \eqref{ZLO}, we consider
the following minimization
with a general two-output channel $W_{YZ|X}$;
\begin{align}
\min_{P_{X}} - I(X;Y|Z)[P_X, W_{YZ|X}] \Label{ZLO2}.
\end{align} 
When we choose ${\cal M}_a$ and  $\Psi$ as ${\cal P}({\cal X})$ and 
\begin{align}
\Psi_{W_{YZ|X}}[P_{X}](x):= 
- \sum_{z} P_{Z|X=x}(z) D(P_{Y|X=x,Z=z}\| P_{Y|Z=z}).
\end{align}
$- I(X;Y|Z)[P_X, W_{YZ|X}]$ coincides with ${\cal G}(P_{X})$.
Hence, the general theory in Section \ref{setup} can be used for the minimization \eqref{XAT2}.
In this case, as shown in Subsection \ref{MXT}, the conditions (A1) with $\gamma=1$ and (A2) hold.
${\cal F}_3$ is given as 
\begin{align}
&{\cal F}_3[Q_{X}](x)\nonumber\\
=&\frac{1}{\kappa_{W_{YZ|X}}[Q_{X}]}Q_{X}(x) 
\exp \Big(  \frac{1}{\gamma}  \Big( \sum_{z} P_{Z|X=x}(z) D(P_{Y|X=x,Z=z}\| P_{Y|Z=z}) \Big) \Big),
\end{align}
where $\kappa_{W_{YZ|X}}[Q_{X}]$ is the normalizing constant.
Since ${\cal F}_3[Q_X] \in {\cal M}_a$, 
$P_X^{(t+1)}$ is given as ${\cal F}_3[P_X^{(t)}]$.
The above algorithm with $\gamma=1$ coincides with 
the algorithm proposed by \cite{Yasui}.

\subsection{Capacities with cost constraint}\Label{S4-4}
Next, we consider the case when a cost constraint is imposed.
Consider a function $f$ on ${\cal X}$ and the following constraint for a distribution 
$P_X \in {\cal X}$;
\begin{align}
P_X[f]=a.\Label{COS}
\end{align}
We define ${\cal M}_a$ by imposing the condition \eqref{COS}
as a special case of \eqref{MDP}.
The capacity of the channel $W_{Y |X}$ 
under the cost constraint is given as
$\max_{P_X \in {\cal M}_a}I(P_X,W_{Y|X})$.
That is, we need to solve the minimization 
$\min_{P_X \in {\cal M}_a}-I(P_X,W_{Y|X})$.
In this case, the $t+1$-th distribution $P^{(t+1)}$ is given as
$\Gamma_{{\cal M}_a}^{(e)} [{\cal F}_3[P_X^{(t)}]]$.
Since $\Gamma_{{\cal M}_a}^{(e)} [{\cal F}_3[P_X^{(t)}]]$ cannot be calculated analytically,
we can use Algorithm \ref{AL3} instead of Algorithm \ref{AL1}.
Since conditions (A1) with $\gamma=1$ and (A2)
hold, Theorem \ref{TH8} guarantees the global convergence to the minimum in Algorithm \ref{AL3}.

We can consider the cost constraint \eqref{COS} for 
the problems \eqref{APT} and \eqref{APT2}.
In these cases, we have a similar modification by considering 
$\Gamma_{{\cal M}_a}^{(e)} [{\cal F}_3[P_X^{(t)}]]$.

\section{em problem}\Label{S5}
We apply our algorithm to the problem setting with the em algorithm \cite{Amari,Fujimoto,Allassonniere}, which is 
a generalization of Boltzmann machines \cite{Bol}.
The em algorithm is implemented by 
iterative applications of
the projection to an exponential family (the $m$-projection)
and the projection to a mixture family (the $e$-projection).
Hence, this algorithm is called the em algorithm.
In contrast,
EM algorithm is implemented by 
iterative applications of expectation and maximization.
Their relation is summarized as follows.
In particular, the expectation in the EM algorithm, which is often 
called E-step, corresponds to 
the $e$-projection to a mixture family, which is often called
$e$-step in the em algorithm.
Also, the maximization in the EM algorithm, which is often 
called M-step, corresponds to 
the $m$-projection to an exponential family, which is often called
$m$-step in the em algorithm.
In this reason, they are essentially the same \cite{Amari}.

For this aim, we consider a pair of an exponential family ${\cal E}$ and 
a mixture family ${\cal M}_a$ on ${\cal X}$.
We denote the $m$-projection to ${\cal E}$ of $P$ 
by $\Gamma_{\cal E}^{(m)}[P]$, which is defined as \cite{Amari1,Amari}
\begin{align}
\Gamma_{\cal E}^{(m)}[P]:=
\argmin_{Q\in {\cal E}} D(P\|Q).
\end{align}
We consider the following minimization;
\begin{align}
&\min_{P\in {\cal M}_a}\min_{Q\in {\cal E}} D(P\|Q)=
\min_{P\in {\cal M}_a} D(P\|\Gamma_{\cal E}^{(m)}[P]) \nonumber \\
=&
\min_{P\in {\cal M}_a} \sum_{x \in {\cal X}}P(x)
(\log P(x) - \log \Gamma_{\cal E}^{(m)}[P](x)).\Label{AMR}
\end{align}
We choose the function $\Psi$ as
\begin{align}
\Psi_{\rm{em}}[P](x):= (\log P(x) - \log \Gamma_{\cal E}^{(m)}[P](x)),\Label{ZSP}
\end{align}
and apply the discussion in Section \ref{setup}.
Then, 
we have
\begin{align}
&\sum_{x \in {\cal X}} P^{0}(x) (\Psi_{\rm{em}}[P^{0}](x)- \Psi_{\rm{em}}[Q](x)) \nonumber \\
=&\sum_{x \in {\cal X}} P^{0}(x) \Big(
(\log P^{0}(x) - \log \Gamma_{\cal E}^{(m)}[P^{0}](x))
-
(\log Q(x) - \log \Gamma_{\cal E}^{(m)}[Q](x)) \Big) \nonumber \\
=& 
\sum_{x \in {\cal X}} P^{0}(x)(\log P^{0}(x)-\log Q(x)) \nonumber \\
&+
\sum_{x \in {\cal X}} P^{0}(x)
\Big( \log \Gamma_{\cal E}^{(m)}[Q](x))
 - \log \Gamma_{\cal E}^{(m)}[P^{0}](x)) \Big) \nonumber \\
=& 
D (P^{0}\| Q) 
+
D(P^{0} \| \Gamma_{\cal E}^{(m)}[P^{0}])
-D(P^{0} \| \Gamma_{\cal E}^{(m)}[Q])
\nonumber \\
=& 
D (P^{0}\| Q) 
-D(\Gamma_{\cal E}^{(m)}[P^{0}] \| 
\Gamma_{\cal E}^{(m)}[Q]),\Label{ZME}
\end{align}
where the final equation follows from \eqref{NNP}.
The condition (A1) holds with $U(P^0,\infty)={\cal M}_a$ and $\gamma=1$.
There is a possibility that the condition (A1) holds with a smaller $\gamma$.
Therefore, with $\gamma=1$,
Theorem \ref{TTH1} guarantees that 
Algorithm \ref{AL1} converges to a local minimum.
In addition, 
when the relation 
\begin{align}
D (P^0\| Q) 
\ge D(\Gamma_{\cal E}^{(m)}[P^0] \| 
\Gamma_{\cal E}^{(m)}[Q]) \Label{AMO}
\end{align}
holds for $Q \in U(P^0,\delta)$, the condition (A2) holds with $U(P^0,\delta)$.
That is, if the condition \eqref{AMO} holds, Algorithm \ref{AL1} has the global convergence to the minimizer. 
The condition \eqref{AMO} is a similar condition to the condition given in \cite{Bregman-em}.

In this case, ${\cal F}_3$ is given as 
\begin{align}
{\cal F}_3[Q](x)
=&\frac{1}{\kappa_{\rm{em}}[Q]}Q(x) \exp \Big( - \frac{1}{\gamma}
(\log Q(x) - \log \Gamma_{\cal E}^{(m)}[Q](x))\Big)\nonumber \\
=&\frac{1}{\kappa_{\rm{em}}[Q]}Q(x)^{\frac{\gamma-1}{\gamma}} 
\Gamma_{\cal E}^{(m)}[Q](x)^{\frac{1}{\gamma}},\Label{AMD}
\end{align}
where the normalizing constant $\kappa_{\rm{em}}[Q_X]$ is given as
$\kappa_{\rm{em}}[Q_X]=\sum_{x \in {\cal X}}
Q(x)^{\frac{\gamma-1}{\gamma}} 
\Gamma_{\cal E}^{(m)}[Q](x)^{\frac{1}{\gamma}}$.
Since ${\cal F}_3[Q] \in {\cal M}_a$, 
$P^{(t+1)}$ is given as $\Gamma_{{\cal M}_a}^{(e)}[{\cal F}_3[P^{(t)}]]$.
When $\gamma=1$, it coincides with the conventional em-algorithm \cite{Amari,Fujimoto,Allassonniere}
because ${\cal F}_3[P^{(t)}]=\Gamma_{\cal E}^{(m)}[P^{(t)}]$.
The above analysis suggests 
the choice of $\gamma$ as a smaller value than $1$.
That is, there is a possibility that a smaller $\gamma$ improves the conventional em-algorithm. 
In addition, we may use Algorithm \ref{AL3} instead of Algorithm \ref{AL1} when
the calculation of $e$-projection is difficult.

\begin{lemma}\Label{L9}
When $\Psi_{\rm{em}}$ is given as \eqref{ZSP}, 
the condition (A4) holds.
\end{lemma}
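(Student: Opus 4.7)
The strategy is to compute $\sum_x P_\eta(x)\frac{\partial}{\partial \eta_i}\Psi_{\rm em}[P_\eta](x)$ directly, splitting the two contributions arising from $\Psi_{\rm em}[P_\eta](x) = \log P_\eta(x) - \log \Gamma_{\cal E}^{e}[P_\eta](x)$. The $\log P_\eta$ piece disposes of trivially: $\sum_x P_\eta(x)\frac{\partial}{\partial \eta_i}\log P_\eta(x) = \sum_x \frac{\partial P_\eta(x)}{\partial \eta_i}$, and differentiating the normalization identity $\sum_x P_\eta(x)=1$ in $\eta_i$ kills it.

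For the remaining $-\log \Gamma_{\cal E}^{e}[P_\eta]$ piece, I would exploit the exponential family structure. Write $\cal E$ in canonical form $R_\theta(x) = \exp(\sum_j \theta^j g_j(x) - \phi(\theta))$ with sufficient statistics $g_j$ and log-partition $\phi$, and let $\theta(\eta)$ denote the natural parameter of $\Gamma_{\cal E}^{e}[P_\eta]$. Using $\partial_{\theta^j}\phi(\theta) = R_\theta[g_j]$, a chain-rule computation gives $\frac{\partial}{\partial \eta_i}\log R_{\theta(\eta)}(x) = \sum_j \frac{\partial \theta^j}{\partial \eta_i}\bigl(g_j(x) - R_{\theta(\eta)}[g_j]\bigr)$. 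Taking the $P_\eta$-expectation then yields $\sum_j \frac{\partial \theta^j}{\partial \eta_i}\bigl(P_\eta[g_j] - R_{\theta(\eta)}[g_j]\bigr)$, and the moment-matching property of the $e$-projection, $P_\eta[g_j] = \Gamma_{\cal E}^{e}[P_\eta][g_j]$ (equivalently, the vanishing of $\nabla_\theta D(P_\eta\|R_\theta)$ at the projected point), forces each summand to be zero. Adding the two contributions establishes the identity, and in fact it holds for every $i=1,\ldots,l$, not only for $i=k+1,\ldots,l$.

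The only technical concern is the differentiability of $\eta \mapsto \theta(\eta)$. This follows from the implicit function theorem applied to the moment-matching equation $P_\eta[g_j] = R_\theta[g_j]$, whose $\theta$-Jacobian is the positive-definite Fisher information matrix of $\cal E$; smoothness then propagates from the affine dependence of $P_\eta$ on $\eta$. With this in hand, the computation above is the entire argument, and I do not anticipate further subtleties.
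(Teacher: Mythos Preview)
Your argument is correct and is essentially the same as the paper's: both split $\Psi_{\rm em}$ into the $\log P_\eta$ and $\log \Gamma_{\cal E}^e[P_\eta]$ pieces, dispose of the first via normalization, and kill the second by the moment-matching property of the $e$-projection. The only cosmetic difference is packaging: you work in explicit exponential-family coordinates and apply the chain rule through $\theta(\eta)$, whereas the paper phrases the same moment-matching fact via the Pythagorean identity, using it to swap $P_{\eta}$ for $\Gamma_{\cal E}^e[P_{\eta}]$ inside the expectation and then invoking $\sum_x \partial_{\eta_i}\Gamma_{\cal E}^e[P_\eta](x)=0$; your version has the minor advantage of making the differentiability of $\eta\mapsto\theta(\eta)$ explicit via the implicit function theorem.
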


Therefore, 
by combining Lemmas \ref{L6} and \ref{L9},
the assumption of Theorem \ref{TH1} holds 
in the $\delta$ neighborhood of a local minimizer
with sufficiently small $\delta>0$.
That is, the convergence speed can be evaluated by 
Theorem \ref{TH1}.

\begin{proof}
Pythagorean theorem guarantees 
\begin{align}
&\sum_{x \in {\cal X}}P (x) \big(
\log \Gamma_{\cal E}^{(m)}[Q](x)-
\log \Gamma_{\cal E}^{(m)}[P](x)
\big) \nonumber \\
=& D(P\| \Gamma_{\cal E}^{(m)}[P]) 
- D(P\| \Gamma_{\cal E}^{(m)}[Q]) 
=  D(\Gamma_{\cal E}^{(m)}[P]\| 
\Gamma_{\cal E}^{(m)}[Q]) \Label{ZAE}.
\end{align}
We make the parameterization $P_\eta \in {\cal M}_a$ with mixture parameter
$\eta$.
We denote $\eta(h,i):=(\eta(0)_1, \ldots,  \eta(0)_{i-1},\eta(0)_i+h,\eta(0)_{i+1},\ldots,
\eta(0)_k)$.
\begin{align}
&\sum_{x \in {\cal X}}P_{\eta(0)} (x) 
\Big(\frac{\partial }{\partial \eta_i}\Psi[P_\eta](x)|_{\eta=\eta(0)}\Big)
\nonumber\\
=&
\sum_{x \in {\cal X}}P_{\eta(0)} (x) 
\Big( 
\lim_{h\to 0}\frac{\Psi[P_{\eta(h,i)}](x)-\Psi[P_{\eta(0)}](x)}{h}
\Big) \nonumber\\
=&
\sum_{x \in {\cal X}}P_{\eta(0)} (x) 
\Big( 
\lim_{h\to 0}\frac{ \log P_{\eta(h,i)}(x)-\log P_{\eta(0)}(x)}{h} \nonumber\\
&-\lim_{h\to 0}\frac{ \log \Gamma_{\cal E}^{(m)}[P_{\eta(h,i)}](x)-
\log \Gamma_{\cal E}^{(m)}[P_{\eta(0)}](x)}{h}
\Big) \nonumber\\
\stackrel{(a)}{=} & 
\sum_{x \in {\cal X}}P_{\eta(0)} (x) 
\Big( 
\lim_{h\to 0}\frac{ \log P_{\eta(h,i)}(x)-\log P_{\eta(0)}(x)}{h} \Big)\nonumber\\
&-
\sum_{x \in {\cal X}}\Gamma_{\cal E}^{(m)}[P_{\eta(0)}] (x) 
\Big( 
\lim_{h\to 0}
\frac{ \log \Gamma_{\cal E}^{(m)}[P_{\eta(h,i)}](x)-
\log \Gamma_{\cal E}^{(m)}[P_{\eta(0)}](x)}{h}
\Big) \nonumber\\
=&
\sum_{x \in {\cal X}}
\frac{\partial }{\partial \eta_i} P_\eta(x)|_{\eta=\eta(0)}
-
\sum_{x \in {\cal X}}
\frac{\partial }{\partial \eta_i} \Gamma_{\cal E}^{(m)}[P_\eta](x)|_{\eta=\eta(0)}
=0,
\end{align}
which implies the condition (A4).
Here, $(a)$ follows from \eqref{ZAE}.
\end{proof}

\section{Commitment capacity}\Label{S6}
Using the same notations as Section \ref{S4}, we address
the bit commitment via a noisy channel $W_{Y|X}$.
Given a distribution $P_X$, the Shannon entropy is given as
\begin{align}
H(X)_{P_X}:=-\sum_{x \in {\cal X}}P_X(x)\log P_X(x).
\end{align}
Given a joint distribution $P_{XY}$, the conditional entropy
is defined as
\begin{align}
H(X|Y)_{P_{XY}}:=
\int_{{\cal Y}} H(X)_{W_{X|Y=y}} p_Y(y) \mu(dy).
\end{align}
The commitment capacity is given as
\begin{align}
C_c(W_{Y|X}):=\max_{P_X} H(X|Y)_{W_{Y|X} \times P_X}.\Label{BCO6}
\end{align}

This problem setting has several versions.
To achieve the bit commitment,
the papers \cite{BC1,CCDM,W-Protocols} considered interactive protocols with multiple rounds,
where each round has one use of the given noisy channel $W_{Y|X}$ and 
free noiseless communications in both directions. 
Then, it derived the commitment capacity \eqref{BCO6}.
Basically, the proof is composed of two parts.
One is the achievability part, which is often called the direct part and
shows the existence of the code to achieve the capacity.
The other is the impossibility part, which is often called the converse part and
shows the non-existence of the code to exceed the capacity.
As the achievability part,
they showed that the commitment capacity can be achieved with 
non-interactive protocol, which has no free noiseless communication 
during multiple uses of the given noisy channel $W_{Y|X}$.
However, as explained in \cite{HW22},
their proof of the impossibility part skips so many steps that it cannot be followed.
Later, the paper \cite{BC3} showed the impossibility part only for non-interactive protocols
by applying the wiretap channel.
Recently, the paper \cite{H21} constructed a code to achieve the commitment capacity 
by using a specific type of list decoding. 
Further, the paper showed that the achievability 
of the commitment capacity even in the quantum setting.
In addition, the paper \cite{HW22} showed the impossibility part for interactive protocols
by completing the proof by \cite{HW22}. 
The proof in \cite{HW22} covers the impossibility part for a certain class even in the quantum setting.

\subsection{Algorithm based on em-algorithm problem}
To calculate the commitment capacity, 
we consider the following mixture and exponential families;
\begin{align}
{\cal M}_a &:=\{W_{Y|X} \times P_X| P_X \in {\cal P}({\cal X})\} \Label{CPA}\\
{\cal E} &:=\{ Q_Y \times P_{X,Uni} | Q_Y \in {\cal P}({\cal Y})\},
\end{align}
where $P_{X,Uni}$ is the uniform distribution on ${\cal X}$.
Since $\Gamma_{\cal E}^{(m)}[W_{Y|X} \times P_X]=(W_{Y|X} \cdot P_X) \times P_{X,Uni}
$, the commitment capacity is rewritten as
\begin{align}
\log |{\cal X}|-C_c(W_{Y|X})
=&\min_{P_X}  H(X)_{P_{X,Uni}} + H(X)_{W_{Y|X} \cdot P_X}
- H(X Y)_{W_{Y|X} \times P_X}\nonumber\\
=&\min_{P_X} D( W_{Y|X} \times P_X\| (W_{Y|X} \cdot P_X) \times P_{X,Uni} ) \nonumber\\
=&\min_{P_X} D( W_{Y|X} \times P_X\| \Gamma_{\cal E}^{(m)}[W_{Y|X} \times P_X] ).\Label{MOP}
\end{align}
Hence, the minimization \eqref{MOP} is a special case of the minimization \eqref{AMR}.
Since $ \Gamma_{\cal E}^{(m)}[W_{Y|X} \times Q_X](x,y)
=(W_{Y|X} \cdot Q_X) \times P_{X,Uni} $,
\begin{align}
D(\Gamma_{\cal E}^{(m)}[P^*] \| \Gamma_{\cal E}^{(m)}[Q_X]) 
=D(W_{Y|X}\cdot P_X\|W_{Y|X}\cdot Q_X) 
\le D (P^*\| Q_X)  ,
\end{align}
which yields the condition \eqref{AMO}. Hence, the global convergence is guaranteed.

By applying \eqref{AMD}, ${\cal F}_3$ is calculated as
\begin{align}
&{\cal F}_3[W_{Y|X} \times Q_X](x,y)\nonumber\\
=&\frac{1}{\kappa_{W_{Y|X}}^{1}[Q_X]}
w_{Y|X}(y|x)^{\frac{\gamma-1}{\gamma}}  
Q_X(x)^{\frac{\gamma-1}{\gamma}} 
(w_{Y|X} \cdot Q_X)(y)^{\frac{1}{\gamma}}  P_{X,Uni}(x)^{\frac{1}{\gamma}},\Label{AMD2}
\end{align}
where $\kappa_{W_{Y|X}}^{1}[Q_X]$ is the normalizer.
Then, 
after a complicated calculation,
the marginal distribution of its projection to ${\cal M}_a$ is given as
\begin{align}
&\int_{{\cal Y}} \Gamma_{{\cal M}_a}^{(m)}[{\cal F}_3[W_{Y|X} \times Q_X]](x,y)\mu(dy)\nonumber\\
=&\frac{1}{\kappa_{W_{Y|X}}^{2}[Q_X]} Q_X(x)^{1-\frac{1}{\gamma}} \exp (  -\frac{1}{\gamma}  
D(W_{Y|X=x}\| W_{Y|X} \cdot Q_X)) ,\Label{MKD}
\end{align}
where $\kappa_{W_{Y|X}}^{2}[Q_X]$ is the normalizer.
\if0
the normalizer $\kappa[P_{Y|X} \times P_X]$ is given as
\begin{align}
&\kappa[P_{Y|X} \times P_X]\\
:=&\sum_{x\in {\cal X}}
 \int_{{\cal Y}}
p_{Y|X}(y|x)^{\frac{\gamma-1}{\gamma}}  
P_X(x)^{\frac{\gamma-1}{\gamma}} 
(p_{Y|X} \cdot P_X)(y)^{\frac{1}{\gamma}}  P_{X,Uni}(x)^{\frac{1}{\gamma}}
\mu(dy).
\end{align}
Then, 
\begin{align}
D(P_{Y|X}\times Q_X \| {\cal F}_3[P_{Y|X} \times P_X])
\end{align}
\fi
In the algorithm,
we update 
$P_X^{(t+1)}$ as
$P_X^{(t+1)}(x):=  \int_{{\cal Y}}
  \Gamma_{{\cal M}_a}^{(m)}[{\cal F}_3
  [W_{Y|X} \times P_X^{(t)}]]  (x,y)\mu(dy) $.

\subsection{Direct Application}
The update formula \eqref{MKD} requires a complicated calculation, 
we can derive the same update rule by a simpler derivation as follows.
The commitment capacity is rewritten as
\begin{align}
-C_c(W_{Y|X})
=&\min_{P_X}  I(P_X,W_{Y|X})-H(X)_{P_{X}} \nonumber\\
=&\min_{P_X} \sum_{x\in {\cal X}}P_X(x) (D(W_{Y|X=x}\| W_{Y|X}\cdot P_X )+\log P_X(x)).
\end{align}
We choose ${\cal M}_a$ and  $\Psi$ as ${\cal P}({\cal X})$ and 
\begin{align}
\Psi_{c,W_{Y|X}}[P_X](x):= D(W_{Y|X=x}\| W_{Y|X}\cdot P_X )+\log P_X(x).
\end{align}
Then, we have
\begin{align}
&\sum_{x \in {\cal X}}P_X(x)( \Psi[P_X](x)- \Psi[Q_X](x))\nonumber\\
=& D(P_X\|Q_X)-D(W_{Y|X}\cdot P_X\|W_{Y|X}\cdot Q_X) \ge 0
\end{align}
and 
\begin{align}
D(P_X\|Q_X) \ge D(P_X\|Q_X)- D(W_{Y|X}\cdot P_X\|W_{Y|X}\cdot Q_X) .
\end{align}
Since the condition (A1) with $\gamma=1$ and the condition (A2) hold,
Algorithm \ref{AL1} converges with $\gamma =1$.
In this case, ${\cal F}_3$ is given as 
\begin{align}
&{\cal F}_3[Q_X](x) \nonumber\\
&=\frac{1}{\kappa_{W_{Y|X}}^3[Q_X]}Q_X(x) \exp ( - \frac{1}{\gamma}  
(\log Q_X(x)+
D(W_{Y|X=x}\| W_{Y|X} \cdot Q_X))) \nonumber\\
&=\frac{1}{\kappa_{W_{Y|X}}^3[Q_X]}Q_X(x)^{1-\frac{1}{\gamma}} \exp (  -\frac{1}{\gamma}  
D(W_{Y|X=x}\| W_{Y|X} \cdot Q_X)) ,
\end{align}
where the normalizing constant $\kappa_{W_{Y|X}}^3[Q_X]$ is given as
$\kappa_{W_{Y|X}}^3[Q_X]:=$\par\noindent
$\sum_{x \in {\cal X}}
Q_X(x)^{1-\frac{1}{\gamma}} \exp ( - \frac{1}{\gamma}  
D(W_{Y|X=x}\| W_{Y|X} \cdot Q_X))$.
Since ${\cal F}_3[Q_X] \in {\cal M}_a$, 
$P_X^{(t+1)}$ is given as ${\cal F}_3[P_X^{(t)}]$.

To consider the effect of the acceleration parameter $\gamma$,
we made a numerical comparison when the channel with 
${\cal X}=\{1,2,3,4\}$ and ${\cal Y}=\{1,2,3,4\}$ is given as follows.
\begin{align}
&W_{Y|X}(1,1)=0.6,
W_{Y|X}(2,1)=0.2,
W_{Y|X}(3,1)=0.1,
W_{Y|X}(4,1)=0.1, \nonumber \\
&W_{Y|X}(1,2)=0.1,
W_{Y|X}(2,2)=0.2,
W_{Y|X}(3,2)=0.1,
W_{Y|X}(4,2)=0.6,\nonumber  \\
&W_{Y|X}(1,3)=0.1,
W_{Y|X}(2,3)=0.2,
W_{Y|X}(3,3)=0.15,
W_{Y|X}(4,3)=0.55,\nonumber  \\
&W_{Y|X}(1,4)=0.05,
W_{Y|X}(2,4)=0.85,
W_{Y|X}(3,4)=0.05,
W_{Y|X}(4,4)=0.05
\Label{NZU}.
\end{align}
We choose $\gamma$ to be $1$, $0.95$, and $0.9$.
Fig. \ref{con-fig1} 
shows the numerical result for the iteration of our algorithm
when the channel input is limited into $\{1,2,3\}$.
In this case, the improvement by a smaller $\gamma$ is negligible
Fig. \ref{con-fig2} 
shows the same numerical result 
when all elements of $\{1,2,3,4\}$ are allowed as
the channel input.
In this case, a smaller $\gamma$ improves the convergence.

\begin{figure}[htbp]
    \centering
    \includegraphics[keepaspectratio, scale=0.6]{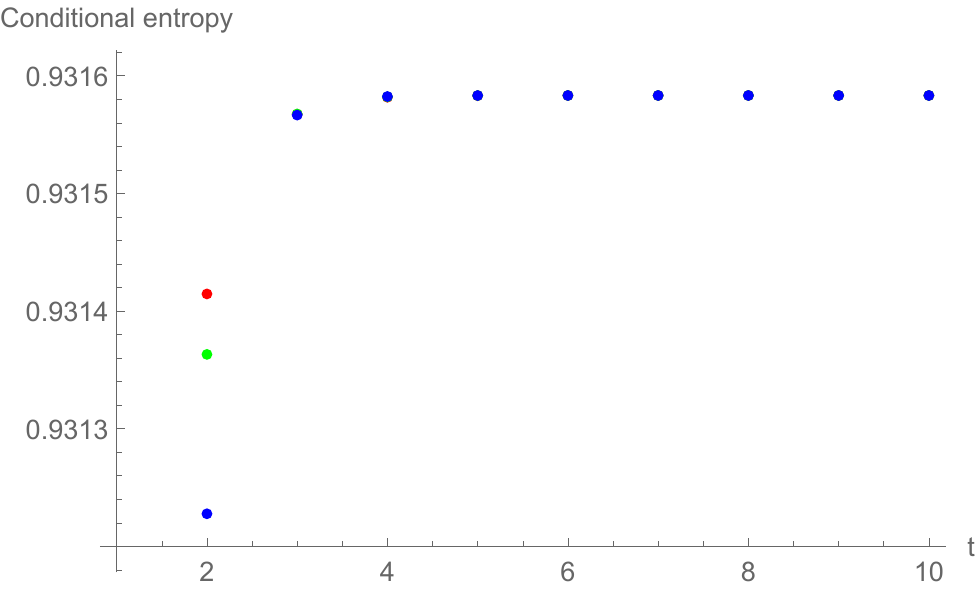}
    \centering
    \includegraphics[keepaspectratio, scale=0.6]{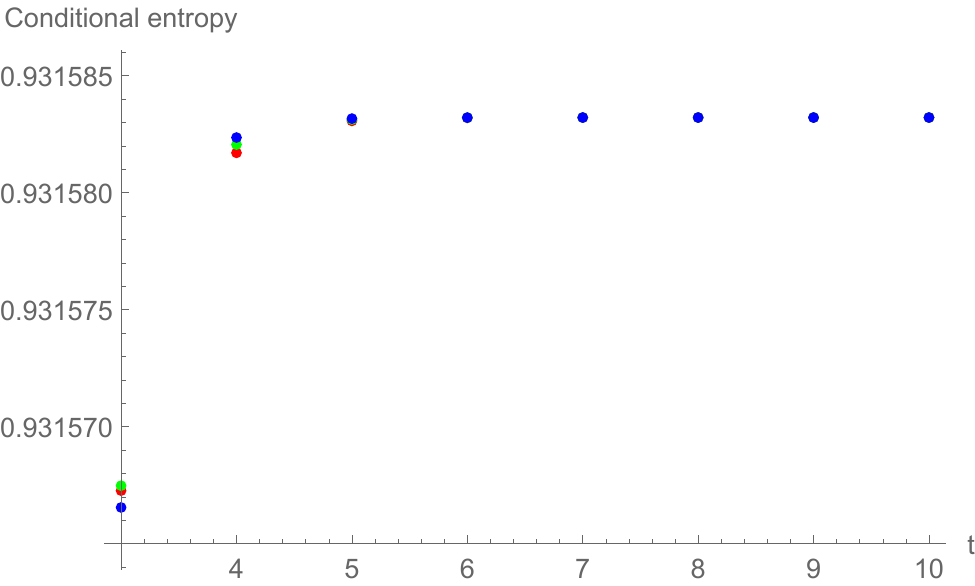}
\caption{
Calculation of commitment capacity for the channel given in \eqref{NZU}
with ${\cal X}=\{1,2,3\}$.
The right plot shows an enlarged plot of the left plot.
The horizontal axis shows the number of iterations.
The vertical axis shows the conditional entropy.
Red points show the case with $\gamma=1$.
Green points show the case with $\gamma=0.95$.
Blue points show the case with $\gamma=0.9$.
For $t=5,6,\ldots, 10$, these cases have almost the same value.
Hence, these plots cannot be distinguished for $t=5,6,7,8,9,10$.
At $t=2,3$, the case with $\gamma =1$ is better than other cases.
However, 
in this case, a smaller $\gamma$ does not improve the convergence.
}
\Label{con-fig1}
\end{figure}   

\begin{figure}[htbp]
    \centering
    \includegraphics[keepaspectratio, scale=0.6]{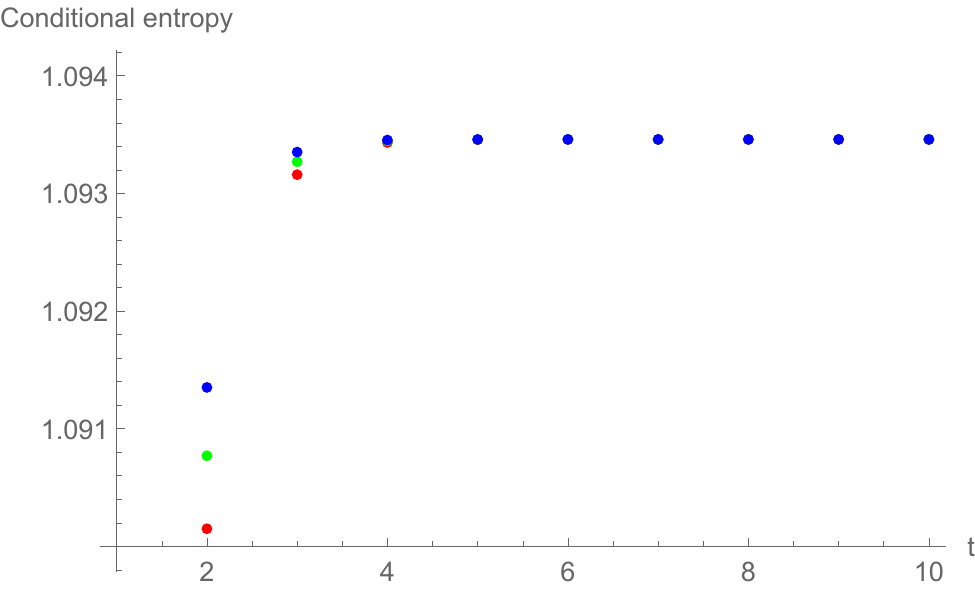}
    \centering
    \includegraphics[keepaspectratio, scale=0.6]{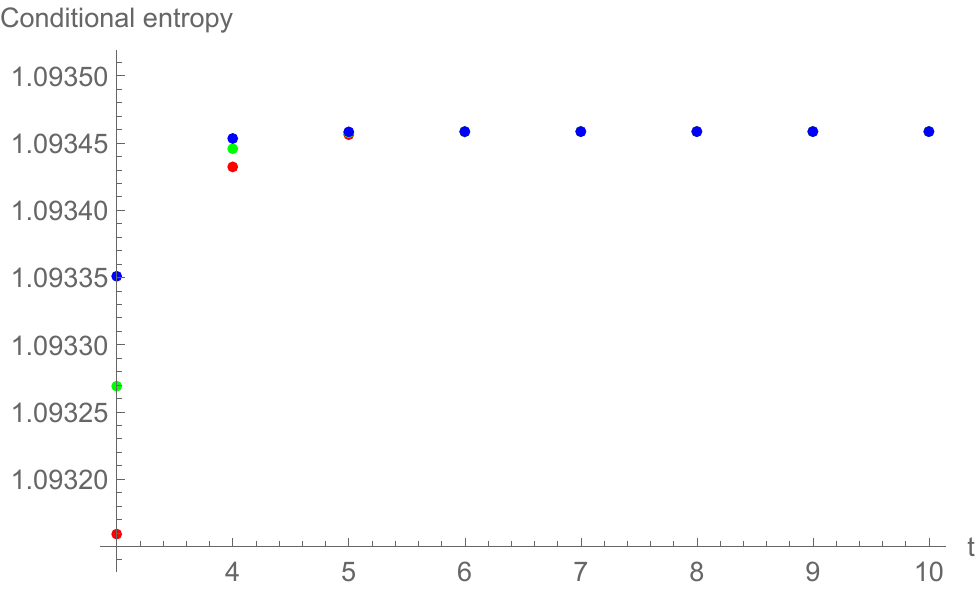}
\caption{
Calculation of commitment capacity for the channel given in \eqref{NZU}
with ${\cal X}=\{1,2,3,4\}$.
The role of color is the same as Fig. \ref{con-fig1}.
In this case, a smaller $\gamma$ improves the convergence.
}
\Label{con-fig2}
\end{figure}   

\newpage
\section{Reverse em problem}\Label{S7}
\subsection{General problem description}
In this section, given a pair of an exponential family ${\cal E}$ and 
a mixture family ${\cal M}_a$ on ${\cal X}$,
we consider the following maximization;
\begin{align}
&\max_{P\in {\cal M}_a}\min_{Q\in {\cal E}} D(P\|Q)=
\max_{P\in {\cal M}_a} D(P\|\Gamma_{\cal E}^{(m)}[P]) \nonumber \\
=&
\max_{P\in {\cal M}_a} \sum_{x \in {\cal X}}P(x)
(\log P(x) - \log \Gamma_{\cal E}^{(m)}[P](x))\Label{AMR2}
\end{align}
while Section \ref{S5} considers the minimization of the same value.
When ${\cal M}_a$ is given as \eqref{CPA} and ${\cal E}$ is given as
${\cal P}({\cal X})\times {\cal P}({\cal Y})$,
this problem coincides with the channel capacity \eqref{CMD}.
This problem was firstly studied for the channel capacity in \cite{Shoji}, and
was discussed with a general form in \cite{reverse}. 
To discuss this problem,
we choose the function $\Psi$ as $\Psi_{\rm{rem}}:=-\Psi_{\rm{em}}$,
and apply the discussion in Section \ref{setup}.
Due to \eqref{ZME}, \eqref{BK1} in the condition (A1) is written as
\begin{align}
(\gamma+1) D (P^0\| Q) 
\ge D(\Gamma_{\cal E}^{(m)}[P^0] \| \Gamma_{\cal E}^{(m)}[Q]),
\end{align}
and
\eqref{XMZ} in the condition (A2) is written as
\begin{align}
D(\Gamma_{\cal E}^{(m)}[P^0] \| \Gamma_{\cal E}^{(m)}[Q])
\ge D (P^0\| Q) .
\end{align}
Further, due to Lemma \ref{L9}, 
the condition (A4) holds.

\subsection{Application to wiretap channel}\Label{MXT}
Now, we apply this problem setting to wiretap channel with the degraded case 
discussed in Subsection \ref{S4-5-2}.
We choose ${\cal M}_a$ as 
$\{  W_{YZ|X}\times P_X |  P_X\in {\cal P}({\cal X}) \}$
and ${\cal E}$ as the set of
distributions with the Markov chain condition $X-Z-Y$ \cite{Toyota}.
Then, the conditional mutual information $I(X;Y|Z)[P_X,W_{YZ|X}]$ is given as
$D( W_{YZ|X}\times P_X\| \Gamma_{{\cal E}}^{(m)} (W_{YZ|X}\times P_X))$.
In this application, we have
\begin{align}
D_{\Psi_{\rm{rem}}}(W_{YZ|X}\times P_X\| W_{YZ|X}\times Q_X)&=
D_{\Psi_{W_{YZ|X}}}D( P_X \| Q_X)
\\
D( W_{YZ|X}\times P_X\| W_{YZ|X}\times Q_X) &=D( P_X \| Q_X).
\end{align}
To check the conditions (A1) and (A2) for $\Psi_{W_{YZ|X}}$,
it is sufficient to check them for $\Psi_{\rm{rem}}$ in this application.
Since we have
\begin{align}
& D( \Gamma_{{\cal E}}^{(m)} (W_{YZ|X}\times P_X)\| 
\Gamma_{{\cal E}}^{(m)} (W_{YZ|X}\times Q_X)) \nonumber\\
=& D( W_{Z|X} \times P_{X} \| Q_{XZ})
+ D( W_{YZ|X}\cdot P_X \| W_{YZ|X}\cdot Q_X)\nonumber\\
&- D( W_{Z|X}\cdot P_X \| W_{Z|X}\cdot Q_X) \nonumber\\
=& D( P_X \| Q_X)
+ D( W_{YZ|X}\cdot P_X \| W_{YZ|X}\cdot Q_X)
- D( W_{Z|X}\cdot P_X \| W_{Z|X}\cdot Q_X) \nonumber\\
\le & 2D( P_X \| Q_X),
\end{align}
LHS of \eqref{BK1} in the condition (A1) is written as
\begin{align}
& \gamma D (P_X^0\| Q_X) 
- D( W_{YZ|X}\cdot P_X \| W_{YZ|X}\cdot Q_X)
+ D( W_{Z|X}\cdot P_X \| W_{Z|X}\cdot Q_X) \nonumber\\
\ge &
\gamma D (P^0\| Q) - D( P_X^0 \|  Q_X).
\end{align}
It is not negative when $\gamma \ge 1$.
Also, 
RHS of \eqref{XMZ} in the condition (A2) is written as
\begin{align}
D( W_{YZ|X}\cdot P_X \| W_{YZ|X}\cdot Q_X)
- D( W_{Z|X}\cdot P_X \| W_{Z|X}\cdot Q_X) 
\ge 0.
\end{align}
Hence, the conditions (A1) and (A2) hold
with $\gamma \ge 1$.

\section{Information bottleneck}\Label{S7B}
As a method for information-theoretical machine learning,
we often consider information bottleneck \cite{Tishby}.
Consider two correlated systems ${\cal X}$ and ${\cal Y}$
and a joint distribution $P_{XY}$ over ${\cal X} \times {\cal Y}$.
The task is to extract an essential information
from the space ${\cal X}$ to ${\cal T}$ with respect to ${\cal Y}$.
Here, we discuss a generalized problem setting proposed in \cite{Strouse}.
For this information extraction, given parameters $\alpha \in [0,1]$ and $\beta
\ge \alpha$,
we choose a conditional distribution
$P_{T|X}^*$ as
\begin{align}
P_{T|X}^*:= \argmin_{P_{T|X}} 
\alpha I(T;X)+(1-\alpha)H(T)-\beta I(T;Y).
\end{align}
This method is called information bottleneck.
To apply our method to this problem, 
we choose ${\cal M}_a$ ${\cal P}({\cal X})$, 
and define
\begin{align}
{\Psi}_{\alpha,\beta}[P_{TX}](t,x)
:=
&\alpha \log P_{TX}(t,x)
-\alpha \log P_{X}(x)
+(\beta-1) \log P_{T}(t) \nonumber \\
&+\beta \sum_{y\in {\cal Y}} P_{Y|X}(y|x) (
\log P_{Y}(y) 
- \log P_{TY}(t,y) ).
\end{align}
Then, 
when 
the joint distribution $P_{TX}$ is chosen to be $P_{T|X} \times P_X$,
the objective function is written as
\begin{align}
{\cal G}_{\alpha,\beta}( P_{TX}):=
\sum_{t\in {\cal T},x \in {\cal X}}
P_{TX}(t,x) {\Psi}_{\alpha,\beta}[P_{TX}](t,x).
\end{align}
That is, our problem is reduced to the minimization
\begin{align}
\min_{P_{TX}\in {\cal M}(P_X)}
{\cal G}_{\alpha,\beta}( P_{TX}),
\end{align}
where ${\cal M}(P_X)$ is the set of joint distributions on 
${\cal X} \times {\cal Y}$
whose marginal distribution on ${\cal X}$ is $P_X$.
The $e$-projection $\Gamma^{(e)}_{{\cal M}(P_X)}$
to ${\cal M}(P_X)$ is written as
\begin{align}
\Gamma^{(e)}_{{\cal M}(P_X)}[Q_{TX}]=
Q_{T|X}\times P_X\Label{MZY}
\end{align}
because the relation 
$D(P_{TX}\| Q_{TX})=
D(P_{X}\| Q_{X})+
D(P_{TX}\| Q_{T|X}\times P_X)$ holds for a distribution
$P_{TX}\in {\cal M}(P_X)$.

When Algorithm \ref{AL1} is applied to this problem,
due to \eqref{MZY},
the update rule for the conditional distribution
is given as
\begin{align}
P_{T| X}^{(t)}\mapsto 
P_{T| X}^{(t+1)}(t|x)
:= \kappa_x 
P_{T|X}^{(t)}(t|x) \exp( -\frac{1}{\gamma}\Psi_{\alpha,\beta}[
P_{T|X}^{(t)}\times P_X ](t,x)),
\end{align}
where $\kappa_x$ is a normalized constant.
This update rule is the same as the update rule proposed in 
Section 3 of \cite{HY}
when the states $\{\rho_{Y|x}\}$ are given as diagonal density matrices,
i.e., a conditional distribution $P_{Y|X}$.
Also, as shown in \cite[(22)]{HY},
we have the relation 
\begin{align}
D_{\Psi_{\alpha,\beta}}( P_{TX} \|Q_{TX})
\le \alpha D( P_{TX} \|Q_{TX})
\end{align}
for $P_{TX},Q_{TX} \in {\cal M}(P_X)$ as follows.
First, we have
\begin{align}
D( P_{T|X} \cdot P_{XY}\|Q_{T|X}\cdot P_{XY})
\ge 
D( P_{T} \|Q_{T}),
\end{align}
where
$P_{T|X} \cdot P_{XY}(t,y):=
\sum_{x \in {\cal X}} P_{T|X}(t|x) P_{XY}(x,y)$.
Then, we have
\begin{align}
& D_{\Psi_{\alpha,\beta}}( P_{TX} \|Q_{TX}) \nonumber \\
=& (\beta-1)D( P_{T} \|Q_{T})
+\alpha \sum_{x \in {\cal X}}P_X(x)D( P_{T|X=x}\|Q_{T|X=x})\nonumber  \\
&
-\beta D( P_{T|X} \cdot P_{XY}\|Q_{T|X}\cdot P_{XY})
\nonumber \\
\le & -D( P_{T} \|Q_{T})
+\alpha \sum_{x \in {\cal X}}P_X(x)D( P_{T|X=x}\|Q_{T|X=x})\nonumber \\
\le &
\alpha \sum_{x \in {\cal X}}P_X(x)D( P_{T|X=x}\|Q_{T|X=x}) 
= \alpha D( P_{TX} \|Q_{TX}).
\end{align}
That is,
the condition (A1) holds with $\gamma=\alpha$.
Therefore, with $\gamma=\alpha$,
Theorem \ref{TTH1} guarantees that 
Algorithm \ref{AL1} converges to a local minimum,
which was shown as \cite[Theorem 3]{HY}.
This fact shows the importance of the choice of 
$\gamma$ dependently on the problem setting.
That is, it shows the necessity of 
our problem setting with a general positive real number $\gamma>0$.

The paper \cite{HY} also discussed the case when 
${\cal Y}$ and ${\cal T}$ are quantum systems.
It numerically compared these algorithms depending on $\gamma$ \cite[Fig. 2]{HY}.
This numerical calculation indicates the following behavior.
When $\gamma$ is larger than a certain threshold, 
a smaller $\gamma$ realizes faster convergence.
But, when $\gamma$ is smaller than a certain threshold, 
the algorithm does not converge.

\section{Conclusion}
We have proposed iterative algorithms with an acceleration parameter for
a general minimization problem over a mixture family.
For these algorithms, we have shown convergence theorems in various forms,
one of which covers the case with approximated iterations.
Then, we have applied our algorithms to various problem settings including 
the em algorithm and several information theoretical problem settings.

There are two existing studies to numerically evaluate the effect of 
the acceleration parameter $\gamma$ \cite{RISB,HY}.
They reported improvement in the convergence by modifying 
the acceleration parameter $\gamma$.
For example, in the numerical calculation for information bottleneck
 in \cite[Fig. 2]{HY},
the case with $\gamma=0.55 $ improves the convergence.
Our numerical calculation for the commitment capacity
has two cases.
In one case, the choices with $\gamma=0.95,0.9$ do not improve the convergence.
In another case, the choices with $\gamma=0.95,0.9$ improve the convergence.
These facts show that the effect of the acceleration parameter $\gamma$ 
depends on the parameters of the problem setting.
The commitment capacity is considered as a special case of 
the divergence between a mixture family and an exponential family.
\if0
Therefore, it is useful to characterize the case when 
the acceleration parameter $\gamma$
improves the conventional em algorithm for the 
minimization of the divergence between a mixture family and an exponential family
in the problem setting of the em algorithm.
\fi

There are several future research directions.
The first direction is the evaluation of the convergence speed 
of Algorithm \ref{AL4} because we could not derive its evaluation.
The second direction is to find various applications of our methods.
Although this paper studied several examples, 
it is needed to more useful examples for our algorithm.
The third direction is the extensions of our results.
A typical extension is the extension to the quantum setting \cite{Holevo,SW,hayashi}.
As a further extension, it is an interesting topic to extend our result to 
the setting with Bregman divergence.
Recently, Bregman proximal gradient algorithm has been studied for the minimization of 
a convex function \cite{CT93,T97,ZYS}.
Since this algorithm uses Bregman divergence,
it might have an interesting relation with the above-extended algorithm.
Therefore, it is an interesting study to investigate this relation.

\section*{Acknowledgments}
The author was supported in part by the National Natural Science Foundation of China (Grant No. 62171212) and
Guangdong Provincial Key Laboratory (Grant No. 2019B121203002).
The author is very grateful to Mr. Shoji Toyota for helpful
discussions. 
In addition, he pointed out that 
the secrecy capacity can be written as the reverse em algorithm in a similar way as the channel capacity \cite{Toyota}
under the degraded condition.

\section*{Data availability}
Data sharing is not applicable to this article as no datasets were generated
or analyzed during the current study.

\appendix
\section{Useful lemma}\Label{S3-1}
To show various theorems, we prepare the following lemma.
\begin{lemma}\Label{LLX}
For any two distributions $Q ,Q'\in {\cal M}_a$, we have
\begin{align}
&D(P^{0}\| Q )- D(P^{0}\| Q')  \nonumber\\
=&\frac{1}{\gamma}J_\gamma(\Gamma^{(e)}_{{\cal M}_a}[{\cal F}_3[Q]],Q) 
-\frac{1}{\gamma}{\cal G}(P^{0})
+\frac{1}{\gamma}D_{\Psi}(P^{0} \|Q)
-D(\Gamma^{(e)}_{{\cal M}_a}[{\cal F}_3[Q]]\| Q' )\Label{XM1} \\
=&\frac{1}{\gamma}{\cal G}(\Gamma^{(e)}_{{\cal M}_a}[{\cal F}_3[Q]]) -\frac{1}{\gamma}{\cal G}(P^{0}) 
+
D(\Gamma^{(e)}_{{\cal M}_a}[{\cal F}_3[Q]]\|Q)
-\frac{1}{\gamma} 
D_{\Psi}(\Gamma^{(e)}_{{\cal M}_a}[{\cal F}_3[Q]] \|Q)
\nonumber\\
&+\frac{1}{\gamma}
D_{\Psi}(P^{0} \|Q)
-D(\Gamma^{(e)}_{{\cal M}_a}[{\cal F}_3[Q]]\| Q' ) .\Label{XM2}
\end{align}
In addition, when $\Psi$ is defined for any distribution in ${\cal P}({\cal X})$,
the above relations holds for any distribution $Q \in {\cal P}({\cal X})$.
\end{lemma}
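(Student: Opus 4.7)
The plan is to verify \eqref{XM1} by a direct calculation starting from the definition of $Q_\theta$, and then derive \eqref{XM2} by a further expansion of $J_\gamma(\Gamma^{(m)}_{{\cal M}_a}[{\cal F}_3[Q]],Q)$. Write $P^*:=\Gamma^{(m)}_{{\cal M}_a}[{\cal F}_3[Q]]$. Since
\begin{align*}
\log Q_\theta(x)-\log Q(x)
= -\tfrac{1}{\gamma}\Psi[Q](x)-\log\kappa[Q]+\sum_{j=1}^{k}\theta^j f_j(x)-\phi[Q](\theta),
\end{align*}
I first compute
\begin{align*}
D(P^{0}\|Q)-D(P^{0}\|Q_\theta)
= \sum_x P^{0}(x)\log\frac{Q_\theta(x)}{Q(x)}
= -\tfrac{1}{\gamma} P^{0}[\Psi[Q]]-\log\kappa[Q]+\sum_{j=1}^k\theta^j a_j-\phi[Q](\theta),
\end{align*}
using $P^{0}\in{\cal M}_a$ so that $P^{0}[f_j]=a_j$ for $j\le k$.

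Next I would carry out the same expansion for $D(P^*\|Q_\theta)$; since $P^*\in{\cal M}_a$ also satisfies $P^*[f_j]=a_j$, this yields
\begin{align*}
D(P^*\|Q_\theta)=D(P^*\|{\cal F}_3[Q])-\sum_{j=1}^k\theta^j a_j+\phi[Q](\theta).
\end{align*}
Equation \eqref{XMY} of Lemma~\ref{L1} rewrites $D(P^*\|{\cal F}_3[Q])=\tfrac{1}{\gamma}J_\gamma(P^*,Q)+\log\kappa[Q]$. Combining these two identities lets me rewrite $-\log\kappa[Q]+\sum_j\theta^j a_j-\phi[Q](\theta)=\tfrac{1}{\gamma}J_\gamma(P^*,Q)-D(P^*\|Q_\theta)$, and substituting into the first display gives
\begin{align*}
D(P^{0}\|Q)-D(P^{0}\|Q_\theta)
=\tfrac{1}{\gamma}J_\gamma(P^*,Q)-\tfrac{1}{\gamma}P^{0}[\Psi[Q]]-D(P^*\|Q_\theta).
\end{align*}
Adding and subtracting $\tfrac{1}{\gamma}{\cal G}(P^{0})=\tfrac{1}{\gamma}P^{0}[\Psi[P^{0}]]$ converts $-\tfrac{1}{\gamma}P^{0}[\Psi[Q]]$ into $-\tfrac{1}{\gamma}{\cal G}(P^{0})+\tfrac{1}{\gamma}\sum_x P^{0}(x)(\Psi[P^{0}](x)-\Psi[Q](x))$, which is exactly \eqref{XM1}.

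For \eqref{XM2}, I would use the definition $J_\gamma(P^*,Q)=\gamma D(P^*\|Q)+P^*[\Psi[Q]]$ directly, and then decompose $P^*[\Psi[Q]]={\cal G}(P^*)-\sum_x P^*(x)(\Psi[P^*](x)-\Psi[Q](x))$. Substituting into \eqref{XM1} reorganizes the terms into the form stated in \eqref{XM2}. Finally, the extension to $Q\in{\cal P}({\cal X})$ is automatic: the entire computation only requires that $\Psi[Q]$ be defined and that $P^{0},P^*\in{\cal M}_a$; nowhere do we use $Q\in{\cal M}_a$.

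The main obstacle is not conceptual but bookkeeping: several terms involving $\log\kappa[Q]$, $\phi[Q](\theta)$, $\sum_j\theta^j a_j$, and the various $\Psi$-expectations have to cancel in just the right way, and sign errors are easy. The two structural facts that make everything fit together are (i) the mixture-family constraints $P^{0}[f_j]=P^*[f_j]=a_j$ for $j=1,\dots,k$, which cause the $\theta$-dependent and $\phi[Q](\theta)$-dependent terms arising in $D(P^{0}\|Q_\theta)$ and $D(P^*\|Q_\theta)$ to be identical and hence cancel when subtracted, and (ii) the identity \eqref{XMY} from Lemma~\ref{L1}, which is the only place the specific exponential form of ${\cal F}_3[Q]$ enters the argument.
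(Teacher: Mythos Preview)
Your proof is correct and follows essentially the same route as the paper's: both start from $D(P^{0}\|Q)-D(P^{0}\|Q_\theta)=\sum_x P^{0}(x)(\log Q_\theta(x)-\log Q(x))$, insert the explicit form of ${\cal F}_3[Q]$, invoke \eqref{XMY} from Lemma~\ref{L1} to produce the $J_\gamma$ term, and then expand $J_\gamma(P^*,Q)$ via its definition (equivalently \eqref{XMY5}) to pass from \eqref{XM1} to \eqref{XM2}. The only organizational difference is in how the $\theta$-dependent terms are eliminated: the paper applies the Pythagorean relation \eqref{NNP} twice (once for ${\cal F}_3[Q]$ and once for $Q_\theta$, both having the same $m$-projection $P^*$) so that the $D(P^{0}\|P^*)$ contributions cancel, whereas you compute directly with the exponential-family form of $Q_\theta$ and use the mixture constraints $P^{0}[f_j]=P^*[f_j]=a_j$ to see the cancellation explicitly. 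These are the same fact in two guises, so the arguments are equivalent.
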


\begin{proof}
We have
\begin{align}
{\cal G}(P^{0})=&
\sum_{x \in {\cal X}} P^{0}(x)  \Psi[P^{0}](x) 
=J _\gamma(\Gamma^{(e)}_{{\cal M}_a}[{\cal F}_3[P^{0}]],P^{0})\nonumber\\
=& \gamma (D(\Gamma^{(e)}_{{\cal M}_a}[{\cal F}_3[P^{0}]]\|
{\cal F}_3[P^{0}])
- \log \kappa[P^{0}]).
\Label{MLP}
\end{align}
Using \eqref{MLP}, we have
\begin{align}
&D(P^{0}\| Q)- D(P^{0}\| Q')  
=\sum_{x \in {\cal X}} P^{0}(x) (\log Q'(x)- \log Q(x))\nonumber \\
=&\sum_{x \in {\cal X}} P^{0}(x) \Big(
\log Q'(x)
-\log {\cal F}_3[Q](x)
+\log {\cal F}_3[Q](x)
- \log Q(x) \Big) \nonumber\\
\stackrel{(a)}{=}&
D(P^{0}\| {\cal F}_3[Q])
-D(P^{0}\| Q' ) 
+\sum_{x \in {\cal X}} P^{0}(x) \Big(
-\frac{1}{\gamma} \Psi[Q](x) -\log \kappa[Q] \Big)\nonumber \\
\stackrel{(b)}{=}&
D(\Gamma^{(e)}_{{\cal M}_a}[{\cal F}_3[Q]]\| {\cal F}_3[Q])
-D(\Gamma^{(e)}_{{\cal M}_a}[{\cal F}_3[Q]]\| Q' ) \nonumber\\
&-\log \kappa[Q] 
-\frac{1}{\gamma}\sum_{x \in {\cal X}} P^{0}(x)  \Psi[Q](x)\nonumber\\
\stackrel{(c)}{=}&
\frac{1}{\gamma}J_\gamma(\Gamma^{(e)}_{{\cal M}_a}[{\cal F}_3[Q]],Q) -\frac{1}{\gamma}{\cal G}(P^{0})
+\frac{1}{\gamma}\sum_{x \in {\cal X}} P^{0}(x)  
\Big(\Psi[P^{0}](x)-\Psi[Q](x)\Big)\nonumber\\
&-D(\Gamma^{(e)}_{{\cal M}_a}[{\cal F}_3[Q]]\| Q' ) \nonumber\\
\stackrel{(d)}{=}&
\frac{1}{\gamma}{\cal G}(\Gamma^{(e)}_{{\cal M}_a}[{\cal F}_3[Q]]) -\frac{1}{\gamma}{\cal G}(P^{0})\nonumber\\
&+D(\Gamma^{(e)}_{{\cal M}_a}[{\cal F}_3[Q]]\|Q)
-\frac{1}{\gamma}\sum_{x \in {\cal X}} \Gamma^{(e)}_{{\cal M}_a}[{\cal F}_3[Q]](x) 
( \Psi[\Gamma^{(e)}_{{\cal M}_a}[{\cal F}_3[Q]]](x)-\Psi[Q](x))\nonumber\\
&+\frac{1}{\gamma}\sum_{x \in {\cal X}} P^{0}(x)  
\Big(\Psi[P^{0}](x)-\Psi[Q](x)\Big)
-D(\Gamma^{(e)}_{{\cal M}_a}[{\cal F}_3[Q]]\| Q' ) ,
\end{align}
where each step is shown as follows.
$(a)$ follows from the definition of ${\cal F}_3(Q) $. 
$(c)$ follows from \eqref{XMY} and \eqref{MLP}. 
$(d)$ follows from \eqref{XMY5}. 
$(b)$ follows from the relations 
\begin{align}
D(P^{0}\| {\cal F}_3[Q])&=D(P^{0}\| \Gamma^{(e)}_{{\cal M}_a}[{\cal F}_3[Q]])
+D(\Gamma^{(e)}_{{\cal M}_a}[{\cal F}_3[Q]]\| {\cal F}_3[Q]) \\
D(P^{0}\| Q' ) &=D(P^{0}\| \Gamma^{(e)}_{{\cal M}_a}[{\cal F}_3[Q]] )+D(\Gamma^{(e)}_{{\cal M}_a}[{\cal F}_3[Q]]\| Q' ), 
\end{align}
which are shown by the Phythagorean equation.
Therefore, considering the definition of 
$D_\Psi(P\|Q)$,
we obtain \eqref{XM1} and \eqref{XM2}.
\end{proof}

\section{Proof of Theorem \ref{TH1} and Corollary \ref{Cor1}}\Label{S3-2}
\noindent{\bf Step 1:}\quad
This step aims to show the following inequalities
by assuming that item (i) does not hold and the conditions (A1) and (A2) hold.
\begin{align}
&D(P^{0}\| P^{(t+1)}) \le \delta \Label{XPZ} \\
&D(P^{0}\| P^{(t)})- D(P^{0}\| P^{(t+1)}) \ge \frac{1}{\gamma}{\cal G}(P^{(t+1)}) -\frac{1}{\gamma}{\cal G}(P^{0})\Label{XPZ2}
\end{align}
for $t=1, \ldots, t_0-1$.
We show these relations by induction.

For any $t$, 
by using the relation $ \Gamma^{(e)}_{{\cal M}_a}[{\cal F}_3[{P}^{(t)}]]=P^{(t+1)}$, 
the application of \eqref{XM2} of Lemma \ref{LLX} to the case with $Q=P^{(t)}$ and $Q'=P^{(t+1)}$
yields
\begin{align}
&D(P^{0}\| P^{(t)})- D(P^{0}\| P^{(t+1)})  \nonumber\\
=&\frac{1}{\gamma}{\cal G}( P^{(t+1)}) 
-\frac{1}{\gamma}{\cal G}(P^{0}) 
+D(\Gamma^{(e)}_{{\cal M}_a}[{\cal F}_3[P^{(t)}]]\|P^{(t)})\nonumber\\
&-\frac{1}{\gamma}
D_\Psi( \Gamma^{(e)}_{{\cal M}_a}[{\cal F}_3[P^{(t)}]] \| \Psi[P^{(t)}])
+\frac{1}{\gamma}
D_\Psi(P^{0}\|P^{(t)})
.\Label{XME2}
\end{align}

First, we show the relations \eqref{XPZ} and \eqref{XPZ2} with $t=1$.
Since $D(P^{0}\| P^{(1)}) \le \delta$,
$P^{(1)}$ belongs to $U(P^{0},\delta)$. Hence,
the conditions (A1) and (A2) guarantee the following inequality with $t=1$;
\begin{align}
\hbox{\rm(RHS of \eqref{XME2})} 
\ge 
\frac{1}{\gamma}{\cal G}(P^{(t+1)}) -\frac{1}{\gamma}{\cal G}(P^{0}).\Label{AXE}
\end{align}
The combination of \eqref{XME2} and \eqref{AXE} implies \eqref{XPZ2}.
Since item (i) does not hold, we have
\begin{align}
\frac{1}{\gamma}{\cal G}(P^{(t+1)}) -\frac{1}{\gamma}{\cal G}(P^{0})\ge 0\Label{AMQ}.
\end{align}
The combination of \eqref{XME2}, \eqref{AXE}, and \eqref{AMQ} implies \eqref{XPZ}.

Next, we show the relations \eqref{XPZ} and \eqref{XPZ2} with $t=t'$
by assuming the relations \eqref{XPZ} and \eqref{XPZ2} with $t=t'-1$.
Since the assumption guarantees $D(P^{0}\| P^{(t')}) \le \delta$,
the conditions (A1) and (A2) guarantee \eqref{AXE} with $t=t'$.
We obtain \eqref{XPZ} and \eqref{XPZ2} in the same way as $t=1$.

\noindent{\bf Step 2:}\quad
This step aims to show \eqref{XME}
by assuming that item (i) does not hold and the conditions (A1) and (A2) hold.
Due to \eqref{XPZ}, the condition (A1) and Lemmas \ref{L1} and \ref{L2}
guarantee that
\begin{align}
{\cal G}(P^{(t+1)}) \le {\cal G}(P^{(t)}) \Label{AMK}.
\end{align}
We have
\begin{align}
& \frac{t_0}{\gamma} \Big(
{\cal G}(P^{(t_0+1)}) - {\cal G}(P^{0}) \Big)
\stackrel{(a)}{\le}  \frac{1}{\gamma}\sum_{t=1}^{t_0}
{\cal G}(P^{(t+1)}) -{\cal G}(P^{0}) 
\nonumber\\
\stackrel{(b)}{\le}& \sum_{t=1}^{t_0}
D(P^{0}\| P^{(t)})- D(P^{0}\| P^{(t+1)})  
= 
D(P^{0}\| P^{(1)})-
D(P^{0}\| P^{(t_0+1)})  
\le D(P^{0}\| P^{(1)}),
\end{align}
where $(a)$ and $(b)$ follow from \eqref{AMK} and \eqref{XPZ2}, respectively.

\noindent{\bf Step 3:}\quad
This step aims to show item (ii)
by assuming the conditions (A0) as well as (A1) and (A2).
In the discussion of Step 1,
since $D(P^{0}\| P^{(t)}) \le \delta$,
the condition (A0) guarantees \eqref{AMQ}. 
We can show item (ii) with assuming that item (i) does not hold.
Hence, we obtain Theorem \ref{TH1}.

\noindent{\bf Step 4:}\quad
To show Corollary \ref{Cor1}, we apply \eqref{XME2} to the case when $P^0=P^*$ and $P^{(t)}=P^*_i$.
Then, we have
\begin{align}
0={\cal G}(P^*_i)-{\cal G}(P^*)
+\sum_{x \in {\cal X}}P^*(x) 
(\Psi[P^*](x)-\Psi[P^*_i](x)),
\end{align}
which implies \eqref{Cor2}.

\section{Proof of Theorem \ref{TH2}}\Label{S3-3}
We have already shown that 
$\{P^{(t)}\}_{t=1}^{t_0+1} 
\subset U(P^{0},\delta)$ when item (i) does not hold.
Hence, in the following, we show only \eqref{CAU2} by using (A1), (A3), and
$\{P^{(t)}\}_{t=1}^{t_0+1} \subset U(P^{0},\delta)$
when item (i) does not hold.

We have
\begin{align}
&J_\gamma(\Gamma^{(e)}_{{\cal M}_a}[{\cal F}_3[P^{(t)}]],P^{(t)})
- {\cal G}(P^{0}) \\
\stackrel{(a)}{=}&{\cal G}( P^{(t+1)}) -{\cal G}(P^{0}) 
+\gamma D(\Gamma^{(e)}_{{\cal M}_a}[{\cal F}_3[P^{(t)}]]\|P^{(t)})
-D_\Psi(\Gamma^{(e)}_{{\cal M}_a}[{\cal F}_3[P^{(t)}]] \| \Psi[P^{(t)}])
\nonumber\\
\stackrel{(b)}{\ge}& {\cal G}( P^{(t+1)}) -{\cal G}(P^{0}) \stackrel{(c)}{\ge} 0,
\Label{PYF}
\end{align}
where $(a)$ follows from Lemma \ref{L2},
$(b)$ follows from the condition (A1) and $P^{(t)}\in  U(P^{0},\delta)$,
and
$(c)$ holds because item (i) does not hold.

Since $ \Gamma^{(e)}_{{\cal M}_a}[{\cal F}_3[{P}^{(t)}]]=P^{(t+1)}$,
the application of \eqref{XM1} of Lemma \ref{LLX} to the case with $Q=P^{(t)}$ and $Q' 
=P^{(t+1)}$
yields
\begin{align}
&  D(P^{0} \| P^{(t)})-D(P^{0} \| P^{(t+1)})\nonumber \\
=&
\frac{1}{\gamma} J_\gamma(\Gamma^{(e)}_{{\cal M}_a}[{\cal F}_3[P^{(t)}]],P^{(t)})
- \frac{1}{\gamma}{\cal G}(P^{0}) 
+\frac{1}{\gamma}
D_\Psi(P^{0}\|P^{(t)})
\Label{MOA} \\
\stackrel{(a)}{\ge} &
\frac{1}{\gamma}
D_\Psi(P^{0}\|P^{(t)})
\stackrel{(b)}{\ge} 
\frac{\beta}{\gamma}D(P^{0} \| P^{(t)})\Label{MOA2},
\end{align}
where 
$(a)$ follows from \eqref{PYF}, and 
$(b)$ follows from \eqref{CAU} in the condition (A3) and $P^{(t)}\in  U(P^{0},\delta)$.
Hence, we have
\begin{align}
  D(P^{0} \| P^{(t+1)})\le (1-\frac{\beta}{\gamma})D(P^{0} \| P^{(t)}).\Label{MOA3}
\end{align}
Using the above relations, we have
\begin{align}
& {\cal G}(P^{(t+1)}) -{\cal G}(P^{0}) 
\stackrel{(a)}{\le} 
J_\gamma(\Gamma^{(e)}_{{\cal M}_a}[{\cal F}_3[P^{(t)}]],P^{(t)})
-{\cal G}(P^{0}) \nonumber \\
\stackrel{(b)}{=} &D(P^{0}\| P^{(t)})- D(P^{0}\| P^{(t+1)})  -
\frac{1}{\gamma}
D_\Psi(P^{0}\|P^{(t)})
\nonumber \\
\stackrel{(c)}{\le}  &D(P^{0}\| P^{(t)})- D(P^{0}\| P^{(t+1)})  -
\frac{\beta}{\gamma}D(P^{0}\| P^{(t)}) \nonumber \\
\stackrel{(d)}{\le} 
 & (1-\frac{\beta}{\gamma}) D(P^{0} \| P^{(t)})
\stackrel{(e)}{\le} (1-\frac{\beta}{\gamma})^t D(P^{0} \| P^{(1)}),
\end{align}
where each step is derived as follows.
Step $(a)$ follows from \eqref{PYF}.
Step $(b)$ follows from \eqref{MOA}.
Step $(c)$ follows from \eqref{CAU} in the condition (A3) and $P^{(t)}\in  U(P^{0},\delta)$.
Step $(d)$ follows from \eqref{MOA2}.
Step $(e)$ follows from \eqref{MOA3}.
Hence, we obtain \eqref{CAU2}.
Therefore, we have shown item (ii) under the conditions (A1) and (A3)
when item (i) does not hold.

When (A0) holds in addition to (A1) and (A3), 
as shown in Step 1 of the proof of Theorem \ref{TH1},
the relation $\{P^{(t)}\}_{t=1}^{t_0+1} \subset U(P^{0},\delta)$ holds.
Hence, item (ii) holds.

\section{Proof of Theorem \ref{TH8}}\Label{S3-4}
In this proof, we choose $\bar{P}^{(1)}$ to be ${P}^{(1)}$.

\noindent{\bf Step 1:}\quad
This step aims to show the inequality \eqref{XP8}.
We denote the maximizer in \eqref{AMG} by $\theta'$.
The condition \eqref{AMG} implies that 
\begin{align}
\phi[\bar{P}^{(t)}](\theta)- \sum_{j=1}^k \theta^j a_j
\le 
\phi[\bar{P}^{(t)}](\theta')- \sum_{j=1}^k {\theta'}^j a_j
+\epsilon_1 
\Label{NMA}.
\end{align}
The divergence in the exponential family $\{Q_\theta \}$
can be considered as the Bregmann divergence of the potential function
$\phi[\bar{P}^{(t)}](\theta)$.
For example, for this fact, see \cite[Section III-A]{Bregman-em}.
Hence, we have
\begin{align}
 D(\Gamma^{(e)}_{{\cal M}_a}[{\cal F}_3[\bar{P}^{(t)}]]\| \bar{P}^{(t+1)} )
=&
\phi[\bar{P}^{(t)}](\theta)- \sum_{j=1}^k \theta^j a_j
-\Big(\phi[\bar{P}^{(t)}](\theta')- \sum_{j=1}^k {\theta'}^j a_j\Big)
\le \epsilon_1\Label{LFS4}.
\end{align}

\noindent{\bf Step 2:}\quad
This step aims to show Eq. \eqref{XZWN}
when the following inequality
\begin{align}
{\cal G}( {P}^{(t_2)})
 -\gamma D(P^{(t_2)} \| \bar{P}^{(t_2)})
{\le}  \frac{\gamma}{t_1-1} D^F( \theta_{*} \| \theta_{(1)})+  \epsilon_1 
+{\cal G}(P^{0})\Label{CO8T}
\end{align}
holds.
Eq. \eqref{XZWN} is shown as follows;
\begin{align}
&{\cal G}( {P}_f^{(t_1)})-{\cal G}(P^{0}) 
\stackrel{(a)}{=} 
{\cal G}( {P}^{(t_2)})-{\cal G}(P^{0}) 
\stackrel{(b)}{\le} 
\frac{\gamma}{t_1-1} D(P^{0}\| \bar{P}^{(1)})
+\gamma D(P^{(t_2)} \| \bar{P}^{(t_2)})
+\epsilon_1 \nonumber \\
\stackrel{(b)}{\le} & 
\frac{\gamma}{t_1-1} D(P^{0}\| \bar{P}^{(1)})
+\gamma \epsilon_2+\epsilon_1, 
\Label{CO8A}
\end{align}
where
Steps $(a)$, $(b)$, and $(c)$ follow from the definition of ${P}_f^{(t_1)}$, \eqref{CO8T},
and \eqref{NXP}, respectively.
Therefore, the remaining task is the proof of \eqref{CO8T}.

\noindent{\bf Step 3:}\quad
We choose $t_4 \in [1, t_1-1]$ as the minimum integer $t \in [1, t_1-1]$
to satisfy the following inequality
\begin{align}
\frac{1}{\gamma}J_\gamma(\Gamma^{(e)}_{{\cal M}_a}[{\cal F}_3[\bar{P}^{(t)}]],\bar{P}^{(t)})
\le \frac{1}{\gamma}{\cal G}(P^{0})+\epsilon_1.\Label{MKX}
\end{align}
If no integer $t \in [1, t_1-1]$ satisfies \eqref{MKX}, we set $t_4$ to be $t_1$.
This step aims to show the following two facts for $t =1, \ldots, t_4-1$.
(i) $D(P^{0}\| \bar{P}^{(t+1)}) \le \delta$. (ii) The inequality
\begin{align}
D(P^{0}\| \bar{P}^{(t)})- D(P^{0}\| \bar{P}^{(t+1)})  
{\ge} 
\frac{1}{\gamma}J_\gamma(\Gamma^{(e)}_{{\cal M}_a}[{\cal F}_3[\bar{P}^{(t)}]],\bar{P}^{(t)})
-\frac{1}{\gamma}{\cal G}(P^{0})-\epsilon_1\Label{XMZA}
\end{align}
holds.
The above two items are shown by induction for $t$ as follows.
It is sufficient to show the case when $t_4 \ge 2$.

We show items (i) and (ii) for $t=1$ as follows.
The application of Lemma \ref{LLX} to the case with $Q=\bar{P}^{(1)}$ and $Q'
=
\bar{P}^{(2)}$
yields
\begin{align}
&D(P^{0}\| \bar{P}^{(1)})- D(P^{0}\| \bar{P}^{(2)})  \nonumber \\
=&
\frac{1}{\gamma}J_\gamma(\Gamma^{(e)}_{{\cal M}_a}[{\cal F}_3[\bar{P}^{(1)}]],\bar{P}^{(1)})
-\frac{1}{\gamma}{\cal G}(P^{0})
+\frac{1}{\gamma}
D_\Psi(P^{0}\|P^{(1)})
-D(\Gamma^{(e)}_{{\cal M}_a}[{\cal F}_3[\bar{P}^{(1)}]]\| \bar{P}^{(2)} ) 
\nonumber \\
\stackrel{(a)}{\ge} &
\frac{1}{\gamma}J_\gamma(\Gamma^{(e)}_{{\cal M}_a}[{\cal F}_3[\bar{P}^{(1)}]],\bar{P}^{(1)})
-\frac{1}{\gamma}{\cal G}(P^{0})-\epsilon_1
\stackrel{(b)}{\ge} 0,
\end{align}
where $(a)$ follows from (A2+) and \eqref{XP8}
because the relation $D(P^{0}\| \bar{P}^{(1)}) \le \delta$ follows from the assumption of this theorem.
$(b)$ follows from the fact that $t=1$ does not satisfy the condition \eqref{MKX}.
Hence, $ D(P^{0}\| \bar{P}^{(2)}) \le D(P^{0}\| \bar{P}^{(1)}) \le \delta$.

Assume that items (i) and  (ii) hold with $t=t'-1$.
Then, the application of Lemma \ref{LLX} to the case with $Q=\bar{P}^{(t)}$ and 
$Q'
= \bar{P}^{(t+1)}$
yields
\begin{align}
&D(P^{0}\| \bar{P}^{(t')})- D(P^{0}\| \bar{P}^{(t'+1)})  \nonumber \\
=&
\frac{1}{\gamma}J_\gamma(\Gamma^{(e)}_{{\cal M}_a}[{\cal F}_3[\bar{P}^{(t')}]],\bar{P}^{(t')})
-\frac{1}{\gamma}{\cal G}(P^{0})
+\frac{1}{\gamma}
D_\Psi(P^{0}\|\bar{P}^{(t)})
-D(\Gamma^{(e)}_{{\cal M}_a}[{\cal F}_3[\bar{P}^{(t')}]]\| \bar{P}^{(t'+1)} ) 
\nonumber \\
\stackrel{(a)}{\ge} &
\frac{1}{\gamma}J_\gamma(\Gamma^{(e)}_{{\cal M}_a}[{\cal F}_3[\bar{P}^{(t')}]],\bar{P}^{(t')})
-\frac{1}{\gamma}{\cal G}(P^{0})-\epsilon_1\stackrel{(b)}{\ge} 0,
\end{align}
where $(a)$ follows from (A2+) and \eqref{XP8}
because the relation $D(P^{0}\| \bar{P}^{(t')}) \le \delta$ follows from the assumption of induction.
$(b)$ follows from the fact that $t=t'$ does not satisfy the condition \eqref{MKX}.
Hence, $ D(P^{0}\| \bar{P}^{(t'+1)}) \le D(P^{0}\| \bar{P}^{(t')}) \le \delta$.

\noindent{\bf Step 4:}\quad
This step aims to show the inequality \eqref{CO8T}
when $t_4 \le t_1-1$, i.e., there exists an integer $t \in [1,t_1-1]$ to satisfy \eqref{MKX}.

Pythagorean theorem guarantees 
\begin{align}
& D( {P}^{(t_4+1)} \| \Gamma^{(e)}_{{\cal M}_a}[{\cal F}_3[\bar{P}^{(t_4)}]] ) 
\nonumber \\
\le &
D( {P}^{(t_4+1)} \| \Gamma^{(e)}_{{\cal M}_a}[{\cal F}_3[\bar{P}^{(t_4)}]] ) 
+ D(  \Gamma^{(e)}_{{\cal M}_a}[{\cal F}_3[\bar{P}^{(t_4)}]]  \| \bar{P}^{(t_4+1)}) \nonumber \\
=& D(P^{(t_4+1)} \| \bar{P}^{(t_4+1)}) \le \epsilon_2
\Label{ZPS}.
\end{align}
Then, we have
\begin{align}
&
{\cal G}( {P}^{(t_2)}) 
-\gamma D(P^{(t_2)} \| \bar{P}^{(t_2)})
\stackrel{(a)}{\le} 
{\cal G}( {P}^{(t_4+1)}) 
-\gamma  D(P^{(t_4+1)} \| \bar{P}^{(t_4+1)})
\nonumber \\
\stackrel{(b)}{\le} &
J_\gamma({P}^{(t_4+1)},\bar{P}^{(t_4)}) -\gamma  D(P^{(t_4+1)} \| \bar{P}^{(t_4+1)})
\nonumber \\
\stackrel{(c)}{=} &
J_\gamma(\Gamma^{(e)}_{{\cal M}_a}[{\cal F}_3[\bar{P}^{(t_4)}]],\bar{P}^{(t_4)})
+\gamma D( {P}^{(t_4+1)} \| \Gamma^{(e)}_{{\cal M}_a}[{\cal F}_3[\bar{P}^{(t_4)}]] )
-\gamma D(P^{(t_4+1)} \| \bar{P}^{(t_4+1)})
\nonumber \\
\if0
\le &
J_\gamma(\Gamma^{(e)}_{{\cal M}_a}[{\cal F}_3[\bar{P}^{(t_4)}]],\bar{P}^{(t_4)})
+\gamma D( {P}^{(t_4+1)} \| \Gamma^{(e)}_{{\cal M}_a}[{\cal F}_3[\bar{P}^{(t_4)}]] ) \nonumber \\
&+\gamma D(  \Gamma^{(e)}_{{\cal M}_a}[{\cal F}_3[\bar{P}^{(t_4)}]]  \| \bar{P}^{(t_4+1)})
-\gamma D(P^{(t_4+1)} \| \bar{P}^{(t_4+1)})
\nonumber \\
\fi
\stackrel{(d)}{\le} & 
J_\gamma(\Gamma^{(e)}_{{\cal M}_a}[{\cal F}_3[\bar{P}^{(t_4)}]],\bar{P}^{(t_4)}) \Label{XMO},
\end{align}
where each step is derived as follows.
Step $(a)$ follows from the relation
$t_2=$\par\noindent$ \argmin_{t=2, \ldots, t_1} 
D^F(\theta^{(t)} \| \theta_{(t-1)})-D^F(\theta^{(t)} \| \bar\theta^{(t)})$.
Step $(b)$ follows from Lemma \ref{L2} and the condition (A1+)
because 
\eqref{ZPS} holds, and
the relation $D(P^{0}\| \bar{P}^{(t_4)}) \le \delta$ follows from
item (i) with $t=t_4-1$ shown in Step 3.
Step $(c)$ follows from \eqref{XMY2}.
Step $(d)$ follows from the equation \eqref{ZPS}.

Combining \eqref{XMO} and \eqref{MKX}, we have
\begin{align}
{\cal G}( {P}^{(t_2)})
 -\gamma D(P^{(t_2)} \| \bar{P}^{(t_2)})
\le \epsilon_1 
+{\cal G}(P^{0})\Label{CO8TV},
\end{align}
which implies \eqref{CO8T}.

\noindent{\bf Step 5:}\quad
This step aims to show 
\begin{align}
 J_\gamma(\Gamma^{(e)}_{{\cal M}_a}[{\cal F}_3[\bar{P}^{(t_3)}]],\bar{P}^{(t_3)})
-{\cal G}(P^{0})
-\epsilon_1 
\le \frac{\gamma}{t_1-1} D(P^{0}\| \bar{P}^{(1)})
 \Label{CO4}
\end{align}
under the choice of $t_3:= 
\argmin_{1\le t \le t_1-1}
J_\gamma(\Gamma^{(e)}_{{\cal M}_a}[{\cal F}_3[\bar{P}^{(t)}]],\bar{P}^{(t)})$
when $t_4 = t_1$, i.e., there exists no integer $t \in [1,t_1-1]$ to satisfy \eqref{MKX}.

Using \eqref{XMZA},
we have
\begin{align}
\frac{1}{\gamma}J_\gamma(\Gamma^{(e)}_{{\cal M}_a}[{\cal F}_3[\bar{P}^{(t_3)}]],\bar{P}^{(t_3)})
 -\frac{1}{\gamma}{\cal G}(P^{0})
-\epsilon_1 
\le & D(P^{0}\| \bar{P}^{(t)})- D(P^{0}\| \bar{P}^{(t+1)}) \Label{CO1T}
\end{align}
for $t \le t_1-1$.
Taking the sum for \eqref{CO1T}, we have
\begin{align}
&\frac{1}{\gamma}J_\gamma(\Gamma^{(e)}_{{\cal M}_a}[{\cal F}_3[\bar{P}^{(t_3)}]],\bar{P}^{(t_3)})
 -\frac{1}{\gamma}{\cal G}(P^{0})
-\epsilon_1 
\nonumber \\
\le & \frac{1}{t_1-1}\sum_{t=1}^{t=t_1-1}
D(P^{0}\| \bar{P}^{(t)})- D(P^{0}\| \bar{P}^{(t+1)}) \nonumber \\
=& \frac{1}{t_1-1} (
D(P^{0}\| \bar{P}^{(1)})- D(P^{0}\| \bar{P}^{(t_1)}))
\le \frac{1}{t_1-1} D(P^{0}\| \bar{P}^{(1)}).
\end{align}
Therefore, we obtain \eqref{CO4}.

\noindent{\bf Step 6:}\quad
This step aims to show the inequality
\eqref{CO8T}
when $t_4 = t_1$, i.e., there exists no integer $t \in [1,t_1-1]$ to satisfy \eqref{MKX}.
We obtain the following inequality
\begin{align}
{\cal G}( {P}^{(t_2)}) 
-\gamma D(P^{(t_2)} \| \bar{P}^{(t_2)})
\le 
J_\gamma(\Gamma^{(e)}_{{\cal M}_a}[{\cal F}_3[\bar{P}^{(t_3)}]],\bar{P}^{(t_3)}) \Label{XMOW}
\end{align}
in the same way as \eqref{XMO} in Step 4 by changing $t_4$ by $t_3$.
Combining \eqref{XMOW} and \eqref{CO4}, we obtain \eqref{CO8T}.

\end{document}